\newtheorem{theorem}{Theorem}
\newtheorem{lemma}[theorem]{Lemma}
\newtheorem{definition}{Definition}
\newtheorem{setup}{Setup}
\newtheorem{assumption}{Assumption}
\definecolor{DarkGreen}{rgb}{0.1,0.5,0.1}
\definecolor{DarkRed}{rgb}{0.5,0.1,0.1}
\definecolor{DarkBlue}{rgb}{0.1,0.1,0.5}
\definecolor{Gray}{rgb}{0.2,0.2,0.2}
\DeclareMathOperator*{\argmax}{arg\,max}
\newcommand{\Indicator}{1}
\newcommand{\vectorfont}[1]{\mathbf{#1}}
\newcommand{\DPrior}{\mathcal{D}^{\text{Prior}}}
\newcommand{\DNoise}{\mathcal{D}^{\text{Noise}}}
\newcommand{\DPosterior}{\mathcal{D}^{\text{Post}}}
\newcommand{\IS}{\mathcal{I}}
\newcommand{\ISShared}{\mathcal{I}_{\textrm{shared}}}
\newcommand{\AAll}{\mathcal{A}_{\text{all}}}
\newcommand{\AAllContinuous}{\mathcal{A}_{\text{all}}^{\text{cont}}}
\newcommand{\U}{\mathcal{U}}
\newcommand{\USeparate}{\mathcal{U}^{\text{separate}}}
\newcommand{\UShared}{\mathcal{U}^{\text{shared}}}
\definecolor{mydarkblue}{rgb}{0,0.08,0.45}
\title{Competition, Alignment, and Equilibria in Digital Marketplaces}
\newcommand{\printfnsymbol}[1]{%
  \textsuperscript{\@fnsymbol{#1}}%
}
\author{Meena Jagadeesan, Michael I. Jordan, and Nika Haghtalab}
\affil{University of California, Berkeley}
\date{}
\begin{document}

\maketitle

\begin{abstract}
Competition between traditional platforms is known to improve user utility by aligning the platform's actions with user preferences. But to what extent is alignment exhibited in data-driven marketplaces? To study this question from a theoretical perspective, we introduce a duopoly market where platform actions are bandit algorithms and the two platforms compete for user participation. A salient feature of this market is that the quality of recommendations depends on both the bandit algorithm and the amount of data provided by interactions from users. This interdependency between the algorithm performance and the actions of users complicates the structure of market equilibria and their quality in terms of user utility. Our main finding is that competition in this market does not perfectly align market outcomes with user utility. Interestingly, market outcomes exhibit misalignment not only when the platforms have separate data repositories, but also when the platforms have a shared data repository. Nonetheless, the data sharing assumptions impact what mechanism drives misalignment and also affect the specific form of misalignment (e.g. the quality of the best-case and worst-case market outcomes). More broadly, our work illustrates that competition in digital marketplaces has subtle consequences for user utility that merit further investigation. 
\end{abstract}

\section{Introduction}
Recommendation systems are the backbone of numerous digital platforms---from web search engines to video sharing websites to music streaming services. To produce high-quality recommendations, these platforms rely on data which is obtained through interactions with users. This fundamentally links the quality of a platform’s services to how well the platform can attract users.

What a platform must do to attract users depends on the amount of competition in the marketplace. If the marketplace has a single platform---such as Google prior to Bing or Pandora prior to Spotify---then the platform can accumulate users by providing any reasonably acceptable quality of service given the lack of alternatives. This gives the platform great flexibility in its choice of recommendation algorithm. In contrast, the presence of competing platforms makes user participation harder to achieve and intuitively places greater constraints on the recommendation algorithms. This raises the questions: \textit{how does competition impact the recommendation algorithms chosen by digital platforms? How does competition affect the quality of service for users?}

Conventional wisdom tells us that competition benefits users. In particular, users vote with their feet by choosing the platform on which they participate. The fact that users have this power forces the platforms to fully cater to user choices and thus improves user utility. This phenomenon has been formalized in classical markets where firms produce homogenous products \citep{bertrand}, where competition has been established to perfectly align market outcomes with user utility. Since user wellbeing is considered central to the healthiness of a market, perfect competition is traditionally regarded as the “gold standard” for a healthy marketplace: this conceptual principle underlies measures of market power \citep{lerner} and antitrust policy \citep{dukelaw}.

In contrast, competition has an ambiguous relationship with user wellbeing in digital marketplaces, where digital platforms are data-driven and compete via recommendation algorithms that rely on data from user interactions. Informally speaking, these marketplaces exhibit an interdependency between user utility, the platforms' choices of recommendation algorithms, and the collective choices of other users. In particular, the size of a platform's user base impacts how much data the platform has and thus the quality of its service; as a result, an individual user's utility level depends on the number of users that the platform has attracted thus far. Having a large user base enables a platform to have an edge over competitors without fully catering to users, which casts doubt on whether classical alignment insights apply to digital marketplaces. 

The ambiguous role of competition in digital marketplaces---which falls outside the scope of our classical understanding of competition power---has gained center stage in recent policymaking discourse. Indeed, several interdisciplinary policy reports \citep{stiger19, cremer2019competition} have been dedicated to highlighting ways in which the structure of digital marketplaces fundamentally differs from that of classical markets. For example, these reports suggest that data accumulation can encourage market tipping, which leaves users particularly vulnerable to harm (as we discuss in more detail at the end of Section \ref{subsec:contributions}). Yet, no theoretical foundation has emerged to formally examine the market structure of digital marketplaces and assess potential interventions. 


\subsection{Our contributions}\label{subsec:contributions}
Our work takes a step towards building a theoretical foundation for studying competition in digital marketplaces. We present a framework for studying platforms that compete on the basis of learning algorithms, focusing on alignment with user utility at equilibrium.  We consider a stylized duopoly model based on a multi-armed bandit problem where user utility depends on the incurred rewards. We show that \textit{competition may no longer perfectly align market outcomes with user utility}. Nonetheless, we find that market outcomes exhibit \textit{a weaker form of alignment}: the user utility is at least as large as the optimal utility in a population with only one user. Interestingly, there can be multiple equilibria, and the gap between the best equilibria and the worst equilibria can be substantial.

\paragraph{Model.} 
We consider a market with two platforms and a population of users. Each platform selects a bandit algorithm from a class $\mathcal{A}$. After the platforms commit to algorithms, each user decides which platform they wish to participate on. Each user's utility is the (potentially discounted) cumulative reward that they receive from the bandit algorithm of the platform that they chose. Users arrive at a Nash equilibrium.\footnote{In Section \ref{sec:model}, we will discuss subtleties that arise from having multiple Nash equilibria.} Each platform's utility is the number of users who participate on that platform, and the platforms arrive at a Nash equilibrium. The platforms either maintain \textit{separate data repositories} about the rewards of their own users, or the platforms maintain a \textit{shared data repository} about the rewards of all users.

\paragraph{Alignment results.} 
To formally consider alignment, we introduce a metric---that we call the \textit{user quality level}---that captures the utility that a user would receive when a given pair of competing bandit algorithms are implemented and user choices form an equilibrium. Table \ref{table:main} summarizes the alignment results in the case of a single user and multiple users. A key quantity that appears in the alignment results is $R_{A'}(n)$, which denotes the expected utility that a user receives from the algorithm $A'$ when $n$ users all participate in the same algorithm.

For the case of a single user, an \textit{idealized} form of alignment holds: the user quality level at any equilibrium is the optimal utility $\max_{A'} R_{A'}(1)$ that a user can achieve within the class of algorithms $\mathcal{A}$. Idealized alignment holds regardless of the informational assumptions on the platform. 

The nature of alignment fundamentally changes when there are multiple users. At a high level, we show that idealized alignment breaks down since the user quality level is no longer guaranteed to be the global optimum, $\max_{A'} R_{A'}(N)$, that cooperative users can achieve. Nonetheless, a weaker form of alignment holds: the user quality level nonetheless never falls below the single-user optimum $\max_{A'} R_{A'}(1)$. Thus, the presence of other users cannot make a user worse off than if they were the only participant, but users may not be able to fully benefit from the data provided by others. 

More formally, consider the setting where the platforms have separate data repositories. We show that there can be many qualitatively different Nash equilibria for the platforms. The user quality level across all equilibria actually spans the full set $[\max_{A'} R_{A'}(1), \max_{A'} R_{A'}(N)]$; i.e., any user quality level is realizable in some Nash equilibrium of the platforms and its associated Nash equilibrium of the users (Theorem \ref{thm:realizable}). Moreover,  the user quality level at any equilibrium is contained in the set $[\max_{A'} R_{A'}(1), \max_{A'} R_{A'}(N)]$ (Theorem \ref{thm:utilitysepinfo}). When the number of users $N$ is large, the gap between $\max_{A'} R_{A'}(1)$ and $\max_{A'} R_{A'}(N)$ can be significant since the latter is given access to $N$ times as much data at each time step than the former. The fact that the single-user optimum $\max_{A'} R_{A'}(1)$ is realizable means that the market outcome might only exhibit a weak form of alignment. The intuition behind this result is that the performance of an algorithm is controlled not only by its efficiency in transforming information to action, but also by the level of data it has gained through its user base. Since platforms have separate data repositories, a platform can thus make up for a suboptimal algorithm by gaining a significant user base. On the other hand, the global optimal user quality level $R_{A'}(N)$ is nonetheless realizable---this suggests that equilibrium selection could be used to determine when bad equilibria arise and to nudge the marketplace towards a good equilibrium.

\textit{What if the platforms were to share data?} At first glance, it might appear that with data sharing, a platform can no longer make up for a suboptimal algorithm with data, and the idealized form of alignment would be recovered. However, we construct two-armed bandit problem instances where every symmetric equilibrium for the platforms has user quality level strictly below the global optimal $\max_{A'} R_{A'}(N)$ (Theorems \ref{thm:BHundiscounted}-\ref{thm:BH}). The mechanism for this suboptimality is that the global optimal solution requires ``too much'' exploration. If other users engage in their ``fair share'' of exploration, an individual user would prefer to explore less and free-ride off of the data obtained by other users. The platform is thus forced to explore less, which drives down the user quality level. To formalize this, we establish a connection to strategic experimentation \citep{BH98}. We further show that although all of the user quality levels in $[\max_{A'} R_{A'}(1), \max_{A'} R_{A'}(N)]$ may not be realizable, the user quality level at any symmetric equilibria is still guaranteed to be within this set (Theorem \ref{thm:utilitysharedinfo}).

\paragraph{Connection to policy reports.}  Our work provides a mathematical explanation of phenomena documented in recent policy reports \citep{stiger19, cremer2019competition}. The first phenomena that we consider is \textit{market dominance from data accumulation}. The accumulation of data  has been suggested to result in winner-takes-all-markets where a single player dominates and where market entry is challenging \citep{stiger19}. The data advantage of the dominant platform can lead to lower quality services and lower user utility. Theorems \ref{thm:realizable}-\ref{thm:utilitysepinfo} formalize this mechanism. We show that once a platform has gained the full user base, market entry is impossible and the platform only needs to achieve weak alignment with user utility to retain its user base (see discussion in Section \ref{sec:discussion}). The second phenomena that we consider is the \textit{impact of shared data access.}  While the separate data setting captures much of the status quo of proprietary data repositories in digital marketplaces, sharing data access has been proposed as a solution to market dominance \citep{cremer2019competition}. Will shared data access deliver on its promises? Theorems \ref{thm:BHundiscounted}-\ref{thm:BH} highlight that sharing data does not solve the alignment issues, and uncovers free-riding as a mechanism for misalignment.

\begin{table}[]
    \centering
    \begin{tabular}{ccc}\toprule
     &  \textbf{Single user}  & \textbf{Multiple users}   \\ 
         \cmidrule(lr){2-2}\cmidrule(lr){3-3}\\[-3mm]
 \textbf{Separate data repositories}   & $\max_{A'} R_{A'}(1)$   & $[\max_{A'} R_{A'}(1), \max_{A'} R_{A'}(N)]$   \\ 
           & & \\ [-1mm]
  \multirow{2}{*}{\textbf{Shared data repository}}   &  \multirow{2}{*}{$\max_{A'} R_{A'}(1)$}  & subset of $[\max_{A'} R_{A'}(1), \max_{A'} R_{A'}(N)]$ \\ 
  & & (\textit{strict} subset in safe-risky arm problem)\\
    \bottomrule\\[-2mm]
    \end{tabular}
    \caption{User quality level of the Nash equilibrium for the platforms. A marketplace with a single user exhibits \textit{idealized alignment}, where the user quality level is maximized. A marketplace with multiple users can have equilibria with a vast range of user quality levels---although \textit{weak alignment} always holds----and there are subtle differences between the separate and shared data settings.}
    \label{table:main}
\end{table}

\subsection{Related work}

We discuss the relation between our work and research on \textit{competing platforms}, \textit{incentivizing exploration}, and \textit{strategic experimentation}.

\paragraph{Competing platforms.} \citet{AMSW20} examine the interplay between competition and exploration in bandit problems in a duopoly economy with fully myopic users. They focus on platform regret, showing that platforms must both choose a greedy algorithm at equilibrium and thus illustrating that competition is at odds with regret minimization. In contrast, we take a user-centric perspective and investigate alignment with user utility. Interestingly, the findings in \citet{AMSW20} and our findings are not at odds: the result in \citet{AMSW20} can be viewed as alignment (since the optimal choice for a fully myopic user results in regret in the long run), and our results similarly recover idealized alignment in the special case when users are fully myopic. Going beyond the setup of \citet{AMSW20}, we investigate non-myopic users and allow multiple users to arrive at every round, and we show that alignment breaks down in this general setting. 

Outside of the bandits framework, another line of work has also studied the behavior of competing learners when users can choose between platforms. \citet{BT17, BT19} study equilibrium predictors chosen by competing \textit{offline} learners in a PAC learning setup. Other work has focused on the dynamics when multiple learners apply out-of-box algorithms, showing that specialization can emerge \citep{GZKZ21, DCRMF22} and examining the role of data purchase \citep{KGZ22}; however, these works do not consider which algorithms the learners are incentivized to choose to gain users. In contrast, we investigate \textit{equilibrium} bandit algorithms chosen by \textit{online} learners, each of whom aims to maximize the size of its user base. The interdependency between the platforms' choices of algorithms, the data available to the platforms, and the  users' decisions in our model drives our alignment insights.

Other aspects of competing platforms that have been studied include competition under exogeneous network effects \citep{R09, WW14}, experimentation in price competition \citep{BV2000}, dueling algorithms which compete for a single user \citep{IKLMPT11}, competition when firms select scalar innovation levels whose cost decreases with access to more data \citep{PS21}, and measures of a digital platform's power in a marketplace \citep{HJM22}.

\paragraph{Incentivizing exploration.} This line of work has examined how the availability of outside options impacts bandit algorithms. \citet{KMP13} show that Bayesian Incentive Compatibility (BIC) suffices to guarantee that users will stay on the platform. Follow-up work (e.g., \cite{MSS15, SS21}) examines what bandit algorithms are BIC. \citet{FKKK14} explore the use of monetary transfers. 

\paragraph{Strategic experimentation.} This line of work has investigated equilibria when a population of users each choose a bandit algorithm.  \citet{BH98, BH00, BHSimple} analyze the equilibria in a risky-safe arm bandit problem: we leverage their results in our analysis of equilibria in the shared data setting. Strategic experimentation (see \cite{HS17} for a survey) has investigated exponential bandit problems \citep{KRC15}, the impact of observing actions instead of payoffs \citep{RSV07}, and the impact of cooperation \citep{BP21}.

\section{Model}\label{sec:model}
We consider a duopoly market with two platforms performing a multi-armed bandit learning problem and a population of $N$ users, $u_1, \ldots, u_N$, who choose between platforms. Platforms commit to bandit algorithms, and then each user chooses a single platform to participate on for the learning task.

\subsection{Multi-armed bandit setting}\label{sec:bandit} 

Consider a Bayesian bandit setting where there are $k$ arms with priors $\DPrior_1, \ldots, \DPrior_k$. At the beginning of the game, the mean rewards of arms are drawn from the priors $r_1 \sim \DPrior_1, \ldots, r_k \sim \DPrior_k$. These mean rewards are unknown to both the users and the platforms but are shared across the two platforms. If the user's chosen platform recommends arm $i$, the user receives reward drawn from a noisy distribution $\DNoise(r_i)$ with mean $r_i$. 

Let $\mathcal{A}$ be a class of bandit algorithms that  map the information state given by the posterior distributions to an arm to be pulled. The \textit{information state} $\IS = [\DPosterior_1, \ldots, \DPosterior_k]$ is taken to be the set of posterior distributions for the mean rewards of each arm. We assume that each algorithm $A \in \mathcal{A}$ can be expressed as a function mapping the information state $\IS$ to a distribution over arms in $[k]$.\footnote{This assumption means that an algorithm's choice is independent of the time step conditioned on $\IS$. Classical bandit algorithms such as Thompson sampling \citep{T33}, finite-horizon UCB \citep{LR85}, and the infinite-time Gittins index \citep{G79} fit into this framework. This assumption is \textit{not} satisfied by the infinite time horizon UCB.} We let $A(\IS)$ denote this distribution over arms $[k]$.

\paragraph{Running example: risky-safe arm bandit problem.} To concretize our results, we consider the \textit{risky-safe arm bandit problem} as a running example. The noise distribution $\DNoise(r_i)$ is a Gaussian $N(r_i,\sigma^2)$. The first arm is a \textit{risky arm} whose prior distribution $\DPrior_1$ is over the set $\left\{l, h\right\}$, where $l$ corresponds to a ``low reward'' and $h$ corresponds to a ``high reward.'' The second arm is a \textit{safe arm} with known reward $s \in (l,h)$ (the prior $\DPrior_2$ is a point mass at $s$). In this case, the information state $\IS$ permits a one-dimensional representation given by the posterior probability $p(\IS) := \mathbb{P}_{X \sim \DPosterior_1} [X = h]$ that the risky arm is high reward. 

We construct a natural algorithm class as follows. For a measurable function $f:[0,1] \rightarrow [0,1]$, let $A_f$ be the associated algorithm defined so $A_f(\IS)$ is a distribution that is 1 with probability $f(p(\IS))$ and 2 with probability $1 - f(p(\IS))$. We define 
\[\AAll := \left\{A_f \mid f: [0,1] \rightarrow [0,1] \text{ is measurable} \right\}\]
to be the class of all randomized algorithms. This class contains Thompson sampling ($A_{f_{\textrm{TS}}}$ is given by $f_{\textrm{TS}}(p) = p$), the Greedy algorithm ($A_{f_{\textrm{Greedy}}}$ is given by $f_{\textrm{Greedy}}(p) = 1$ if $ph + (1-p)l \ge s$ and $f_{\textrm{Greedy}}(p) = 0$ otherwise), and mixtures of these algorithms with uniform exploration. We consider restrictions of the class $\AAll$ in some results.

\subsection{Interactions between platforms, users, and data}\label{sec:actions}
The interactions between the platform and users impact the data that the platform receives for its learning task. The platform action space $\mathcal{A}$ is a class of bandit algorithms that map an information state $\mathcal{I}$ to an arm to be pulled. The user action space is $\left\{1, 2\right\}$. For $1 \le i \le N$, we denote by $p^i \in \left\{1, 2 \right\}$ the action chosen by user $u_i$. 

\paragraph{Order of play.} The platforms commit to algorithms $A_1$ and $A_2$ respectively, and then users simultaneously choose their actions $p^1, \ldots, p^N$ prior to the beginning of the learning task. We emphasize that user $i$ participates on platform $p^i$ for the \textit{full duration} of the learning task. (In Appendix \ref{sec:users}, we discuss the assumption that users cannot switch platforms between time steps.)

\paragraph{Data sharing assumptions.} In the \textit{separate data repositories} setting, each platform has its own (proprietary) data repository for keeping track of the rewards incurred by its own users. Platforms 1 and 2 thus have separate information states given by $\IS_1 = [\DPosterior_{1, 1}, \ldots, \DPosterior_{1, k}]$ and $\IS_2 = [\DPosterior_{2, 1}, \ldots, \DPosterior_{2, k}]$, respectively. In the \textit{shared data repository} setting, the platforms share an information state $\ISShared = [\DPosterior_1, \ldots, \DPosterior_k]$, which is updated based on the rewards incurred by users of both platforms.\footnote{In web search, recommender systems can query each other, effectively building a shared information state.}

\paragraph{Learning task.} The learning task is determined by the choice of platform actions $A_1$ and $A_2$, user actions $p^1, \ldots, p^n$, and specifics of data sharing between platforms. At each time step: 
\begin{enumerate}
    \item Each user $u_i$ arrives at platform $p^i$. The platform $p^i$ recommends arm $a_i \sim A_i(\IS)$ to that user, where $\IS$ denotes the information state of the platform. (The randomness of arm selection is fully independent across users and time steps.) The user $u_i$ receives noisy reward $\DNoise(r_{a_i})$.
    \item After providing recommendations to all of its users, platform 1 observes the rewards incurred by users in $S_1 := \left\{i \in [N] \mid p^i = 1\right\}$. Platform 2 similarly observes the rewards incurred by users in $S_2 := \left\{i \in [N] \mid p^i = 2 \right\}$. Each platform then updates their information state $\IS$ with the corresponding posterior updates.  
    \item A platform may have access to external data that does not come from users. To capture this, we introduce background information into the model. Both platforms observe the same \textit{background information} of quality $\sigma_b \in (0, \infty]$. In particular, for each arm $i$, the platforms observe the same realization of a noisy reward $\DNoise(r_{i})$. When $\sigma_b = \infty$, we say that there is \textit{no background information} since the background information is uninformative. The corresponding posterior updates are then used to update the information state ($\mathcal{I}$ in the case of shared data; $\mathcal{I}_1$ and $\mathcal{I}_2$ in the case of separate data). 
\end{enumerate}
In other words, platforms receive information from users (and background information), 
and users receive rewards based on the recommendations of the platform that they have chosen.

\subsection{Utility functions and equilibrium concept}\label{sec:utility}
User utility is generated by rewards, while the platform utility is generated by \textit{user participation}. 

\paragraph{User utility function.} We follow the standard discounted formulation for bandit problems (e.g. \citep{GJ79, BH98}), where the utility incurred by a user is defined by the expected (discounted) cumulative reward received across time steps. The discount factor $\beta$ parameterizes the extent to which agents are myopic. Let $\U(p^i; \vectorfont{p}^{-i}, A_1, A_2)$ denote the utility of a user $u_i$ if they take action $p^i \in \left\{1,2\right\}$ when other users take actions $\vectorfont{p}^{-i} \in \left\{1,2\right\}^{N-1}$ and the platforms choose $A_1$ and $A_2$. For clarity, we make this explicit in the case of discrete time setup with horizon length $T \in [1, \infty]$. Let $a_i^t = a_i^t(A_1, A_2, \vectorfont{p})$ denote the arm recommended to user $u_i$ at time step $t$. The utility  is defined to be
\[\U(p^i; \vectorfont{p}^{-i}, A_1, A_2) := \mathbb{E}\left[\sum_{t=1}^T \beta^t r_{a_i^t}\right]\]
where the expectation is over randomness of the incurred rewards and the algorithms.
In the case of continuous time, the utility is 
\[\U(p^i; \vectorfont{p}^{-i}, A_1, A_2) := \mathbb{E}\left[\int e^{-\beta t} d\pi(t)\right] \]
where the $\beta \in [0, \infty)$ denotes the discount factor and $d\pi(t)$ denotes the payoff received by the user.\footnote{For discounted utility, it is often standard to introduce a multiplier of $\beta$ for normalization (see e.g. \citep{BH98}). The utility $\U(p^i; \vectorfont{p}^{-i}, A_1, A_2)$ could have equivalently be defined as $\mathbb{E}\left[\int \beta e^{-\beta t} d\pi(t)\right]$ without changing any of our results.} In both cases, observe that the utility function is \textit{symmetric} in user actions.

The utility function implicitly differs in the separate and shared data settings, since the information state evolves differently in these two settings. When we wish to make this distinction explicit, we denote the corresponding utility functions by $\USeparate$ and $\UShared$.

\paragraph{User equilibrium concept.} 
We assume that after the platforms commit to algorithms $A_1$ and $A_2$, the users end up at a pure strategy Nash equilibrium of the resulting game. More formally,  let $\vectorfont{p} \in \left\{1,2\right\}^N$ be a \textit{pure strategy Nash equilibrium for the users} if $p_i \in \argmax_{p \in \left\{0,1\right\}} \U(p; \vectorfont{p}^{-i}, A_1, A_2)$ for all $1 \le i \le N$. The existence of a pure strategy Nash equilibrium follows from the assumption that the game is symmetric and the action space has 2 elements \citep{C04}.

One subtlety is that there can be multiple equilibria in this general-sum game. For example, there are always at least 2 (pure strategy) equilibria when platforms play any $(A, A)$, i.e., commit to the same algorithm --- one equilibrium where all users choose the first platform, and another where all users choose the second platform). Interestingly, there can be multiple equilibria even when one platform chooses a ``worse'' algorithm than the other platform. 
We denote by $\mathcal{E}_{A_1, A_2}$ the set of pure strategy Nash equilibria when the platforms choose algorithms $A_1$ and $A_2$. We simplify the notation and use $\mathcal{E}$ when $A_1$ and $A_2$ are clear from the context. In Section \ref{sec:mixed}, we discuss our choice of solution concept, focusing on what the implications would have been of including mixed Nash equilibria in $\mathcal{E}$.

\paragraph{Platform utility and equilibrium concept.} The utility of the platform roughly corresponds to the number of users who participate on that platform. This captures that in markets for digital goods, where platform revenue is often derived from advertisement or subscription fees, the number of users serviced is a proxy for platform revenue. 

When formalizing the utility that a platform receives, the fact that there can be several user equilibria for a given choice of platform algorithms creates ambiguity. To resolve this, we consider the worst-case user equilibrium for the platform, and we define platform utility to be the \textit{minimum number of users that a platform would receive at any pure strategy equilibrium for the users.}
More formally, when platform 1 chooses algorithm $A_1$ and platform 2 chooses algorithm $A_2$, the utilities of platform 1 and platform 2 are given by: 
\begin{equation}
\label{eq:platformutility}
v_1(A_1; A_2) := \min_{\vectorfont{p} \in \mathcal{E}} \sum_{i=1}^N \Indicator[p^i = 1] \;\;\;\ \text{      and     }  \;\;\;\  v_2(A_2; A_1) = \min_{\vectorfont{p} \in \mathcal{E}} \sum_{i=1}^N \Indicator[p^i = 2].  
\end{equation}
The minimum over equilibria $\vectorfont{p} \in \mathcal{E}$ in \eqref{eq:platformutility} captures a worst-case perspective where a platform makes decisions based on the worst possible utility that they might receive by choosing a given algorithm.\footnote{Interestingly, if we were to define the platform utility in \eqref{eq:platformutility} to be a maximum over equilibria $\vectorfont{p} \in \mathcal{E}$, this would induce degenerate behavior: any symmetric solution $(A, A)$ would maximize platform utility and thus be an equilibrium. In contrast, formalizing the platform utility as the minimum over equilibria avoids this degeneracy.} 

The equilibrium concept for the platforms is a \textit{pure strategy Nash equilibrium}, and we often focus on \textit{symmetric} equilibria. We discuss the existence of such an equilibrium in Sections \ref{sec:separate}-\ref{sec:shared}. We note that at equilibrium, the utility for the platforms is typically $0$, aligning with classical economic intuition. In particular, platforms earning zero equilibrium utility in our model mirrors firms earning zero equilibrium profit in price competition \citep{bertrand}. However, there is an important distinction: platform utility \textit{ex-post} (after users choose between platforms) may no longer be $0$ and in fact may be as large as $N$, while firm profit in price competition remains $0$ ex-post.

\section{Formalizing the Alignment of a Market Outcome} 

The \textit{alignment} of an equilibrium outcome for the platforms is measured by the amount of user utility that it generates. In Section \ref{sec:uql} we introduce the \textit{user quality level} to formalize alignment. In Section \ref{sec:idealized}, we show an idealized form of alignment for $N = 1$ (Theorem \ref{thm:idealized}). In Section \ref{sec:benchmarks}, we turn to the case of multiple users and discuss benchmarks for the user quality level. In Section \ref{subsec:assumptions}, we describe mild assumptions on $\mathcal{A}$ that we use in our alignment results for multiple users.

\subsection{User quality level} \label{sec:uql}
Given a pair of platform algorithms $A_1 \in \mathcal{A}$ and $A_2 \in \mathcal{A}$, we introduce the \textit{user quality level} $Q(A_1, A_2)$ to measure the alignment between platform algorithms and user utility. Informally speaking, the user quality level $Q(A_1, A_2)$ captures the utility that a user would receive when the platforms choose algorithms $A_1$ and $A_2$ and when user choices form an equilibrium.

When formalizing the user quality level, the potential multiplicity of user equilibria creates ambiguity (like in the definition of platform utility in \eqref{eq:platformutility}), and different users potentially receiving different utilities creates further ambiguity. We again take a worst-case perspective and formalize the user quality level as the minimum over equilibria $\vectorfont{p} \in \mathcal{E}$ and over users $1 \le i \le N$. 
\begin{definition}[User quality level]
\label{def:userqualitylevel}
Given algorithms $A_1$ and $A_2$ chosen by the platforms, the user quality level is defined to be $Q(A_1, A_2) := \min_{\vectorfont{p} \in \mathcal{E}, 1 \le i \le N} \U(p_i; \vectorfont{p}^{-i}, A_1, A_2)$. 
\end{definition}
\noindent Interestingly, our insights about alignment would remain unchanged if we were to define the user quality level based on an arbitrary user equilibrium and user, rather than taking a minimum. More specifically, our alignment results (Theorems \ref{thm:realizable}, \ref{thm:utilitysepinfo}, \ref{thm:BHundiscounted}, \ref{thm:BH}, \ref{thm:utilitysharedinfo}) would still hold if $Q(A_1, A_2)$ were defined to be $\U(p_i; \vectorfont{p}^{-i}, A_1, A_2)$ for any arbitrarily chosen $\vectorfont{p} \in \mathcal{E}$ and $1 \le i \le N$.\footnote{For most of results (Theorems \ref{thm:realizable},  \ref{thm:BHundiscounted}, \ref{thm:BH}, \ref{thm:utilitysharedinfo}), the reason that the results remain unchanged is that at a symmetric solution $(A, A)$, the user utility turns out to be the same for all $\vectorfont{p} \in \mathcal{E}$ and $1 \le i \le N$ (this follows by definition for the shared data setting and follows from Lemma \ref{lemma:pure} for the separate data setting). 
The reason that Theorem \ref{thm:utilitysepinfo}, which considers asymmetric solutions, remains unchanged is that the lower bound holds for the worst-case $\vectorfont{p} \in \mathcal{E}$ and $1 \le i \le N$ (and thus for any $\vectorfont{p} \in \mathcal{E}$ and $1 \le i \le N$) and the proof of the upper bound applies more generally to any selection of $\vectorfont{p} \in \left\{1,2\right\}^N$ and $1 \le i \le N$.} This demonstrates that our alignment results are independent of the particularities of how the user quality level is formalized.

To simplify notation in our alignment results, we introduce the \textit{reward function} which captures how the utility that a given algorithm generates changes with the number of users who contribute to its data repository. For an algorithm $A \in \mathcal{A}$, let the reward function $R_A: [N] \rightarrow \mathbb{R}$ be defined by:
\[R_A(n) := \USeparate(1; \vectorfont{p}_{n-1}, A, A) ,\]
where $\vectorfont{p}_{n-1}$ corresponds to a vector with $n-1$ coordinates equal to one.

\subsection{Idealized alignment result: The case of a single user}\label{sec:idealized}

When there is a single user, the platform algorithms turn out to be perfectly aligned with user utilities at equilibrium. To formalize this, we consider the optimal utility that could be obtained by a user across any choice of actions by the platforms and users (not necessarily at equilibrium): that is, $ \max_{p \in \left\{1,2\right\}, A_1 \in \mathcal{A}, A_2 \in \mathcal{A}}\U(p; \emptyset, A_1, A_2)$. Using the setup of the single-user game, we can see that this is equal to  $\max_{A \in \mathcal{A}}\U(1; \emptyset, A, A) = \max_{A \in \mathcal{A}} R_A(1)$. We show that the user quality level always meets this benchmark (we defer the proof to Appendix \ref{appendix:proofidealized}). 
\begin{theorem}
\label{thm:idealized}
Suppose that $N = 1$, and consider either the separate data setting or the shared data setting. If $(A_1, A_2)$ is a pure strategy Nash equilibrium for the platforms, then the user quality level $Q(A_1, A_2)$ is equal to $\max_{A \in \mathcal{A}}R_A(1)$.
\end{theorem}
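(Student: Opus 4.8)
The plan is to reduce the whole statement to the single binary decision the user faces, and then run a one-shot deviation argument for the platforms.

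\emph{Step 1: identify the user's payoff and the user equilibria.} I would first observe that for $N=1$, in both the separate and the shared data setting, the utility a user obtains by choosing platform $j$ is exactly $R_{A_j}(1)$, \emph{independently} of the algorithm $A_{3-j}$ on the other platform. The reason is that the unchosen platform receives no user rewards and is never interacted with, so it is irrelevant to the user's payoff; and with a single user the separate and shared information states coincide, since each is updated only from that user's rewards and the common background information. Thus $\U(j;\emptyset,A_1,A_2)=\USeparate(1;\emptyset,A_j,A_j)=R_{A_j}(1)$. Consequently the user's best response is to pick a platform maximizing $R_{A_j}(1)$: if $R_{A_1}(1)\neq R_{A_2}(1)$ there is a unique user equilibrium (go to the strictly better platform), while if $R_{A_1}(1)=R_{A_2}(1)$ there are two user equilibria, both yielding the same utility. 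In every case $Q(A_1,A_2)=\max\!\big(R_{A_1}(1),R_{A_2}(1)\big)$, and in \eqref{eq:platformutility} one has $v_j=1$ iff platform $j$ is the strict maximizer and $v_j=0$ otherwise.

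\emph{Step 2: the platform equilibrium forces a top algorithm.} Next I would show that at any pure strategy Nash equilibrium $(A_1,A_2)$ for the platforms, $\max\!\big(R_{A_1}(1),R_{A_2}(1)\big)=\max_{A\in\mathcal A}R_A(1)$. The inequality ``$\le$'' is immediate. For ``$\ge$'', assume without loss of generality $R_{A_1}(1)\le R_{A_2}(1)$, so by Step 1 we have $v_1(A_1;A_2)=0$ (``user on platform $2$'' is always a user equilibrium here). Suppose, for contradiction, there were some $A'\in\mathcal A$ with $R_{A'}(1)>R_{A_2}(1)$. If platform $1$ deviated to $A'$, it would become the strict maximizer, so the unique user equilibrium would send the user to platform $1$, giving $v_1(A';A_2)=1>0$ — a profitable deviation, contradicting equilibrium. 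Hence no such $A'$ exists, i.e.\ $R_{A_2}(1)=\max_{A}R_A(1)$.

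Combining the two steps yields $Q(A_1,A_2)=\max\!\big(R_{A_1}(1),R_{A_2}(1)\big)=R_{A_2}(1)=\max_{A\in\mathcal A}R_A(1)$, which is the claim. The only real content is the deviation argument of Step~2; the one thing that needs care is the bookkeeping around ties and the worst-case-over-user-equilibria in the definition of platform utility — specifically, verifying that a platform running a weakly suboptimal algorithm truly gets utility $0$, so that switching to a top algorithm is \emph{strictly} profitable. I would also note in passing that this argument shows a platform equilibrium can exist only when $\max_A R_A(1)$ is attained, so no separate attainment hypothesis is needed.
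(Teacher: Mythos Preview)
Your proposal is correct and uses essentially the same one-shot deviation argument as the paper: the losing (or tied) platform has utility $0$, and if neither platform's algorithm attains $\max_{A}R_A(1)$ then deviating to a strictly better algorithm yields utility $1$, a contradiction. The paper organizes this as a case split on whether one of the $R_{A_i}(1)$ already attains the maximum (and along the way verifies that such pairs \emph{are} equilibria, which is more than the theorem asks), whereas you first compute $Q(A_1,A_2)=\max\bigl(R_{A_1}(1),R_{A_2}(1)\bigr)$ and then run the deviation; the underlying content is identical, and your observation that attainment of the maximum falls out for free is a nice bonus.
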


Theorem \ref{thm:idealized} shows that in a single-user market, two firms is sufficient to perfectly align firm actions with user utility---this stands in parallel to 
classical Bertrand competition in the pricing setting \citep{bertrand}. 

\paragraph{Proof sketch of Theorem \ref{thm:idealized}.}There are only 2 possible pure strategy equilibria: either the user chooses platform 1 and receives utility $R_{A_1}(1)$ or the user chooses platform 2 and receives utility $R_{A_2}(1)$. If one platform chooses a suboptimal algorithm for the user (i.e. an algorithm $A'$ where $R_{A'}(1) < \max_{A \in \mathcal{A}} R_A(1)$), then the other platform will receive the user (and thus achieve utility 1) if they choose a optimal algorithm $\argmax_{p \in \left\{1, 2\right\}} R_{A_p}(1)$. This means that $(A_1, A_2)$ is a pure strategy Nash equilibrium if and only if $A_1 \in \argmax_{A' \in \mathcal{A}} R_{A'}(1)$ or $A_2 \in \argmax_{A' \in \mathcal{A}} R_{A'}(1)$. The user thus receives utility $\max_{A \in \mathcal{A}} R_{A'}(1)$. We defer the full proof to Appendix \ref{appendix:proofidealized}.

\subsection{Benchmarks for user quality level} \label{sec:benchmarks}

In the case of multiple users, this idealized form of alignment turns out to break down, and formalizing alignment requires a more nuanced consideration of benchmarks. We define the \textit{single-user optimal utility} of $\mathcal{A}$ to be $\max_{A \in \mathcal{A}} R_A(1)$. This corresponds to maximal possible user utility that can be generated by a platform who only serves a single user and thus relies on this user for all of its data.  On the other hand, we define the \textit{global optimal utility} of $\mathcal{A}$ to be $\max_{A \in \mathcal{A}} R_A(N)$. This corresponds to the maximal possible user utility that can be generated by a platform when all of the users in the population are forced to participate on the same platform. The platform can thus maximally enrich its data repository in each time step.

\subsection{Assumptions on $\mathcal{A}$}\label{subsec:assumptions}

While our alignment results for a single user applied to arbitrary algorithm classes, we require mild assumptions on $\mathcal{A}$ in the case of multiple users to endow the equilibria with basic structure. 

\textit{Information monotonicity} requires that an algorithm $A$'s performance in terms of user utility does not worsen with additional posterior updates to the information state. Our first two instantations of information monotonicity---strict information monotonicity and information constantness---require that the user utility of $A$ grow monotonically in the number of other users participating in the algorithm. Our third instantation of information monotonicity---side information monotonicity---requires that the user utility of $A$ not decrease if other users also update the information state, regardless of what algorithm is used by the other users. We formalize these assumptions as follows: 
\begin{assumption}[Information monotonicity]
\label{assumption:IM} 
For any given discount factor $\beta$ and number of users $N$, an algorithm $A \in \mathcal{A}$ is \textbf{strictly information monotonic} if $R_A(n)$ is strictly increasing in $n$ for $1 \le n \le N$. An algorithm $A$ is \textbf{information constant} if $R_A(n)$ is constant in $n$ for $1 \le n \le N$. An algorithm $A$ is \textbf{side information monotonic} if for every measurable function $f$ mapping information states to distributions over $[k]$ and for every $1 \le n \le N - 1$, it holds that $\UShared(1; \mathbf{2}_{n}, A, f) \ge R_A(1)$ where $\mathbf{2}_{n} \in \left\{1,2\right\}^n$ has all coordinates equal to $2$.
\end{assumption}

While information monotonicity places assumptions on each algorithm in $\mathcal{A}$, our next assumption places a mild restriction on how the utilities generated by algorithms in $\mathcal{A}$ relate to each other. \textit{Utility richness} requires that the set of user utilities spanned by $\mathcal{A}$ is a sufficiently rich interval.
\begin{assumption}[Utility richness]
\label{assumption:richness}
A class of algorithms $\mathcal{A}$ is \textbf{utility rich} if the set of utilities $\left\{ R_A(N) \mid A \in \mathcal{A} \right\}$ is a contiguous set, the supremum of $\left\{ R_A(N) \mid A \in \mathcal{A} \right\}$ is achieved, and  there exists $A' \in \mathcal{A}$ such that $R_{A'}(N) \le \max_{A \in \mathcal{A}} R_A(1)$. 
\end{assumption}

\paragraph{Discussion of assumptions.} Utility richness holds (almost) without loss of generality, by taking the closure of an algorithmic class under the operation of mixing with the pure exploration algorithm (see Lemma \ref{lemma:UR}). On the other hand, not all algorithms are information monotone. Nevertheless, we show that information monotonicity is satisfied for several algorithms for the risky-safe arm setup, including any nondegenerate algorithm under undiscounted rewards (see Lemma \ref{lemma:undiscountedIM}) and Thompson sampling under discounted rewards (see Lemma \ref{lemma:TS}). These results are of independent interest, and more broadly, understanding information monotonicity is crucial for investigating the incentive properties of bandit algorithms: indeed prior work (e.g. \citet{AMSW20}) has explored variants of this assumption. We defer a detailed discussion of these assumptions to Section \ref{sec:assumptions}. 

\section{Separate data repositories}\label{sec:separate}

We investigate alignment when the platforms have separate data repositories. In Section \ref{sec:alignmentseparate}, we show that there can be many qualitatively different equilibria for the platforms and characterize the alignment of these equilibria. In Section \ref{sec:discussion}, we discuss factors that drive the level of misalignment in a marketplace. 

\subsection{Multitude of equilibria and the extent of alignment}\label{sec:alignmentseparate}

In contrast with the single user setting, the marketplace can exhibit multiple equilibria for the platforms. As a result, to investigate alignment, we investigate the \textit{range} of achievable user quality levels. Our main finding is that the equilibria in a given marketplace can exhibit a vast range of alignment properties. In particular, \textit{every} user quality level in between the single-user optimal utility $\max_{A' \in \mathcal{A}} R_{A'}(1)$ and the global optimal utility $\max_{A' \in \mathcal{A}} R_{A'}(N)$ can be realized by some equilibrium for the platforms. 
\begin{theorem}
\label{thm:realizable}
Suppose that each algorithm in $\mathcal{A}$ is either strictly information monotonic or information constant (Assumption \ref{assumption:IM}), and suppose that $\mathcal{A}$ is utility rich (Assumption \ref{assumption:richness}). For every $\alpha \in [\max_{A' \in \mathcal{A}} R_{A'}(1), \max_{A' \in \mathcal{A}} R_{A'}(N)]$, there exists a symmetric pure strategy Nash equilibrium $(A, A)$ in the separate data setting such that $Q(A, A) = \alpha$.
\end{theorem}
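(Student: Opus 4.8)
The plan is to construct, for each target value $\alpha$, a symmetric pair $(A,A)$ that is a Nash equilibrium for the platforms and whose user quality level equals $\alpha$. The key structural fact I want to exploit is that when both platforms play the same algorithm $A$, the natural "all users on one platform" configuration is a user equilibrium (each user, deciding alone, weighs $R_A(N)$ on the populated side against $R_A(1)$ on the empty side, and information monotonicity — strict or constant — makes staying weakly optimal). More importantly, for a symmetric profile $(A,A)$ the worst-case user equilibrium should be exactly this all-on-one-side configuration, so $Q(A,A) = R_A(N)$; I'd first nail down this claim (presumably via a lemma like the referenced Lemma~\ref{lemma:pure}, using that every user at a pure equilibrium sees the same algorithm on both sides and information monotonicity pins down the split-vs-together comparison). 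Given that, the problem reduces to: for each $\alpha$ in the interval, find an information-monotone or information-constant $A \in \mathcal{A}$ with $R_A(N) = \alpha$ that is also a \emph{platform} equilibrium.

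The existence of such an $A$ with $R_A(N)=\alpha$ is handed to us by utility richness (Assumption~\ref{assumption:richness}): the set $\{R_A(N) : A \in \mathcal{A}\}$ is contiguous, its supremum $\max_A R_A(N)$ is attained, and it contains a point at or below $\max_A R_A(1)$ — so the interval $[\max_{A'} R_{A'}(1),\ \max_{A'} R_{A'}(N)]$ lies inside the achievable range, and some $A_\alpha$ realizes each $\alpha$ there. So the remaining work is the \textbf{deviation argument}: show $(A_\alpha, A_\alpha)$ is a platform Nash equilibrium, i.e. neither platform can strictly increase its (worst-case) user count by switching to some $A'$. A platform currently gets $0$ users in the worst-case equilibrium (all users pile onto the opponent). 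To profitably deviate to $A'$, it would need \emph{every} user equilibrium against $(A', A_\alpha)$ to give it at least one user — in particular the configuration where all $N$ users stay with the opponent running $A_\alpha$ must fail to be an equilibrium. That requires a lone deviating user to strictly prefer the platform running $A'$ (with just themselves) over the platform running $A_\alpha$ (with the other $N-1$ users): i.e. $R_{A'}(1) > R_{A_\alpha}(N) = \alpha$.

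So the heart of the matter is ensuring no $A' \in \mathcal{A}$ satisfies $R_{A'}(1) > \alpha$ — equivalently, $\max_{A' \in \mathcal{A}} R_{A'}(1) \le \alpha$. Since $\alpha \ge \max_{A'} R_{A'}(1)$ by hypothesis, this holds, and the deviation is blocked: the all-on-opponent configuration remains a user equilibrium even after the deviation, so the deviating platform still nets $0$. I expect this last inequality-chase to be the conceptual crux — one has to be careful that "worst-case over user equilibria" is genuinely $0$ for the deviator, which means exhibiting the all-on-opponent profile as a bona fide user equilibrium against the \emph{asymmetric} pair $(A', A_\alpha)$; here I'd invoke side-information-type monotonicity or, more simply, the fact that a user already on the opponent's side sees $R_{A_\alpha}(N)$ vs. the outside option $R_{A'}(1) \le \alpha = R_{A_\alpha}(N)$, so nobody wants to leave. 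The one edge case to handle separately is $\alpha = \max_{A'} R_{A'}(1)$ when this equals $R_{A'}(1)$ for the deviator's best response — there the deviating user is indifferent, not strictly better off, so the all-on-opponent profile is still an equilibrium and the argument goes through with weak inequalities. Assembling these pieces — (i) $Q(A,A) = R_A(N)$ for symmetric monotone profiles, (ii) utility richness to realize each $\alpha$, (iii) the deviation-blocking inequality $\max R_{A'}(1) \le \alpha$ — yields the theorem.
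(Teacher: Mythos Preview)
Your proposal is correct and follows essentially the same route as the paper: use utility richness to pick $A^*$ with $R_{A^*}(N)=\alpha$, show $(A^*,A^*)$ is a platform equilibrium because any deviator $A'$ still leaves $[2,\ldots,2]$ a user equilibrium (since $R_{A'}(1)\le\max_{A'}R_{A'}(1)\le\alpha=R_{A^*}(N)$), and conclude $Q(A^*,A^*)=R_{A^*}(N)=\alpha$ via Lemma~\ref{lemma:pure}. Two minor cleanups: side information monotonicity is not needed anywhere here (your ``more simply'' argument is exactly what the paper uses), and Lemma~\ref{lemma:pure} as stated requires at least one algorithm to be \emph{strictly} information monotonic, so when $A^*$ is information constant you must handle $Q(A^*,A^*)=\alpha$ by the separate (trivial) observation that $R_{A^*}(n)$ is constant in $n$, hence every user in every pure equilibrium gets $\alpha$---the paper makes this case split explicit.
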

Nonetheless, there is a baseline (although somewhat weak) form of alignment achieved by all equilibria. In particular, every equilibrium for the platforms has user quality level at least the single-user optimum $\max_{A' \in \mathcal{A}} R_{A'}(1)$. 
\begin{theorem}
\label{thm:utilitysepinfo}
Suppose that each algorithm in $\mathcal{A}$ is either strictly information monotonic or information constant (see Assumption \ref{assumption:IM}). In the separate data setting, at any pure strategy Nash equilibrium $(A_1, A_2)$ for the platforms, the user quality level lies in the following interval:
\[Q(A_1, A_2) \in \left[\max_{A' \in \mathcal{A}} R_{A'}(1), \max_{A' \in \mathcal{A}} R_{A'}(N)\right].\] 
\end{theorem}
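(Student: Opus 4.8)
The plan is to prove the two containments separately, after first recording a structural observation on which everything else rests. Because $\U$ is symmetric in the user actions and the repositories are separate, whether a profile $\vectorfont{p}\in\{1,2\}^N$ is a user Nash equilibrium of $(A_1,A_2)$, and the utility of each user in it, depend only on the \emph{split} $(m_1,m_2)$ with $m_1+m_2=N$: a user on platform $j$ receives exactly $R_{A_j}(m_j)$. By Assumption \ref{assumption:IM} each $R_{A_j}$ is non-decreasing, and if $R_{A_j}(m)=R_{A_j}(m+1)$ for some $1\le m\le N-1$ then $A_j$ must be information constant (strict monotonicity would force a strict inequality). I would also note the easy fact that at least one corner split $(0,N)$, $(N,0)$ is always a user equilibrium: otherwise $R_{A_2}(N)<R_{A_1}(1)\le R_{A_1}(N)<R_{A_2}(1)\le R_{A_2}(N)$, a contradiction. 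In particular user equilibria exist, and at any platform profile $\min\{v_1(A_1;A_2),v_2(A_2;A_1)\}=0$.

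The upper bound $Q(A_1,A_2)\le\max_{A'\in\mathcal{A}}R_{A'}(N)$ is then immediate: take any user equilibrium and any user in it; that user sits on some platform $j$ with $1\le m_j\le N$, so by monotonicity its utility is $R_{A_j}(m_j)\le R_{A_j}(N)\le\max_{A'}R_{A'}(N)$, and $Q$ is a minimum over equilibria and users.

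For the lower bound I would argue by contradiction. Write $R^{*}:=\max_{A'\in\mathcal{A}}R_{A'}(1)$, and suppose some user equilibrium of $(A_1,A_2)$ has a user whose utility is below $R^{*}$; by the symmetry of the two platforms assume this user is on platform $1$, so $m_1\ge1$ and $R_{A_1}(m_1)<R^{*}$. Case one: $m_2=0$, so $(N,0)$ is a user equilibrium and hence $v_2(A_2;A_1)=0$; since $(A_1,A_2)$ is a platform equilibrium, platform $2$ cannot improve, so $v_2(A_2';A_1)=0$ for every $A_2'$, i.e.\ $(N,0)$ is a user equilibrium of $(A_1,A_2')$, which requires $R_{A_1}(N)\ge R_{A_2'}(1)$; a supremum over $A_2'$ gives $R_{A_1}(N)\ge R^{*}$, contradicting $R_{A_1}(N)=R_{A_1}(m_1)<R^{*}$. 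Case two: $m_2\ge1$. The two users' Nash conditions read $R_{A_1}(m_1)\ge R_{A_2}(m_2+1)$ and $R_{A_2}(m_2)\ge R_{A_1}(m_1+1)$; chaining with monotonicity collapses $R_{A_1}(m_1)\ge R_{A_2}(m_2+1)\ge R_{A_2}(m_2)\ge R_{A_1}(m_1+1)\ge R_{A_1}(m_1)$ to equalities, so $R_{A_1}(m_1)=R_{A_1}(m_1+1)$ and $R_{A_2}(m_2)=R_{A_2}(m_2+1)$ with interior indices, forcing both $A_1$ and $A_2$ to be information constant with common value $c:=R_{A_1}(1)=R_{A_2}(1)=R_{A_1}(m_1)<R^{*}$. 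Then every split is a user equilibrium of $(A_1,A_2)$, so $v_1(A_1;A_2)=0$; the platform-equilibrium condition for platform $1$ gives $v_1(A_1';A_2)=0$ for every $A_1'$, i.e.\ $(0,N)$ is a user equilibrium of $(A_1',A_2)$, which needs $R_{A_2}(N)\ge R_{A_1'}(1)$; a supremum over $A_1'$ gives $c=R_{A_2}(N)\ge R^{*}$, contradicting $c<R^{*}$. Both cases are contradictory, so $Q(A_1,A_2)\ge R^{*}$.

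I expect the $m_2\ge1$ subcase to be the main obstacle: one must rule out an ``interior'' user equilibrium in which a platform running a strictly suboptimal (but information-monotone) algorithm retains a positive, bounded user base whose members are worse off than $R^{*}$. The resolution has two ingredients that the proof must combine carefully: (i) the \emph{positive} data externality makes any interior user equilibrium flat, which is exactly where the strictly-monotone-or-constant dichotomy of Assumption \ref{assumption:IM} is invoked; and (ii) the platform-level Nash condition, namely that a platform whose worst-case user base is $0$ could not have been profitably displaced, which forces the competitor's full-population reward $R_{A_j}(N)$ to already dominate every single-user optimum. Everything else (the corner-equilibrium remark and the bookkeeping around empty platforms) is routine.
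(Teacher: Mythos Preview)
Your argument is correct and rests on the same two ingredients as the paper's proof: the monotonicity chain that forces any interior user equilibrium to come from two information-constant algorithms (this is the content of the paper's Lemma~\ref{lemma:pure}, which you inline rather than state separately), and the platform best-response condition, which at a corner equilibrium with $v_j=0$ forces $R_{A_{3-j}}(N)\ge R^{*}$. The organization differs: the paper first splits on whether both $R_{A_j}(N)$ fall below $R^{*}$ (ruling this out as a non-equilibrium) and then, in the remaining case, invokes Lemma~\ref{lemma:pure} to reduce to corner equilibria; you instead start from a hypothetical sub-$R^{*}$ user and split on whether the offending equilibrium is a corner ($m_2=0$) or interior ($m_2\ge1$), the latter collapsing to the both-constant scenario. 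Your decomposition is arguably more direct, since it begins from the object you want to bound and avoids a separate preliminary lemma; the paper's decomposition makes the structural fact ``all user equilibria are corners unless both algorithms are constant'' reusable elsewhere (it is also used in Theorem~\ref{thm:realizable}).
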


An intuition for these results is that the performance of an algorithm depends not only on how it transforms information to actions, but also on the amount of information to which it has access. A platform can make up for a suboptimal algorithm by attracting a significant user base: if a platform starts with the full user base, it is possible that \textit{no single user} will switch to the competing platform, even if the competing platform chooses a stricter better algorithm. However, if a platform's algorithm is highly suboptimal, then the competing platform will indeed be able to win the full user base.

\paragraph{Proof sketches of Theorem \ref{thm:realizable} and Theorem \ref{thm:utilitysepinfo}.}
The key idea is that pure strategy equilibria for users take a simple form. Under strict information monotonicity, we show that every pure strategy equilibrium $p^* \in \mathcal{E}^{A_1, A_2}$ is in the set $\left\{[1, \ldots, 1], [2, \ldots, 2]\right\}$ (Lemma \ref{lemma:pure}). The intuition is that the user utility strictly grows with the amount of data that the platform has, which in turn grows with the number of other users participating on the same platform. It is often better for a user to switch to the platform with more users, which drives all users to a single platform in equilibrium. 

The reward functions $R_{A_1}(\cdot)$ and $R_{A_2}(\cdot)$ determine which of these two solutions are in $\mathcal{E}^{A_1, A_2}$. It follows from definition that $[1, \ldots, 1]$ is in $\mathcal{E}_{A_1, A_2}$ if and only if $R_{A_1}(N) \ge R_{A_2}(1)$. This inequality can hold even if $A_2$ is a better algorithm in the sense that $R_{A_2}(n) > R_{A_1}(n)$ for all $n$. The intuition is that the performance of an algorithm is controlled not only by its efficiency in choosing the possible action from the information state, but also by the size of its user base. The platform with the worse algorithm can be better for users if it has accrued enough users. 

This characterization of the set $\mathcal{E}_{A_1, A_2}$ enables us to reason about the platform equilibria. To prove Theorem \ref{thm:realizable}, we show that $(A, A)$ is an equilibrium for the platforms as long as $R_A(N) \ge \max_{A'} R_{A'}(1)$. This, coupled with utility richness, enables us to show that every utility level in $[\max_{A' \in \mathcal{A}} R_{A'}(1), \max_{A' \in \mathcal{A}} R_{A'}(N)]$ can be realized. To prove Theorem \ref{thm:utilitysepinfo}, we first show platforms can't both choose highly suboptimal algorithms: in particular, if $R_{A_1}(N)$ and $R_{A_2}(N)$ are both below the single-user optimal $\max_{A' \in \mathcal{A}} R_{A'}(1)$, then $(A_1, A_2)$ is not in equilibrium. Moreover, if one of the platforms chooses an algorithm $A$ where $R_A(N) < \max_{A' \in \mathcal{A}} R_{A'}(1)$, then all of the users will choose the other platform in equilibrium. The full proofs are deferred to  Appendix \ref{sec:proofsseparate}.

\subsection{What drives the level of misalignment in a marketplace?}\label{sec:discussion}

The existence of multiple equilibria makes it more subtle to reason about the alignment exhibited by a marketplace. The level of misalignment depends on two factors: first, the size of the range of realizable user quality levels, and second, the selection of equilibrium within this range. We explore each of these factors in greater detail. 

\paragraph{How large is the range of possible user quality levels?} Both the algorithm class and the structure of the user utility function determine the size of the range of possible user quality levels. We informally examine the role of the user's discount factor on the size of this range. 

First, consider the case where users are fully non-myopic (so their rewards are undiscounted across time steps). 
The gap between the single-user optimal utility $\max_{A' \in \mathcal{A}} R_{A'}(1)$ and global optimal utility $\max_{A' \in \mathcal{A}} R_{A'}(N)$ can be substantial. To gain intuition for this, observe that the utility level $R_{A'}(N)$ corresponds to the algorithm $A'$ receiving $N$ times as much as data at every time step than the utility level $R_{A'}(1)$. For example, consider an algorithm $A'$ whose regret grows according to $\sqrt{T}$ where $T$ is the number of samples collected, and let $\textrm{OPT} := \mathbb{E}_{r_1 \sim \DPrior_1, \ldots r_k \sim \DPrior_k}[\max_{1 \le i \le k} r_i]$ be the expected maximum reward of any arm. Since utility and regret are related up to additive factors for fully non-myopic users, then we have that $R_{A'}(1) \approx \textrm{OPT} - \sqrt{T}$ while $R_{A'}(N) \approx \textrm{OPT} - \sqrt{N T}$.  

At the other extreme, consider the case where users are fully myopic. In this case, the range collapses to a \textit{single point}. The intuition is that the algorithm generates the same utility for a user regardless of the number of other users who participate: in particular, $R_{A'}(1)$ is equal to $R_{A'}(N)$ for any algorithm $A' \in \mathcal{A}$. To see this, we observe that the algorithm's behavior beyond the first time step does not factor into user utility, and the algorithm's selection at the first time is determined before it receives any information from users. Put differently, although $R_{A'}(N)$ can receives $N$ times more information, there is a delay before the algorithm sees this information. Thus, in the case of fully myopic users, the user quality level is always equal to the global optimal user utility $\max_{A} R_A(N)$  so idealized alignment is actually recovered. When users are partially non-myopic, the range is no longer a single point, but the range is intuitively smaller than in the undiscounted case.

\paragraph{Which equilibrium arises in a marketplace?.} When the gap between the single-user optimal and global optimal utility levels is substantial, it becomes ambiguous what user quality level will be realized in a given marketplace. Which equilibria arises in a marketplace depends on several factors. 

One factor is the secondary aspects of the platform objective that aren’t fully captured by the number of users. For example, suppose that the platform cares about the its reputation and thus is incentivized to optimize for the quality of the service. This could drive the marketplace towards higher user quality levels. On the other hand, suppose that the platform derives other sources of revenue from recommending certain types of content (e.g. from recommending advertisements). If these additional sources of revenue are not aligned with user utility, then this could drive the marketplace towards lower user quality levels. 

Another factor is the mechanism under which platforms arrive at equilibrium solutions, such as \textit{market entry}. We informally show that market entry can result in the the \textit{worst possible user utility} within the range of realizable levels. To see this, notice that when one platform enters the marketplace shortly before another platform, all of the users will initially choose the first platform. The second platform will win over users only if $R_{A_2}(1) > R_{A_1}(N)$, where $A_2$ denotes the algorithm of the second platform and $A_1$ denotes the algorithm of the first platform. In particular, the platform is susceptible to losing users only if $R_{A_1}(N) < \max_{A' \in \mathcal{A}} R_{A'}(1)$. Thus, the worst possible equilibrium can arise in the marketplace, and this problem only worsens if the first platform enters early enough to accumulate data beforehand. This finding provides a mathematical backing for the barriers to entry in digital marketplaces that are documented in policy reports \citep{stiger19}. 

This finding points to an interesting direction for future work: \textit{what equilibria arise from other natural mechanisms?}

\section{Shared data repository}\label{sec:shared}

What happens when data is shared between the platforms? We show that both the nature of alignment and the forces that drive misalignment fundamentally change. In Section \ref{sec:counterexample}, we show a construction where the user quality levels do not span the full set $[\max_{A'} R_{A'}(1), \max_{A'} R_{A'}(N)]$. Despite this, in Section \ref{sec:alignmentshared}, we establish that the user quality level at any symmetric equilibrium continues to be at least $\max_{A'} R_{A'}(1)$.

\subsection{Construction where global optimal is not realizable}\label{sec:counterexample} 

In contrast with the separate data setting, the set of user quality levels at symmetric equilibria for the platforms does not necessarily span the full set $[\max_{A'} R_{A'}(1), \max_{A'} R_{A'}(N)]$. To demonstrate this, we show that in the risky-safe arm problem, every symmetric equilibrium $(A,A)$ has user quality level $Q(A, A)$ strictly below $\max_{A'} R_{A'}(N)$.
\begin{theorem}
\label{thm:BHundiscounted}
Let the algorithm class $\AAllContinuous \subseteq \AAll$ consist of the algorithms $A_f$ where $f(0) = 0$, $f(1) = 1$, and $f$ is continuous  at $0$ and $1$. In the shared data setting, for any choice of prior $p \in (0,1)$ and any background information quality $\sigma_b \in (0, \infty)$, there exists an undiscounted risky-safe arm bandit setup (see Setup \ref{setup:continuoustimeundiscounted}) such that the set of realizable user quality levels for algorithm class $\AAllContinuous$ is equal to a singleton set:  
\[\left\{Q(A, A) \mid (A,A) \text{ is a symmetric equilibrium for the platforms }\right\} = \left\{\alpha^*\right\}\]
where 
\[\max_{A' \in \mathcal{A}} R_{A'}(1) < \alpha^* < \max_{A' \in \mathcal{A}} R_{A'}(N).\]
\end{theorem}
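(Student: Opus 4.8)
The plan is to reduce the statement to the strategic-experimentation analysis of \citet{BH98}, choosing the free parameters of Setup~\ref{setup:continuoustimeundiscounted} so that the induced single-shared-repository experimentation game matches theirs. Two structural observations drive the reduction. (i) In the shared-data setting a symmetric profile $(A_f,A_f)$ leaves every user with the \emph{same} payoff regardless of how users split across the two platforms: the shared information state is driven by all $N$ users' pulls, and since both platforms run $A_f$ a user's own reward stream has the same distribution on either platform. Hence every split is a user equilibrium, so $v_1(A_f;A_f)=v_2(A_f;A_f)=0$ and $Q(A_f,A_f)=R_{A_f}(N)$. (ii) $(A_f,A_f)$ is a symmetric pure-strategy Nash equilibrium for the platforms if and only if $f$ is a symmetric Nash equilibrium of the $N$-player game $\Gamma$ in which each user chooses a policy in $\AAllContinuous$, all pulls feed one shared information state, and each user's payoff is their own undiscounted cumulative reward. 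Indeed, a deviation by platform~1 to $A_g$ is strictly profitable iff it destroys the user equilibrium ``all users on platform~2'' (the only user configuration placing zero users on platform~1), which happens iff some user strictly prefers to be the lone $A_g$-follower facing $N-1$ users running $A_f$, i.e.\ iff $\UShared(1;\mathbf{2}_{N-1},A_g,A_f) > \UShared(1;\mathbf{2}_{N-1},A_f,A_f)=R_{A_f}(N)$; so no profitable platform deviation exists exactly when $f$ best-responds to itself in $\Gamma$. Existence of the user equilibria invoked here follows from \citet{C04} as in Section~\ref{sec:utility}.

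Given the reduction, the theorem follows from three facts about $\Gamma$, all obtained by importing \citet{BH98} (see also \citet{BHSimple}) after checking that ``pull the risky arm with intensity $f(p)$'' in continuous time is the same object as their allocation of a divisible unit of experimentation effort, and that the optimal and equilibrium policies satisfy $f(0)=0$, $f(1)=1$ and endpoint continuity so that they lie in $\AAllContinuous$. \emph{(a) Uniqueness:} $\Gamma$ has an essentially unique symmetric equilibrium $f^\star$ --- the symmetric Markov-perfect equilibrium of \citet{BH98} --- so by (ii) the set $\{Q(A,A):(A,A)\text{ a symmetric platform equilibrium}\}$ is the singleton $\{\alpha^\star\}$ with $\alpha^\star:=R_{A_{f^\star}}(N)$. \emph{(b) Upper bound:} for $N\ge 2$, $f^\star$ prescribes strictly less experimentation than the efficient (cooperative) policy $f^{\mathrm{eff}}:=\argmax_f R_{A_f}(N)$ --- a lone user facing $N-1$ others' experimentation free-rides on their information and optimally explores less --- so, choosing the remaining parameters of the setup so that $f^{\mathrm{eff}}$ is the unique maximizer, $\alpha^\star=R_{A_{f^\star}}(N) < R_{A_{f^{\mathrm{eff}}}}(N) = \max_{A'\in\AAllContinuous}R_{A'}(N)$. \emph{(c) Lower bound:} let $A^{1\star}$ attain $\max_{A'\in\AAllContinuous}R_{A'}(1)$; since $f^\star$ is an equilibrium of $\Gamma$, $\alpha^\star = \UShared(1;\mathbf{2}_{N-1},A_{f^\star},A_{f^\star}) \ge \UShared(1;\mathbf{2}_{N-1},A^{1\star},A_{f^\star})$, and the latter is at least $R_{A^{1\star}}(1)=\max_{A'}R_{A'}(1)$ because the $N-1$ other users only enrich the shared information state relative to a lone user --- a form of information monotonicity in the undiscounted setup, cf.\ Lemma~\ref{lemma:undiscountedIM}; the inequality is strict for a generic choice of the remaining setup parameters, since $p\in(0,1)$, $\sigma_b<\infty$, and $f^\star$ experiments with positive probability on a positive-measure belief interval, so the extra data strictly sharpens the posterior and strictly raises the payoff. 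Combining (a)--(c) gives $\max_{A'}R_{A'}(1) < \alpha^\star < \max_{A'}R_{A'}(N)$.

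The main obstacle is steps (a)--(b): faithfully recasting the divisible-effort, continuous-time bandit of \citet{BH98} as our ``policy $=f$'' model, dealing with the undiscounted normalization built into Setup~\ref{setup:continuoustimeundiscounted} so that $R_A(n)$ is finite and $\Gamma$'s payoffs coincide with theirs, and extracting from their analysis both the \emph{uniqueness} of the symmetric equilibrium and the \emph{strict} free-riding gap, while verifying that $f^{\mathrm{eff}}$ (equivalently, a maximizing policy for $R_{A'}(N)$) genuinely lies in $\AAllContinuous$ so that the comparison in (b) is against $\max_{A'\in\AAllContinuous}R_{A'}(N)$ and not a larger quantity. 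A secondary point is the strictness in (c), which needs the equilibrium policy to carry out a nonvanishing amount of exploration; this follows from the same \citet{BH98} characterization, the equilibrium experimentation region being a nondegenerate belief interval whenever $p\in(0,1)$ and $\sigma_b<\infty$.
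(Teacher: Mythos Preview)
Your reduction is exactly the paper's: observations (i)--(ii) are Lemma~\ref{lemma:utilitysharedinfo}, and steps (a)--(c) are the paper's invocation of Theorem~\ref{thm:restatementundiscounted} together with the side-information-monotonicity argument. One bibliographic correction: the undiscounted analysis you need is in \citet{BH00,BHSimple}, not \citet{BH98}; the latter treats the discounted case and is what the paper uses for Theorem~\ref{thm:BH}.

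There is one substantive step you flag as an obstacle but do not resolve, and the paper does resolve it. The strategic-experimentation literature proves uniqueness of the symmetric \emph{Markov-perfect} equilibrium, i.e.\ uniqueness among $f$ that are best responses at \emph{every} belief simultaneously. Your game $\Gamma$ fixes a single prior $p_0$, so a priori it could admit additional symmetric Nash equilibria that agree with $f^\star$ on the beliefs actually visited from $p_0$ but differ elsewhere, or even equilibria that steer the posterior into a region where $f$ deviates from $f^\star$. The paper closes this gap by using the background information: because $\sigma_b<\infty$, the posterior process started at any $p_0\in(0,1)$ visits every belief in $(0,1)$ (it converges to $0$ or $1$ but never reaches either in finite time), so if $f$ differs from $f^\star$ on a positive-measure set, those beliefs are in fact reached, and one can splice $f$ on reached beliefs with $f^\star$ elsewhere to produce a second MPE, contradicting the uniqueness in \citet{BH00,BHSimple}. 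This is precisely why the theorem requires $\sigma_b\in(0,\infty)$, and it is the missing ingredient in your step (a).
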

\begin{theorem}
\label{thm:BH}
In the shared data setting, for any discount factor $\beta \in (0, \infty)$ and any choice of prior $p \in (0,1)$, there exists a discounted risky-safe arm bandit setup with no background information (see Setup \ref{setup:continuoustimediscounted}) such that the set of realizable user quality levels for algorithm class $\AAll$  is equal to a singleton set:  
\[\left\{Q(A, A) \mid (A,A) \text{ is a symmetric equilibrium for the platforms }\right\} = \left\{\alpha^*\right\}\]
where 
\[\max_{A' \in \mathcal{A}} R_{A'}(1) \le \alpha^* < \max_{A' \in \mathcal{A}} R_{A'}(N).\]
\end{theorem}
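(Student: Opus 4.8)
The plan is to reduce the statement to a claim about symmetric Markov perfect equilibria of an $N$-agent continuous-time strategic experimentation game, and then to apply (re-deriving where necessary) the under-experimentation results of \citet{BH98,BH00,BHSimple}.

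First I would pin down the form of symmetric platform equilibria in the shared data setting. Since the data is shared, whenever both platforms commit to the same algorithm $A=A_f$, every split of the $N$ users among the two platforms is a user equilibrium, each user obtains exactly $R_A(N)$, and hence $v_1(A;A)=v_2(A;A)=0$ and $Q(A,A)=R_A(N)$. By platform symmetry it thus suffices to rule out unilateral deviations by one platform, and $v_1(A_{f'};A_f)=0$ precisely when the configuration with all users on platform $2$ is a user equilibrium of $(A_{f'},A_f)$, i.e.\ when a single user who switches to the deviating platform (and is served by $A_{f'}$ while the other $N-1$ users are served by $A_f$ on the shared belief process) does not strictly gain. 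Writing $V(f';f):=\UShared(1;\mathbf{2}_{N-1},A_{f'},A_f)$, so that $V(f;f)=R_{A_f}(N)$, the condition for $(A_f,A_f)$ to be a symmetric platform Nash equilibrium becomes exactly $f\in\argmax_{f'}V(f';f)$: that is, $f$ is a symmetric Markov perfect equilibrium of the $N$-agent experimentation game (every algorithm in $\AAll$ is a time-independent map from beliefs to distributions over arms, so no non-Markov equilibria arise). Consequently $\{Q(A,A):(A,A)\text{ a symmetric platform equilibrium}\}=\{V(f;f): f\text{ a symmetric MPE}\}$, and it remains to analyze the right-hand side in a suitable instance.

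Next I would instantiate Setup \ref{setup:continuoustimediscounted}: given $\beta$ and $p$, choose the remaining parameters $l<s<h$, the Gaussian noise level $\sigma$, and the population size $N\ge 2$ so that the induced risky-safe arm problem is a Bolton--Harris Gaussian experimentation game in a regime where the symmetric-MPE characterization applies (the continuous-time formulation is what lets the per-user randomization in $A_f$ aggregate to a deterministic experimentation intensity $f(p)$). I would then extract three facts. (i) A symmetric MPE exists and every symmetric MPE induces the same agent value $\alpha^*$, which yields the singleton. (ii) The efficient cooperative policy --- which by symmetrizing the pooled-information problem attains $\max_{f'}R_{A_{f'}}(N)=\max_{A'\in\mathcal{A}}R_{A'}(N)$ --- strictly over-experiments relative to any symmetric MPE on a set of beliefs that the equilibrium belief path, started at $p$, visits with positive probability; this free-riding wedge (net of the encouragement effect) gives $\alpha^*<\max_{A'\in\mathcal{A}}R_{A'}(N)$. (iii) For the lower bound, letting $f^\star$ be the single-user-optimal intensity ($A_{f^\star}\in\argmax_{A'\in\mathcal{A}}R_{A'}(1)$), we have $\alpha^*=V(f;f)\ge V(f^\star;f)\ge R_{A_{f^\star}}(1)=\max_{A'\in\mathcal{A}}R_{A'}(1)$, where the first inequality is the best-response property of $f$ and the last is side information monotonicity of $A_{f^\star}$ in the risky-safe problem (additional posterior updates contributed by the other $N-1$ agents do not hurt a user running $A_{f^\star}$). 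Combining (i)--(iii) gives the realizable set $\{\alpha^*\}$ with $\max_{A'\in\mathcal{A}}R_{A'}(1)\le\alpha^*<\max_{A'\in\mathcal{A}}R_{A'}(N)$.

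The main obstacle is making this robust for every $\beta\in(0,\infty)$ and every $p\in(0,1)$. Three points require care. (a) One must embed our algorithm class --- arbitrary measurable intensities $f:[0,1]\to[0,1]$ --- into the continuous-time experimentation strategy space and verify that $\sup_{f'}V(f';f)$ is attained and coincides with the Bolton--Harris single-agent best-response value, so that the reduction above is exact and no mixed or asymmetric user equilibrium furnishes a more profitable platform deviation. (b) The singleton claim requires uniqueness of the symmetric-MPE payoff in the discounted Gaussian model; since the sharpest published uniqueness statements cover only particular parameter ranges, I would either land in such a range by the choice of $l,h,s,\sigma,N$ or re-derive uniqueness directly from the ODE characterizing the equilibrium value function. (c) The strict inequality $\alpha^*<\max_{A'\in\mathcal{A}}R_{A'}(N)$ must be forced for all $\beta,p$: one must choose $l,h,s,\sigma,N$ so that the efficient policy's extra experimentation is genuinely exercised along the equilibrium belief path with positive probability, and this --- rather than any purely algebraic estimate --- is the crux of the construction. (The merely weak lower bound here, in contrast with the strict one of Theorem \ref{thm:BHundiscounted}, reflects that under discounting the equilibrium experimentation can be tuned down to the single-user level, so I would not attempt to strengthen it.)
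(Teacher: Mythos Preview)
Your proposal follows the paper's route: reduce symmetric platform equilibria to symmetric equilibria of the $N$-agent strategic experimentation game $G$ (this is the paper's Lemma \ref{lemma:utilitysharedinfo}), then invoke the Bolton--Harris characterization for existence, uniqueness, and the strict under-experimentation inequality; your lower-bound step via side information monotonicity of the single-user optimum is exactly the content of the paper's Lemma \ref{lemma:utilitygame} (the paper's proof of Theorem \ref{thm:BH} cites the analogous inequality from \citet{BH98} directly, but the underlying argument is the same).

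One point in your item (b) is mis-aimed. The obstacle to the singleton claim is not that the \citet{BH98} uniqueness result is parameter-restricted --- in the discounted Gaussian model with no background information they establish a unique symmetric MPE outright (property (1) of the paper's Theorem \ref{thm:restatement}). The actual gap is that their equilibrium concept requires $f$ to be a best response to itself starting from \emph{every} prior, whereas an equilibrium of $G$ only requires this from the \emph{specific} prior $p$ of the setup; a priori $G$ could admit additional equilibria. Your parenthetical about Markov versus non-Markov strategies does not address this, since the strategy space $\AAll$ is Markov by construction. The paper closes the gap by arguing that any equilibrium $f$ of $G$ must agree with the BH equilibrium $f^*$ on all beliefs above the lower cutoff $c_{\mathrm{low}}$: every such belief is reachable from $p$ under $f^*$, so a positive-measure disagreement set would let one patch together a second symmetric equilibrium in the BH sense, contradicting their uniqueness. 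Since the posterior never reaches $c_{\mathrm{low}}$ (property (3)), agreement above $c_{\mathrm{low}}$ suffices to pin down the payoff.
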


Theorems \ref{thm:BHundiscounted} and \ref{thm:BH} illustrate examples where there is no symmetric equilibrium for the platforms that realizes the global optimal utility $\max_{A'} R_{A'}(N)$---regardless of whether users are fully non-myopic or have discounted utility. These results have interesting implications for shared data access as an intervention in digital marketplace regulation (e.g. see \cite{cremer2019competition}). At first glance, it would appear that data sharing would resolve the alignment issues, since it prevents platforms from gaining market dominance through data accumulation. However, our results illustrate that the platforms may still not align their actions with user utility at equilibrium. 

\paragraph{Comparison of separate and shared data settings.} To further investigate the efficacy of shared data access as a policy intervention, we compare alignment when the platforms share a data repository to alignment when the platforms have separate data repositories, highlighting two fundamental differences. We focus on the undiscounted setup (Setup \ref{setup:continuoustimeundiscounted}) analyzed in Theorem \ref{thm:BHundiscounted}; in this case, the algorithm class $\AAllContinuous$ satisfies information monotonicity and utility richness (see Lemma \ref{lemma:undiscountedIM}) so the results in Section \ref{sec:alignmentseparate} are also applicable.\footnote{
In the discounted setting, not all of the algorithms in $\AAll$ necessarily satisfy the information monotonicity requirements used in the alignment results for the separate data setting. Thus, Theorem \ref{thm:BH} cannot be used to directly compare the two settings.} The first difference in the nature of alignment is that there is a unique symmetric equilibrium for the shared data setting, which stands in contrast to the range of equilibria that arose in the separate data setting. Thus, while the particularities of equilibrium selection significantly impact alignment in the separate data setting (see Section \ref{sec:discussion}), these particularities are irrelevant from the perspective of alignment in the shared data setting. 

The second difference is that the user quality level of the symmetric equilibrium in the shared data setting is in the \textit{interior} of the range $[\max_{A \in \mathcal{A}} R_A(1), \max_{A \in \mathcal{A}} R_A(N)]$ of user quality levels exhibited in the separate data setting. The alignment in the shared data setting is thus \textit{strictly better} than the alignment of the worst possible equilibrium in the separate data setting. Thus, if we take a pessimistic view of the separate data setting, assuming that the marketplace exhibits the worst-possible equilibrium, then data sharing does help users. On the other hand, the alignment in the shared data setting is also \textit{strictly worse} than the alignment of the best possible equilibrium in the separate data setting. This means if that we instead take an optimistic view of the separate data setting, and assume that the marketplace exhibits this best-case equilibrium, then data sharing is actually harmful for alignment. In other words, when comparing data sharing and equilibrium selection as regulatory interventions, data sharing is worse for users than maintaining separate data and applying an equilibrium selection mechanism that shifts the market towards the best equilibria.

\paragraph{Mechanism for misalignment.} 
Perhaps counterintuitively, the mechanism for misalignment in the shared data setting is that a platform must perfectly align its choice of algorithm with the preferences of a user (given the choices of other users). In particular, the algorithm that is optimal for one user given the actions of other users is different from the algorithm that would be optimal if the users were to cooperate. This is because exploration is costly to users, so users don't want to perform their fair share of exploration, and would rather \textit{free-ride} off of the exploration of other users. As a result, a platform who chooses an algorithm with the global optimal strategy cannot maintain its user base. We formalize this phenomena by establishing a connection with \textit{strategic experimentation}, drawing upon the results of \cite{BH98, BH00, BHSimple} (see Appendix \ref{appendix:strategicexperimentation} for a recap of the relevant results).

\paragraph{Proof sketches of Theorem \ref{thm:BHundiscounted} and Theorem \ref{thm:BH}.} The key insight is that the symmetric equilibria of our game are closely related to the equilibria of the following game $G$. Let $G$ be an $N$ player game where each player chooses an algorithm in $\mathcal{A}$ within the same bandit problem setup as in our game. The players share an information state $\IS$ corresponding to the posterior distributions of the arms. At each time step, all of the $N$ players arrive at the platform, player $i$ pulls the arm drawn from $A_i(\IS)$, and the players all update $\IS$. The utility received by a player is given by their discounted cumulative reward. 

We characterize the symmetric equilibria of the original game for the platforms. 
\begin{restatable}{lemma}{utilityshared}
\label{lemma:utilitysharedinfo}
The solution $(A, A)$ is in equilibrium if and only if $A $ is a symmetric pure strategy equilibrium of the game $G$ described above. 
\end{restatable}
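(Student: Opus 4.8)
The plan is to reduce the platform-equilibrium condition in the shared-data duopoly to the Nash condition in $G$ by tracking how the shared information state evolves under each configuration of user choices. The key preliminary observation is that when both platforms commit to the same algorithm $A$, every user receives the same utility no matter which platform she joins: with shared data the information state $\ISShared$ is updated from the rewards of all $N$ users (together with the common background information) regardless of how users split across platforms, and both platforms feed $\ISShared$ into the same algorithm $A$. Hence every profile $\vectorfont{p} \in \{1,2\}^N$ lies in $\mathcal{E}_{A,A}$, so the worst-case count in \eqref{eq:platformutility} gives $v_1(A;A) = v_2(A;A) = 0$. Consequently $(A,A)$ is a pure-strategy Nash equilibrium for the platforms exactly when no unilateral deviation earns a platform strictly positive utility; and since the game is symmetric in the two platforms ($\UShared$ is symmetric in user actions and the data-sharing rule treats the platforms identically), it suffices to rule out profitable deviations by platform $1$, i.e. to show $v_1(A';A)=0$ for every $A' \in \mathcal{A}$.

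Next I would unpack the condition $v_1(A';A)=0$. Since $v_1(A';A) = \min_{\vectorfont{p} \in \mathcal{E}_{A',A}} \sum_{i=1}^N \Indicator[p^i=1] \ge 0$, it vanishes iff some user equilibrium of $(A',A)$ puts zero users on platform $1$, and the only such profile is $\mathbf{2}_N = [2,\dots,2]$. So $(A,A)$ is a platform equilibrium iff $\mathbf{2}_N \in \mathcal{E}_{A',A}$ for every $A' \in \mathcal{A}$. I would then spell out what $\mathbf{2}_N \in \mathcal{E}_{A',A}$ means: all $N$ users occupy symmetric positions and the only available deviation is to platform $1$, so the condition is simply $\UShared(2; \mathbf{2}_{N-1}, A', A) \ge \UShared(1; \mathbf{2}_{N-1}, A', A)$. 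When all $N$ users sit on platform $2$, platform $1$ has no users, so its algorithm $A'$ never updates $\ISShared$; thus the left-hand side equals the utility $W(A)$ of one of $N$ users all running $A$ on the shared state (in particular it is independent of $A'$), while the right-hand side $D(A',A) := \UShared(1;\mathbf{2}_{N-1},A',A)$ is the utility of a user running $A'$ on a shared state that is simultaneously updated by $N-1$ other users running $A$.

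Finally I would match these two quantities to payoffs in $G$. When all $N$ players of $G$ use $A$, the shared state evolves exactly as in the duopoly with all users on platform $2$, so each player's payoff is $W(A)$; if one player deviates to $A'$, the shared state is updated by that player's $A'$-pulls and the other $N-1$ players' $A$-pulls, so the deviator's payoff is exactly $D(A',A)$. Hence the profile $(A,\dots,A)$ is a symmetric pure-strategy Nash equilibrium of $G$ iff $W(A) \ge D(A',A)$ for every $A' \in \mathcal{A}$ --- which is precisely the condition derived above for $(A,A)$ to be a platform equilibrium. Chaining the equivalences gives the lemma.

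The step I expect to be the main obstacle is the bookkeeping behind the coupling, rather than any conceptual difficulty: I need to argue that the shared information state in the ``all users on platform $2$'' configuration of the duopoly coincides with the shared state of $G$ under the all-$A$ profile \emph{as stochastic processes} (step by step, including the background-information updates of step~3 of the learning task), and likewise that a single-user deviation to platform $1$ playing $A'$ is mirrored step-for-step by a single-player deviation to $A'$ in $G$; only then do the utilities agree identically. A secondary point requiring care is the claim $\mathcal{E}_{A,A} = \{1,2\}^N$ (hence $v_i(A;A)=0$), which again rests on the shared-data update being insensitive to the platform split.
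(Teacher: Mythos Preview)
Your proposal is correct and follows essentially the same approach as the paper: both arguments first observe that $\mathcal{E}_{A,A}$ contains every profile (so $v_1(A;A)=0$), then reduce the platform-equilibrium condition to ``$\mathbf{2}_N \in \mathcal{E}_{A',A}$ for all $A'$,'' and finally identify this with the no-profitable-deviation condition in $G$. Your treatment is somewhat more explicit than the paper's about the coupling between the shared-data duopoly process and the process in $G$ (which the paper takes for granted), but the logical skeleton is identical.
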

Moreover, the user quality level $Q(A, A)$ is equal to $R_A(N)$, which is also equal to the utility achieved by players in $G$ when they all choose action $A$. 

In the game $G$, the global optimal algorithm $A^* = \argmax_{A' \in \mathcal{A}} R_{A'}(N)$ corresponds to the solution when all $N$ players cooperate rather than arriving at an equilibrium. Intuitively, all of the players choosing $A^*$ is not an equilibrium because exploration comes at a cost to utility, and thus players wish to ``free-ride'' off of the exploration of other players. The value $\max_{A' \in \mathcal{A}} R_{A'}(N)$ corresponds to the cooperative maximal utility that can be obtained the $N$ players.

To show Theorem \ref{thm:BH}, it suffices to analyze structure of the equilibria of $G$. Interestingly, \citet{BH98, BH00, BHSimple}---in the context of strategic experimentation---studied a game very similar to $G$ instantiated in the risky-safe arm bandit problem with algorithm class $\AAll$. We provide a recap of the relevant aspects of their results and analysis in Appendix \ref{appendix:strategicexperimentation}. At a high level, they showed that there is a unique symmetric pure strategy equilibrium and showed that the utility of this equilibrium is strictly below the global optimal. We can adopt this analysis to conclude that the equilibrium player utility in $G$ is strictly below $R_A(N)$. The full proof is deferred to Appendix \ref{sec:proofsshared}.

\subsection{Weak alignment}\label{sec:alignmentshared} 

Although not all values in $[\max_{A'} R_{A'}(1), \max_{A'} R_{A'}(N)]$ can be realized, we show that the user quality level at any symmetric equilibrium is always at least $\max_{A'} R_{A'}(1)$. 
\begin{theorem}
\label{thm:utilitysharedinfo}
Suppose that every algorithm in $\mathcal{A}$ is side information monotonic (Assumption \ref{assumption:IM}). In the shared data setting, at any symmetric equilibrium $(A, A)$, the user quality level $Q(A,A)$ is in the interval $[\max_{A' \in \mathcal{A}} R_{A'}(1), \max_{A' \in \mathcal{A}} R_{A'}(N)] $.
\end{theorem}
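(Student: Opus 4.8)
The plan is to prove the two bounds separately, using Lemma \ref{lemma:utilitysharedinfo} to reduce everything to statements about the auxiliary game $G$. Recall from that lemma that if $(A,A)$ is a symmetric equilibrium for the platforms, then $A$ is a symmetric pure strategy equilibrium of $G$, and moreover $Q(A,A) = R_A(N)$, which equals the common utility the $N$ players of $G$ receive when all play $A$. So it suffices to show $\max_{A' \in \mathcal{A}} R_{A'}(1) \le R_A(N) \le \max_{A' \in \mathcal{A}} R_{A'}(N)$ for any such $A$.

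The upper bound $R_A(N) \le \max_{A' \in \mathcal{A}} R_{A'}(N)$ is immediate from the definition of the maximum, since $A \in \mathcal{A}$; no assumptions are needed here. The real content is the lower bound $R_A(N) \ge \max_{A' \in \mathcal{A}} R_{A'}(1)$. First I would fix $A^{\dagger} \in \argmax_{A' \in \mathcal{A}} R_{A'}(1)$, so the goal becomes $R_A(N) \ge R_{A^{\dagger}}(1)$. Since $A$ is a symmetric equilibrium of $G$, a single player $i$ cannot profitably deviate: fixing the other $N-1$ players to play $A$, player $i$'s utility from playing $A$ (which is $R_A(N)$) is at least their utility from deviating to $A^{\dagger}$. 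That deviation utility is exactly $\UShared(1; \mathbf{2}_{N-1}, A^{\dagger}, A)$ in the notation of Assumption \ref{assumption:IM} — player $i$ runs $A^{\dagger}$ while the other $N-1$ players (``group $2$'') all run $A$ and feed the shared information state. Now invoke side information monotonicity of $A^{\dagger}$: with $n = N-1 \le N-1$, it gives $\UShared(1; \mathbf{2}_{N-1}, A^{\dagger}, A) \ge R_{A^{\dagger}}(1)$. Chaining the equilibrium inequality with this yields $R_A(N) = \U^{\mathrm{shared}}(1; \mathbf{2}_{N-1}, A, A) \ge \UShared(1; \mathbf{2}_{N-1}, A^{\dagger}, A) \ge R_{A^{\dagger}}(1) = \max_{A' \in \mathcal{A}} R_{A'}(1)$, completing the lower bound.

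The step I expect to require the most care is lining up the definitions precisely: checking that the deviation payoff in $G$ really is $\UShared(1; \mathbf{2}_{N-1}, A^{\dagger}, A)$ — i.e. that in the shared-data game the information state seen by the deviating player is updated by \emph{all} players' pulls regardless of which algorithm each uses, so that the quantity appearing in the side-information-monotonicity hypothesis is exactly the one the equilibrium condition controls — and confirming that $R_A(N)$, defined via $\USeparate(1;\vectorfont{p}_{N-1},A,A)$, coincides with $\UShared(1;\mathbf{2}_{N-1},A,A)$ when all players use the same algorithm $A$ (the separate and shared information states evolve identically when every user is on ``the same'' algorithm and contributes the same data). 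Both are essentially bookkeeping, but they are where the side information monotonicity assumption (rather than the stronger strict/constant variants) is exactly what is needed, since the deviating player $i$ uses $A^{\dagger} \ne A$ and we need the bound to hold regardless of the other players' algorithm $A$. Once these identifications are in place, the argument is a one-line application of the equilibrium inequality. I would also note, as the paper does in its footnote, that at a symmetric solution the user utility is the same across all equilibria and users, so the choice of $\min$ in the definition of $Q$ is immaterial here.
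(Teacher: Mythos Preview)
Your proposal is correct and follows essentially the same approach as the paper: reduce via Lemma \ref{lemma:utilitysharedinfo} to the game $G$, get the upper bound trivially, and get the lower bound by considering the deviation to $A^{\dagger}\in\argmax_{A'}R_{A'}(1)$ and applying side information monotonicity of $A^{\dagger}$ (this is exactly the content of the paper's Lemma \ref{lemma:utilitygame}). Your explicit bookkeeping about identifying the deviation payoff with $\UShared(1;\mathbf{2}_{N-1},A^{\dagger},A)$ and about $R_A(N)$ coinciding in the separate and shared settings when all users run $A$ is, if anything, more careful than the paper's own write-up.
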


Theorem \ref{thm:utilitysharedinfo} demonstrates that the free-riding effect described in Section \ref{sec:counterexample} cannot drive the user quality level below the single-user optimal. Recall that the single-user optimal is also a lower bound on the user quality level for the separate data setting (see Theorem \ref{thm:utilitysepinfo}). This means that regardless of the assumptions on data sharing, the market outcome exhibits a weak form of alignment where the user quality level is at least the single-user optimal.   

\paragraph{Proof sketch of Theorem \ref{thm:utilitysharedinfo}.} We again leverage the connection to the game $G$ described in the proof sketch of Theorem \ref{thm:BH}. The main technical step is to show that at any symmetric pure strategy equilibrium $A$, the player utility $R_A(N)$ is at least $\max_{A' \in \mathcal{A}} R_{A'}(1)$ (Lemma \ref{lemma:utilitygame}). Intuitively, since $A$ is a best response for each player, they must receive no more utility by choosing $A^* \in \argmax_{A' \in \mathcal{A}} R_{A'}(1)$. The utility that they would receive from playing $A^*$ if there were no other players in the game is $R_{A^*}(1) = \max_{A' \in \mathcal{A}} R_{A'}(1)$. The presence of other players can be viewed as background updates to the information state, and the information monotonicity assumption on $A$ guarantees that these updates can only improve the player's utility in expectation. The full proof is deferred to Appendix \ref{sec:proofsshared}.

\section{Algorithm classes $\mathcal{A}$ that satisfy our assumptions}\label{sec:assumptions}
We describe several different bandit setups under which the assumptions on $\mathcal{A}$ described in Section \ref{subsec:assumptions} are satisfied.
\paragraph{Discussion of information monotonicity (Assumption \ref{assumption:IM}).} We first show that in the undiscounted, continuous-time, risky-safe arm bandit setup, the information monotonicity assumptions are satisfied for essentially any algorithm (proof is deferred to Appendix \ref{appendix:assumptions}).  
\begin{lemma}
\label{lemma:undiscountedIM}
Consider the undiscounted, continuous-time risky-safe arm bandit setup (see Setup \ref{setup:continuoustimeundiscounted}). Any algorithm $A \in \AAllContinuous$ satisfies strict information monotonicity and side information monotonicity. 
\end{lemma}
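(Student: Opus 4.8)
The plan is to reduce both assertions --- strict information monotonicity and side information monotonicity --- to a single monotonicity principle: \emph{the utility of a fixed user in the risky--safe problem is monotone in the rate at which information about the risky arm flows into the data repository that governs that user's recommendations}. The structural reason is that a user served arm $1$ with probability $f(p_t)$ (where $p_t = p(\IS)$ is the current posterior that the risky arm has mean $h$) receives expected flow reward $f(p_t)\,r_1 + (1-f(p_t))\,s$, which depends only on the \emph{current} posterior and on the realized risky-arm mean $r_1\in\{l,h\}$, never directly on how many other users are present. Hence all dependence of $R_A(n)$, and of $\UShared(1;\mathbf{2}_n, A, A_g)$, on the population is mediated through the law of the posterior process $(p_t)_{t\ge0}$. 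Since the reward noise is Gaussian, the natural tool is an \emph{information-time reparametrization}: let $(\tilde p_s)_{s\ge0}$ be the canonical posterior diffusion obtained by observing the risky arm at unit rate --- a bounded continuous martingale with $\tilde p_s\to\Indicator[r_1=h]$ a.s. --- so that in every scenario of interest the real-time posterior is $p_t=\tilde p_{\tau(t)}$, where the time change solves $\tau'(t)=\rho(\tilde p_{\tau(t)})$, $\tau(0)=0$, and $\rho(q)$ is the total observation rate at posterior $q$: $\rho_n(q)=n f(q)/\sigma^2 + 1/\sigma_b^2$ when $n$ users all run $A=A_f$, and $\tilde\rho_n(q)=\big(f(q)+n g(q)\big)/\sigma^2+1/\sigma_b^2$ when one user runs $A_f$ and $n$ others run $A_g$. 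Because $\sigma_b\in(0,\infty)$ in Setup \ref{setup:continuoustimeundiscounted}, $\rho\ge 1/\sigma_b^2>0$, so each time change is a genuine strictly increasing reparametrization of $[0,\infty)$.

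\emph{Monotonicity and side information monotonicity.} Fix the path of $\tilde p$ and substitute $s=\tau(t)$ in the utility integral. By the definition of undiscounted utility in Setup \ref{setup:continuoustimeundiscounted}, $R_A(n)$ equals a term independent of $n$ and $A$ minus the expected cumulative experimentation loss, which after the substitution is $\mathbb{E}_{r_1}\!\left[\int_0^\infty \ell_{r_1}(\tilde p_s)\,\rho_n(\tilde p_s)^{-1}\,ds\right]$, where the flow loss relative to the full-information optimum is $\ell_h(q)=(1-f(q))(h-s)\ge0$ and $\ell_l(q)=f(q)(s-l)\ge0$. For each fixed $q$, $n\mapsto\rho_n(q)^{-1}$ is nonincreasing, so the integrand is pointwise nonincreasing in $n$, and $R_A(n+1)-R_A(n)=\mathbb{E}_{r_1}\!\left[\int_0^\infty \ell_{r_1}(\tilde p_s)\big(\rho_n(\tilde p_s)^{-1}-\rho_{n+1}(\tilde p_s)^{-1}\big)\,ds\right]\ge0$, a finite nonnegative quantity (finiteness uses that $\ell_{r_1}$ vanishes at the absorbing endpoints $\{0,1\}$ --- which is exactly why $\AAllContinuous$ imposes $f(0)=0$, $f(1)=1$ with continuity there --- together with the exponential speed at which $\tilde p_s$ approaches $\{0,1\}$), so the divergent additive term is immaterial. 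Repeating the computation with $\rho_n$ replaced by $\tilde\rho_n\ge\rho_1$ gives $\UShared(1;\mathbf{2}_n, A, A_g)\ge R_A(1)$ for every measurable $g:[0,1]\to[0,1]$, which is side information monotonicity.

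\emph{Strictness.} It remains to show $R_A(n+1)-R_A(n)>0$ for $1\le n\le N-1$, i.e.\ that the nonnegative integrand above is strictly positive on a positive-probability event; it is positive exactly where $\ell_{r_1}(\tilde p_s)>0$ \emph{and} $f(\tilde p_s)>0$. Condition on $r_1=l$, which has probability $1-p>0$ (the setup has prior $p\in(0,1)$): there $\ell_l(q)=f(q)(s-l)$, so the relevant region is just $\{q:f(q)>0\}$. Since $A_f\in\AAllContinuous$ has $f(1)=1$ with $f$ continuous at $1$, there is $\delta>0$ with $f>0$ on $(1-\delta,1]$; and the canonical posterior $\tilde p$, whose log-odds under $r_1=l$ is a Brownian motion with negative drift started at the finite value $\log\tfrac{p}{1-p}$, reaches the level $1-\delta$ with positive probability and then, by continuity, lingers in $(1-\delta,1)$ for a positive length of information-time with positive probability, even conditionally on $r_1=l$. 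Hence the expected cumulative loss strictly decreases from $n$ to $n+1$, giving $R_A(n)<R_A(n+1)$ and establishing strict information monotonicity.

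\emph{Main obstacle.} The conceptual content is light; the work is in making the information-time reparametrization rigorous. Two points need care: (i) verifying that superposing the independent Gaussian observation streams produced by $n$ users (plus background) at total rate $\rho_n(p_t)$ yields exactly the time-changed canonical filter $\tilde p_{\tau_n(t)}$, and that the \emph{endogenous} time change --- whose rate depends on the very posterior path it drives --- is well defined by the ODE above and exhausts $[0,\infty)$ (this is where $\sigma_b<\infty$ enters, bounding $\rho$ away from $0$); and (ii) the integrability/normalization bookkeeping for the undiscounted criterion, which follows from the endpoint behavior of $\ell_{r_1}$ and the decay of $\tilde p_s$ toward $\{0,1\}$. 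Given these, monotonicity is a one-line consequence of $n\mapsto\rho_n(q)^{-1}$ being decreasing, and strictness is a short stopping-time estimate for the diffusion $\tilde p$.
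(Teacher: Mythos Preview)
Your argument is correct and arrives at the same inequality as the paper, but by a genuinely different route. The paper's proof simply invokes the Bolton--Harris closed-form expression (Lemma~\ref{lemma:closedform}), in which the undiscounted utility is an integral over posteriors $q$ of a nonpositive numerator (the flow experimentation cost at posterior $q$) divided by the total observation rate $\sigma^2/\sigma_b^2 + n f(q)$. Monotonicity in $n$ is then read off directly from the denominator, and strictness uses that $f>0$ on a neighborhood of $q=1$, exactly as you do; side information monotonicity follows from the same formula with the denominator replaced by $\sigma^2/\sigma_b^2 + f(q) + n g(q)$.

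Your approach bypasses the Green's-function formula and works pathwise: you express the utility as minus an expected cumulative loss, reparametrize real time by information-time, and observe that the Jacobian $dt=\rho_n(\tilde p_s)^{-1}\,ds$ is pointwise nonincreasing in $n$. This is the probabilistic counterpart of the same computation --- the occupation density of the canonical posterior $\tilde p$ is exactly what produces the weight $G(p,q)$ in Lemma~\ref{lemma:closedform} --- so the two arguments are the same inequality written in different coordinates. What your route buys is self-containment (no ODE to solve) and a transparent mechanism for side information monotonicity (replace $\rho_n$ by $\tilde\rho_n\ge\rho_1$); what the paper's route buys is brevity, since once the closed form is in hand the proof is two lines. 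One caveat: your finiteness claim (``$\ell_{r_1}$ vanishes at the absorbing endpoint together with the exponential speed of $\tilde p_s$'') is not quite enough as stated --- mere continuity of $f$ at $0$ and $1$ gives no rate, so integrability of the loss can fail for pathological $f\in\AAllContinuous$. The paper's proof glosses over the same issue in the convergence of the $G(p,q)$ integral, so this is a shared technicality rather than a defect particular to your approach.
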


While the above result focuses on undiscounted utility, we also show that information monotonicity can also be achieved with discounting. In particular, information monotonicity is satisfied by  \texttt{ThompsonSampling} (proof is deferred to Appendix \ref{appendix:assumptions}). 
\begin{lemma}
\label{lemma:TS}
For the discrete-time risky-safe arm bandit problem with finite time horizon, prior $p \in (0,1)$, $N = 2$ users, and no background information (see Setup \ref{setup:discreteriskysafe}), \texttt{ThompsonSampling} is strictly information monotonic and side information monotonic for any discount factor $\beta \in (0,1]$. 
\end{lemma}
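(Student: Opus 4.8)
The plan is to reduce both assertions to one statement about posterior second moments and then prove that statement by a supermartingale--coupling argument; the tempting alternative of propagating a mean-preserving-spread order one round at a time does not work, for reasons I explain at the end.

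\emph{Reduction.} Let $X=\mathbf 1[r_1=h]$, and let $\tilde p_t\in[0,1]$ denote the shared posterior probability that the risky arm is high entering round $t$ (so $\tilde p_1=p$, and $\tilde p_t=\mathbb P[X=1\mid\mathcal F_{t-1}]$ is a bounded martingale, where $\mathcal F_{t-1}$ is the shared data after $t-1$ rounds). Conditioning on $\mathcal F_{t-1}$ and using that \texttt{ThompsonSampling} plays the risky arm with probability $\tilde p_t$, the round-$t$ expected reward of the \texttt{ThompsonSampling} user is $(h-l)\,\mathbb E[\tilde p_t^2]+(l-s)p+s$, so her discounted utility is $(h-l)\sum_{t=1}^{T}\beta^{t}\,\mathbb E[\tilde p_t^2]$ plus a constant that does not depend on which scenario we are in. Since $h>l$ and $\beta\in(0,1]$, strict information monotonicity (i.e.\ $R_A(1)<R_A(2)$ for $A=\texttt{ThompsonSampling}$) and side information monotonicity both follow once we show that for every $t$ the quantity $\mathbb E[\tilde p_t^2]$ is weakly larger when two users feed the repository (both running \texttt{ThompsonSampling}, or user $1$ running \texttt{ThompsonSampling} and user $2$ running an arbitrary randomized algorithm) than when one user does, with strict inequality at $t=2$ in the two-\texttt{ThompsonSampling} case. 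Writing $\mathbb E[\tilde p_t^2]=p-\mathbb E[\tilde p_t(1-\tilde p_t)]=p-\mathbb E[\mathrm{Var}(X\mid\mathcal F_{t-1})]$, this says precisely that the expected residual uncertainty about the risky arm after $t-1$ rounds is weakly smaller with the extra user.

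\emph{Supermartingale packaging and coupling.} Condition on $r_1$: in each of these scenarios the observed risky-arm rewards, listed in temporal order, form an i.i.d.\ $N(r_1,\sigma^2)$ stream $(V_k)_{k\ge1}$ (adaptivity governs which and when rewards are seen, not their conditional law), and $\mathcal F_{t-1}$ is generated by $V_1,\dots,V_{M_t}$ together with the independent arm-pull coins, where $M_t$ is the number of risky-arm pulls through round $t-1$. Set $\psi(m):=\mathrm{Var}(X\mid V_1,\dots,V_m)$; this is a nonnegative, bounded supermartingale in $m$ (more observations cannot increase the conditional variance in expectation), and it remains a supermartingale after enlarging the filtration by the pull coins, for which each $M_t$ is a bounded stopping time (whether a $(k{+}1)$-st risky pull has occurred by round $t-1$ is determined by the coins and $V_1,\dots,V_k$). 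Hence $\mathbb E[\tilde p_t(1-\tilde p_t)]=\mathbb E[\psi(M_t)]$, and by the optional-stopping inequality for supermartingales it is enough to couple the one-user and two-user scenarios on a common probability space so that $M_t^{\mathrm{one}}\le M_t^{\mathrm{two}}$ pathwise for every $t$. I would build this coupling round by round, sharing $X$ and the stream $(V_k)$ and maintaining the invariant $M_j^{\mathrm{one}}\le M_j^{\mathrm{two}}$: when the invariant is slack the one-user process gains at most one observation per round while the two-user process gains at least zero, so it persists; when $M_j^{\mathrm{one}}=M_j^{\mathrm{two}}$ both processes have consumed exactly $V_1,\dots,V_{M_j^{\mathrm{one}}}$ and therefore hold the \emph{same} posterior $q$, so I draw a single $\mathrm{Unif}[0,1]$ variable and let both the single user and user $1$ on the two-user platform play the risky arm exactly when it is $\le q$ — whenever the single user observes, so does user $1$, and user $2$'s pull (under \texttt{ThompsonSampling} or under any algorithm) can only add observations.

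\emph{Strictness and the main obstacle.} Optional stopping then yields $\mathbb E[\tilde p_t^2]$ weakly larger with two users for every $t$, which gives side information monotonicity; for strict information monotonicity, at $t=2$ the event that both users play the risky arm in round $1$ has probability $p^2>0$, on which $M_2^{\mathrm{two}}=2>1\ge M_2^{\mathrm{one}}$ and $\psi(2)<\psi(1)$ strictly (a second Gaussian observation strictly reduces residual uncertainty as long as the current belief lies in $(0,1)$, which holds almost surely), so the $t=2$ term of $\sum_t\beta^t\mathbb E[\tilde p_t^2]$ is strictly larger and $R_A(1)<R_A(2)$. The one-step alternative fails: because \texttt{ThompsonSampling} plays the risky arm with probability equal to the current belief, the one-round information gain is $0$ at beliefs $0$ and $1$ and positive in between, so ``prior $\mapsto$ expected convex function of the next posterior'' is not convex and the mean-preserving-spread order is not preserved by a single round of the dynamics; a two-user platform can be more likely to have \emph{already} pinned down the arm and hence learn less at some later round, so the comparison has to be made cumulatively — which is exactly what the supermartingale / optional-stopping packaging accomplishes. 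I expect the real work to be the measure-theoretic bookkeeping in the middle step — verifying the enlarged-filtration supermartingale property and the stopping-time property of each $M_t$, and checking that identifying coins in the tied case keeps each process's pull randomness fresh and independent of its own past and of $r_1$ — rather than any single conceptually hard estimate; the restriction to $N=2$ and to no background information is what keeps this bookkeeping clean.
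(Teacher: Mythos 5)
Your proposal is correct in substance, but it takes a genuinely different route from the paper's. The paper never writes a closed form for the utility: it first proves an auxiliary backwards-induction lemma (Lemma \ref{lemma:increasedinfo}) showing that one additional risky-arm observation strictly increases the single Thompson-sampling user's value function (strict convexity of the immediate payoff in the posterior handles the current round, the induction hypothesis handles the continuation), and then runs a hybrid argument over time steps---interpolating between ``the second user contributes for $t$ rounds'' and ``for $t-1$ rounds,'' conditioning on the history, and invoking the auxiliary lemma whenever the extra pull is risky; the same hybrids with the second user replaced by an arbitrary algorithm give side information monotonicity. You instead exploit the identity, special to Thompson sampling, that the expected round-$t$ reward equals $(h-l)\mathbb{E}[\tilde p_t^2]+(l-s)p+s$, so both monotonicity statements reduce to comparing the expected residual variance of the arm across scenarios; you then use that the posterior variance as a function of the number of risky observations is a bounded supermartingale, couple the one-user and two-user runs so that the consumed-observation counts are pathwise ordered stopping times for a common filtration, and conclude by optional sampling. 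Your route buys an explicit formula, avoids the value-function induction, and makes transparent why an arbitrary interfering algorithm can only help (it can only add observations to the shared stream); the paper's route is less tied to the affine form of the Thompson selection rule (the identical proof covers $\epsilon$-Thompson sampling), and its backwards-induction lemma is exactly its answer to the one-step convexity obstruction you correctly identify.

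Two points to tighten. First, the strictness step: ``$\psi(2)<\psi(1)$'' is false pathwise (an observation that pulls the belief toward $1/2$ increases the conditional variance); what you need, and what is true, is $\mathbb{E}[\psi(2)\mid V_1]<\psi(1)$ almost surely, and the conclusion that the $t=2$ term strictly increases should be obtained by localizing the optional-sampling bound on the coin-measurable, probability-$p^2$ event that both users pull the risky arm in round 1 (off that event you still have the weak inequality). Second, strictness---via your $t=2$ term, or via the paper's hybrids indexed by $2\le t\le T$---implicitly requires $T\ge 2$; this limitation is shared with the paper's own proof and is not a defect of your approach relative to it.
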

\noindent In fact, we actually show in the proof of Lemma \ref{lemma:TS} that the $\epsilon$-\texttt{ThompsonSampling} algorithm that explores uniformly with probability $\epsilon$ and applies \texttt{ThompsonSampling} with probability $1-\epsilon$ also satisfies strict information monotonicity and side information monotonicity. 

These information monotonicity assumptions become completely unrestrictive for fully myopic users, where user utility is fully determined by the algorithm's performance at the first time step, before any information updates are made. In particular, \textit{any} algorithm is information constant and side-information monotonic. 

 More broadly, understanding information monotonicity and its variants is crucial
for investigating the incentive properties of bandit algorithms: indeed prior work (e.g. \citet{AMSW20, MSSW16, SS21}) has explored variants of this assumption. Since these works focus on fully myopic users that may arrive at any time step, they require a different information monotonicity assumption, that they call \textit{Bayes monotonicity} \citep{AMSW20}. (An algorithm satisfies Bayes monotonicity if its expected reward is non-decreasing \textit{in time}.) Bayes monotonicity is strictly speaking incomparable to our information monotonicity assumptions; in particular, Bayes monotonicity does not imply either strict information monotonicity or side information monotonicity.

\paragraph{Discussion of utility richness (Assumption \ref{assumption:richness}).} 
At an intuitive level, as long as the algorithm class reflects a range of exploration levels, it will satisfy utility richness. 

We first show that in the undiscounted setup in Theorem \ref{thm:BHundiscounted}, the algorithm class satisfies utility richness (proof in Appendix \ref{appendix:assumptions}). 
\begin{lemma}
\label{lemma:undiscountedUR}
Consider the undiscounted, continuous-time risky-safe arm bandit setup (see Setup \ref{setup:continuoustimeundiscounted}). The algorithm class $\AAllContinuous$ satisfies utility richness. 
\end{lemma}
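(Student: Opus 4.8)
The plan is to verify the three clauses of Assumption~\ref{assumption:richness} for $\AAllContinuous$ in Setup~\ref{setup:continuoustimeundiscounted}: writing $\mathcal V := \{R_A(N)\mid A\in\AAllContinuous\}$, I must show (a) $\mathcal V$ is a contiguous set, (b) $\sup\mathcal V$ is attained, and (c) there is $A'\in\AAllContinuous$ with $R_{A'}(N)\le \max_{A\in\AAllContinuous} R_A(1)$. The common ingredient is the description of $R_{A_f}(n)$: when $n$ users all run $A_f$ on one platform, the shared posterior $p(\IS_t)$ is a continuous-time Markov (diffusion) process on $[0,1]$ whose instantaneous informativeness at belief $p$ equals the background rate plus $n\cdot f(p)$ times the per-pull rate, and $R_{A_f}(n)$ is the resulting expected undiscounted payoff, a bounded functional of the law of this process; larger $f$ means faster posterior movement at a short-run cost.

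For (b) I would appeal to the optimal-control structure: choosing $f$ to maximize the $N$-user payoff is a one-armed-bandit / optimal-stopping problem, and (as in the analysis underlying \citet{BH98,BHSimple}) it admits an optimal stationary cutoff policy $f^\star=\mathbf{1}[\,p\ge p_L\,]$ for some threshold $p_L\in(0,1)$. Since $f^\star$ equals $0$ on a neighborhood of $0$ and $1$ on a neighborhood of $1$, it is continuous at the endpoints, so $A_{f^\star}\in\AAllContinuous$; and because $f^\star$ is optimal even among all measurable exploration profiles, $R_{A_{f^\star}}(N)=\max_{A\in\AAllContinuous}R_A(N)$, so $\sup\mathcal V\in\mathcal V$. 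For (c) I would use a ``lazy'' profile $f'$ with $f'\equiv 0$ on a neighborhood $[0,1-\delta]$ of the prior $p$ (choosing $\delta<1-p$) and $f'$ ramping continuously up to $f'(1)=1$ on $[1-\delta,1]$, so that $A_{f'}\in\AAllContinuous$. Under $A_{f'}$ no user pull ever moves the posterior out of $[0,1-\delta]$, so the posterior evolves only through background information and $R_{A_{f'}}(N)$ equals the value of a process that never uses own data. When there is no background information this value is literally $R_{A_{f'}}(1)\le\max_A R_A(1)$; when there is background information, letting $\delta\downarrow 0$ drives $R_{A_{f'}}(N)$ down to the always-play-safe value, which never exceeds $\max_A R_A(1)$ (indeed is strictly below it once any learning is possible, e.g.\ via the myopically-greedy algorithm, itself in $\AAllContinuous$), so a small enough $\delta$ gives $R_{A_{f'}}(N)\le\max_A R_A(1)$.

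For (a) I would use convexity plus a continuity/intermediate-value argument. The constraint set defining $\AAllContinuous$ ($f(0)=0$, $f(1)=1$, $f$ continuous at $0$ and $1$) is convex, so for any $A_{f_0},A_{f_1}\in\AAllContinuous$ the family $f_\lambda:=\lambda f_1+(1-\lambda)f_0$, $\lambda\in[0,1]$, stays in $\AAllContinuous$, with $f_\lambda\to f_{\lambda_0}$ pointwise as $\lambda\to\lambda_0$. I would then show $\lambda\mapsto R_{A_{f_\lambda}}(N)$ is continuous: the diffusion coefficients of the posterior process depend continuously (affinely) on $\lambda$ and, because background information keeps them uniformly bounded away from degeneracy on compact subsets of $(0,1)$, pointwise convergence of the coefficients yields convergence in law of the posterior path (standard stability of one-dimensional diffusions, e.g.\ via time change); boundedness of the payoff then gives $R_{A_{f_{\lambda_n}}}(N)\to R_{A_{f_{\lambda_0}}}(N)$. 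The intermediate value theorem then shows every value between $R_{A_{f_0}}(N)$ and $R_{A_{f_1}}(N)$ lies in $\mathcal V$, so $\mathcal V$ is an interval. Together with (b) and (c) this establishes utility richness.

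The main obstacle is the continuity claim in (a): ruling out a discontinuous jump in $R_{A_{f_\lambda}}(N)$ as the exploration intensity vanishes, and handling the boundary behavior near $p\in\{0,1\}$ where the posterior dynamics degenerate. This is exactly where the precise (finite-horizon / normalized) form of Setup~\ref{setup:continuoustimeundiscounted} must be used, so that an $O(\lambda)$ change in exploration intensity produces only an $O(\lambda)$ change in the payoff; the same analytic facts about the controlled posterior process underlie the existence of the optimal cutoff policy in (b). The remaining steps are routine, and by Lemma~\ref{lemma:undiscountedIM} this same setup also supplies the information-monotonicity hypotheses, so no further restriction on $\AAllContinuous$ is incurred.
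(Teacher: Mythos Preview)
Your proposal is essentially correct and matches the paper on parts (b) and (c): both you and the paper obtain the supremum via the team-optimal cutoff policy from \citet{BHSimple}, and both construct the low-value algorithm by pushing the experimentation region toward $1$ (the paper uses the sharp cutoff $f_{1-\epsilon}=\mathbf 1[p\ge 1-\epsilon]$, you use a continuous ramp on $[1-\delta,1]$; either lies in $\AAllContinuous$ since continuity is only required at $0$ and $1$). One small imprecision in your (c): with background information the posterior can leave $[0,1-\delta]$ and then user pulls \emph{do} contribute, so ``evolves only through background information'' is not literally true; but your limiting argument $\delta\downarrow 0$ repairs this.

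The real divergence is in (a). The paper does not argue via stability of diffusions at all: it invokes the closed-form integral representation (Lemma~\ref{lemma:closedform})
\[
R_{A_f}(N)=\int G(p,q)\,\frac{(1-f(q))s+(qh+(1-q)l)f(q)-(hq+s(1-q))}{\sigma^2/\sigma_b^2+Nf(q)}\,dq,
\]
and then contiguity is immediate: along the convex interpolation $f_\lambda=\lambda f_1+(1-\lambda)f_0$ the integrand depends continuously on $\lambda$ pointwise, the denominator is uniformly bounded below by $\sigma^2/\sigma_b^2>0$ (this is exactly where $\sigma_b<\infty$ in Setup~\ref{setup:continuoustimeundiscounted} is used), and dominated convergence gives continuity of $\lambda\mapsto R_{A_{f_\lambda}}(N)$. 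Your diffusion-stability route is conceptually valid and more general (it would survive without a closed form), but it is also where you yourself flag the main obstacle; the paper sidesteps that obstacle entirely by reducing to an explicit integral.
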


Since the above result focuses on a particular bandit setup, 
we also describe a general operation to transform an algorithm class into one that satisfies utility richness. In particular, the closure of an algorithm class under mixtures with uniformly random exploration satisfies utility richness (proof in Appendix \ref{appendix:assumptions}).  
\begin{lemma}
\label{lemma:UR}
Consider any discrete-time setup with finite time horizon and bounded mean rewards. For $A \in \mathcal{A}$, let $A_{\epsilon}$ be the algorithm that chooses an arm at random w/ probability $\epsilon$. Suppose that the reward $R_A(N)$ of every algorithm $A \in \mathcal{A}$ is at least $R_{A_1}(N)$ (the reward of uniform exploration), and suppose that the supremum of $\left\{R_A(N) \mid A \in \mathcal{A}\right\}$ is achieved. Then, the algorithm class 
$\mathcal{A}_{\text{closure}} := \left\{A_{\epsilon} \mid A \in \mathcal{A}, \epsilon \in [0,1]\right\}$ satisfies utility richness. 
\end{lemma}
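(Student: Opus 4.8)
The plan is to verify the three requirements of Assumption~\ref{assumption:richness} for $\mathcal{A}_{\text{closure}}$ one at a time, leaning on two structural observations. First, for \emph{every} $A\in\mathcal{A}$ the algorithm $A_{1}$ (the case $\epsilon=1$) is pure uniform exploration, which ignores the information state entirely; hence $R_{A_{1}}(n)$ equals a single constant $r_{\min}$ for all $n$, and the hypothesis gives $r_{\min}\le R_{A}(N)$ for every $A\in\mathcal{A}$. Second, in a finite-horizon setup with bounded mean rewards, $\epsilon\mapsto R_{A_{\epsilon}}(N)$ is continuous on $[0,1]$ — in fact a polynomial in $\epsilon$ of degree at most $NT$ — since unrolling the $T$ rounds yields a finite expectation in which each user's per-round arm distribution, $A_{\epsilon}(\IS)=\epsilon\cdot\mathrm{Unif}+(1-\epsilon)A(\IS)$, is affine in $\epsilon$ given the current information state, and bounded rewards keep all the relevant expectations finite.

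The third requirement (existence of $A'$ with $R_{A'}(N)\le\max_{B\in\mathcal{A}_{\text{closure}}}R_{B}(1)$) is immediate: take $A'=A_{1}\in\mathcal{A}_{\text{closure}}$, so that $R_{A_{1}}(N)=r_{\min}=R_{A_{1}}(1)\le\max_{B\in\mathcal{A}_{\text{closure}}}R_{B}(1)$. For contiguity, fix $A$ and note that continuity makes $I_{A}:=\{R_{A_{\epsilon}}(N):\epsilon\in[0,1]\}$ a compact interval, and $r_{\min}\in I_{A}$ (its value at $\epsilon=1$). Therefore $\{R_{B}(N):B\in\mathcal{A}_{\text{closure}}\}=\bigcup_{A\in\mathcal{A}}I_{A}$ is a union of intervals all containing the common point $r_{\min}$, hence connected, i.e.\ a contiguous set.

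The main obstacle is the remaining requirement: that $\sup\{R_{B}(N):B\in\mathcal{A}_{\text{closure}}\}$ is attained. One cannot simply invoke $A^{*}\in\argmax_{A\in\mathcal{A}}R_{A}(N)$, since injecting exploration can strictly raise the $N$-user reward (e.g.\ $\epsilon$-greedy can beat greedy when greedy gets stuck on a low-prior-mean arm), so the supremum over $\mathcal{A}_{\text{closure}}$ may exceed $\max_{A}R_{A}(N)$ and may be approached only through $\epsilon>0$. The approach I would take is to strengthen continuity in $\epsilon$ to a Lipschitz estimate that is \emph{uniform in $A$}: couple the runs of $A_{\epsilon}$ and $A_{\epsilon'}$ so that for each of the $NT$ (user, round) pairs they disagree on the ``explore vs.\ follow $A$'' choice with probability at most $|\epsilon-\epsilon'|$; a union bound bounds the total-variation distance between the two trajectories by $NT|\epsilon-\epsilon'|$, and bounded rewards over horizon $T$ convert this into $|R_{A_{\epsilon}}(N)-R_{A_{\epsilon'}}(N)|\le L\,|\epsilon-\epsilon'|$ with $L$ independent of $A$. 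Then $h(\epsilon):=\sup_{A\in\mathcal{A}}R_{A_{\epsilon}}(N)$ is Lipschitz, hence continuous, on the compact set $[0,1]$, so it attains its maximum at some $\epsilon^{\dagger}$ and $\sup_{B\in\mathcal{A}_{\text{closure}}}R_{B}(N)=h(\epsilon^{\dagger})$.

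The final and most delicate step is to turn this into attainment by an actual element of $\mathcal{A}_{\text{closure}}$. Here I would take a near-maximizing sequence $A^{(k)}$ with $R_{A^{(k)}_{\epsilon_{k}}}(N)\to\sup_{B}R_{B}(N)$, pass to a subsequence along which $\epsilon_{k}\to\epsilon^{\dagger}$ and — using uniform boundedness together with the degree-$NT$ polynomial structure, via a compactness argument (Arzel\`a--Ascoli) — the maps $\epsilon\mapsto R_{A^{(k)}_{\epsilon}}(N)$ converge uniformly; combining the uniform Lipschitz bound with the hypothesis that $\sup_{A}R_{A}(N)$ is attained (evaluating at $\epsilon=0$) is what I would use to identify the limiting configuration with a genuine $A_{\epsilon^{\dagger}}\in\mathcal{A}_{\text{closure}}$. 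I expect this attainment argument — ruling out that the supremum is merely a limit point — to be where essentially all the difficulty lies; the contiguity and the third condition are routine by comparison.
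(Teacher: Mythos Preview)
Your treatment of the third condition and of contiguity is essentially the paper's: take $A'=A_1$ for the lower bound, prove $\epsilon\mapsto R_{A_\epsilon}(N)$ is continuous via a trajectory-coupling / total-variation bound, deduce each $I_A$ is a compact interval, and glue the intervals through the shared endpoint $R_{A_1}(N)$. On these two requirements you and the paper agree completely.

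The genuine gap is in your supremum-attainment step. Your uniform-Lipschitz argument does show that $h(\epsilon):=\sup_{A\in\mathcal{A}}R_{A_\epsilon}(N)$ is continuous on $[0,1]$ and hence attains its maximum at some $\epsilon^\dagger$; but this only gives $\sup_{B\in\mathcal{A}_{\text{closure}}}R_B(N)=\sup_{A\in\mathcal{A}}R_{A_{\epsilon^\dagger}}(N)$, and you still need the \emph{inner} supremum over $A$ to be attained. The Arzel\`a--Ascoli extraction cannot deliver this: uniform convergence of the scalar maps $\epsilon\mapsto R_{A^{(k)}_\epsilon}(N)$ produces a limiting \emph{function} on $[0,1]$, not a limiting \emph{algorithm} in $\mathcal{A}$, and there is no topology or compactness assumed on $\mathcal{A}$ that would let you upgrade one to the other. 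The hypothesis that $\sup_{A}R_A(N)$ is attained is a statement purely at $\epsilon=0$, and nothing in your sketch transports it to $\epsilon^\dagger>0$; schematically, nothing rules out a countable $\mathcal{A}=\{A^{(k)}\}$ with $R_{A^{(k)}}(N)$ constant in $k$ (so the $\epsilon=0$ supremum is trivially attained) yet $R_{A^{(k)}_{\epsilon^\dagger}}(N)$ strictly increasing in $k$ with no maximizer. For the record, the paper's own proof is terse on exactly this point as well: it simply asserts that showing each $I_A$ is closed ``suffices,'' without explicitly engaging with the possibility you correctly flag---that mixing in exploration could push the closure's supremum strictly above $\max_{A\in\mathcal{A}}R_A(N)$.
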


\paragraph{Example classes that achieve information monotonicity and utility richness.} Together, the results above provide two natural bandit setups that satisfy strict information monotonicity, side information monotonicity, and utility richness.
\begin{enumerate}
    \item The algorithm class $\AAllContinuous$  in the undiscounted, continuous-time risky-safe arm bandit setup with any $N \ge 1$ users (see Setup \ref{setup:continuoustimeundiscounted}). 
    \item The class of $\epsilon$-Thompson sampling algorithms in the discrete time risky-safe arm bandit setup with discount factor $\beta \in (0, 1]$, $N = 2$ users, and no background information (see Setup \ref{setup:discreteriskysafe}). 
\end{enumerate}  
These setups, which span the full range of discount factors, provide concrete examples where our alignment results are guaranteed to apply.

\section{Discussion}

Towards investigating competition in digital marketplaces, we present a framework for analyzing competition between two platforms performing multi-armed bandit learning through interactions with a population of users. We propose and analyze the \textit{user quality level} as a measure of the alignment of market equilibria. We show that unlike in typical markets of products, competition in this setting does not perfectly align market outcomes with user utilities, both when the platforms maintain separate data repositories and when the platforms maintain a shared data repository.

Our framework further allows to compare the separate and shared data settings, and we show that the nature of misalignment fundamentally depends on the data sharing assumptions. First, different mechanisms drive misalignment: when platforms have separate data repositories, the suboptimality of an algorithm can be compensated for with a larger user base; when the platforms share data, a platform can't retain its user base if it chooses the global optimal algorithm since users wish to free-ride off of the exploration of other users. Another aspect that depends on the data sharing assumptions is the specific form of misalignment exhibited by market outcomes. The set of realizable user quality levels ranges from the single-user optimal to the global optimal in the separate data setting; on the other hand, in the shared data setting, neither of these endpoints may be realizable. These differences suggest that data sharing performs worse as a regulatory intervention than a well-designed equilibrium selection mechanism. 

More broadly, our work provides a mathematical explanation  of phenomena documented in recent policy reports and reveals that competition has subtle consequences for users in digital marketplaces that merit further inquiry. We hope that our work provides a starting point for building a theoretical foundation for investigating competition and designing regulatory interventions in digital marketplaces. 

\section{Acknowledgments}

We would like to thank Anca Dragan, Yannai Gonczarowski, Erik Jones, Rad Niazadeh, Jacob Steinhardt, Jonathan Stray, Nilesh Tripuraneni, Abhishek Shetty, and Alex Wei for helpful comments on the paper. 
This work is in part supported by National Science Foundation under grant CCF-2145898, the Mathematical Data Science program of the Office of Naval Research under grant number N00014-18-1-2764, the Vannevar Bush Faculty Fellowship program under grant number N00014-21-1-2941, a C3.AI Digital Transformation Institute grant, the Paul and Daisy Soros Fellowship,  and the Open Phil AI Fellowship. 

\bibliographystyle{plainnat}
\bibliography{bibliography.bib}

\begin{thebibliography}{34}
\providecommand{\natexlab}[1]{#1}
\providecommand{\url}[1]{\texttt{#1}}
\expandafter\ifx\csname urlstyle\endcsname\relax
  \providecommand{\doi}[1]{doi: #1}\else
  \providecommand{\doi}{doi: \begingroup \urlstyle{rm}\Url}\fi

\bibitem[Aridor et~al.(2020)Aridor, Mansour, Slivkins, and Wu]{AMSW20}
Guy Aridor, Yishay Mansour, Aleksandrs Slivkins, and Zhiwei~Steven Wu.
\newblock Competing bandits: The perils of exploration under competition.
\newblock \emph{CoRR}, abs/2007.10144, 2020.

\bibitem[Baye and Kovenock(2008)]{bertrand}
Michael~R. Baye and Dan Kovenock.
\newblock \emph{Bertrand Competition}.
\newblock Palgrave Macmillan UK, London, 2008.

\bibitem[Ben{-}Porat and Tennenholtz(2017)]{BT17}
Omer Ben{-}Porat and Moshe Tennenholtz.
\newblock Best response regression.
\newblock In \emph{Advances in Neural Information Processing Systems 30: Annual
  Conference on Neural Information Processing Systems (NIPS)}, pages
  1499--1508, 2017.

\bibitem[Ben{-}Porat and Tennenholtz(2019)]{BT19}
Omer Ben{-}Porat and Moshe Tennenholtz.
\newblock Regression equilibrium.
\newblock In \emph{Proceedings of the 2019 {ACM} Conference on Economics and
  Computation (EC)}, pages 173--191, 2019.

\bibitem[Bergemann and Välimäki(2000)]{BV2000}
Dirk Bergemann and Juuso Välimäki.
\newblock Experimentation in markets.
\newblock \emph{The Review of Economic Studies}, 67\penalty0 (2):\penalty0
  213--234, 2000.

\bibitem[Bolton and Harris(1999)]{BH98}
Patrick Bolton and Christopher Harris.
\newblock Strategic experimentation.
\newblock \emph{Econometrica}, 67\penalty0 (2):\penalty0 349--374, 1999.

\bibitem[Bolton and Harris(2000{\natexlab{a}})]{BH00}
Patrick Bolton and Christopher Harris.
\newblock Strategic experimentation: the undiscounted case.
\newblock \emph{Incentives, Organizations and Public Economics – Papers in
  Honour of Sir James Mirrlees}, pages 53--68, 2000{\natexlab{a}}.

\bibitem[Bolton and Harris(2000{\natexlab{b}})]{BHSimple}
Patrick Bolton and Christopher Harris.
\newblock Strategic experimentation : The undiscounted case.
\newblock 2000{\natexlab{b}}.
\newblock URL
  \url{https://www0.gsb.columbia.edu/faculty/pbolton/PDFS/strategi.pdf}.

\bibitem[Br{\^{a}}nzei and Peres(2021)]{BP21}
Simina Br{\^{a}}nzei and Yuval Peres.
\newblock Multiplayer bandit learning, from competition to cooperation.
\newblock In Mikhail Belkin and Samory Kpotufe, editors, \emph{Conference on
  Learning Theory, {COLT} 2021}, volume 134 of \emph{Proceedings of Machine
  Learning Research}, pages 679--723, 2021.

\bibitem[Cheng et~al.(2004)Cheng, Reeves, Vorobeychik, and Wellman]{C04}
Shih{-}fen Cheng, Daniel~M. Reeves, Yevgeniy Vorobeychik, and Michael~P.
  Wellman.
\newblock Notes on equilibria in symmetric games.
\newblock In \emph{Proceedings of the 6th International Workshop on Game
  Theoretic and Decision Theoretic Agents (GTDT)}, pages 71--78, 2004.

\bibitem[Cr{\'e}mer et~al.(2019)Cr{\'e}mer, de~Montjoye, and
  Schweitzer]{cremer2019competition}
Jacques Cr{\'e}mer, Yves-Alexandre de~Montjoye, and Heike Schweitzer.
\newblock \emph{Competition Policy for the digital era : Final report}.
\newblock Publications Office of the European Union, 2019.

\bibitem[Dean et~al.(2022)Dean, Curmei, Ratliff, Morgenstern, and
  Fazel]{DCRMF22}
Sarah Dean, Mihaela Curmei, Lillian~J. Ratliff, Jamie Morgenstern, and Maryam
  Fazel.
\newblock Multi-learner risk reduction under endogenous participation dynamics.
\newblock \emph{CoRR}, abs/2206.02667, 2022.

\bibitem[Frazier et~al.(2014)Frazier, Kempe, Kleinberg, and Kleinberg]{FKKK14}
Peter~I. Frazier, David Kempe, Jon~M. Kleinberg, and Robert Kleinberg.
\newblock Incentivizing exploration.
\newblock In Moshe Babaioff, Vincent Conitzer, and David~A. Easley, editors,
  \emph{{ACM} Conference on Economics and Computation, {EC} '14}, pages 5--22,
  2014.

\bibitem[Gellhorn(1975)]{dukelaw}
Ernest Gellhorn.
\newblock An introduction to antitrust economics.
\newblock \emph{Duke Law Journal}, 1975\penalty0 (1):\penalty0 1--43, 1975.

\bibitem[Ginart et~al.(2021)Ginart, Zhang, Kwon, and Zou]{GZKZ21}
Tony Ginart, Eva Zhang, Yongchan Kwon, and James Zou.
\newblock Competing {AI:} how does competition feedback affect machine
  learning?
\newblock In Arindam Banerjee and Kenji Fukumizu, editors, \emph{The 24th
  International Conference on Artificial Intelligence and Statistics
  (AISTATS)}, volume 130 of \emph{Proceedings of Machine Learning Research},
  pages 1693--1701, 2021.

\bibitem[Gittins(1979)]{G79}
J.~C. Gittins.
\newblock Bandit processes and dynamic allocation indices.
\newblock \emph{Journal of the Royal Statistical Society. Series B
  (Methodological)}, 41\penalty0 (2):\penalty0 148--177, 1979.

\bibitem[Gittins and Jones(1979)]{GJ79}
J.~C. Gittins and D.~M. Jones.
\newblock A dynamic allocation index for the discounted multiarmed bandit
  problem.
\newblock \emph{Biometrika}, 66\penalty0 (3):\penalty0 561--565, 1979.

\bibitem[Hardt et~al.(2022)Hardt, Jagadeesan, and
  Mendler{-}D{\"{u}}nner]{HJM22}
Moritz Hardt, Meena Jagadeesan, and Celestine Mendler{-}D{\"{u}}nner.
\newblock Performative power.
\newblock \emph{CoRR}, abs/2203.17232, 2022.

\bibitem[Hörner and Skrzypacz(2017)]{HS17}
Johannes Hörner and Andrzej Skrzypacz.
\newblock \emph{Learning, Experimentation, and Information Design}, volume~1 of
  \emph{Econometric Society Monographs}, page 63–98.
\newblock Cambridge University Press, 2017.

\bibitem[Immorlica et~al.(2011)Immorlica, Kalai, Lucier, Moitra, Postlewaite,
  and Tennenholtz]{IKLMPT11}
Nicole Immorlica, Adam~Tauman Kalai, Brendan Lucier, Ankur Moitra, Andrew
  Postlewaite, and Moshe Tennenholtz.
\newblock Dueling algorithms.
\newblock In \emph{Proceedings of the 43rd {ACM} Symposium on Theory of
  Computing (STOC)}, pages 215--224, 2011.

\bibitem[Keller et~al.(2005)Keller, Rady, and Cripps]{KRC15}
Godfrey Keller, Sven Rady, and Martin Cripps.
\newblock Strategic experimentation with exponential bandits.
\newblock \emph{Econometrica}, 73\penalty0 (1):\penalty0 39--68, 2005.

\bibitem[Kremer et~al.(2013)Kremer, Mansour, and Perry]{KMP13}
Ilan Kremer, Yishay Mansour, and Motty Perry.
\newblock Implementing the "wisdom of the crowd".
\newblock In Michael~J. Kearns, R.~Preston McAfee, and {\'{E}}va Tardos,
  editors, \emph{Proceedings of the fourteenth {ACM} Conference on Electronic
  Commerce, {EC} 2013}, pages 605--606, 2013.

\bibitem[Kwon et~al.(2022)Kwon, Ginart, and Zou]{KGZ22}
Yongchan Kwon, Antonio Ginart, and James Zou.
\newblock Competition over data: how does data purchase affect users?
\newblock \emph{CoRR}, abs/2201.10774, 2022.

\bibitem[Lai and Robbins(1985)]{LR85}
T.L Lai and Herbert Robbins.
\newblock Asymptotically efficient adaptive allocation rules.
\newblock \emph{Advances in Applied Mathematics}, 6\penalty0 (1):\penalty0
  4--22, 1985.

\bibitem[Lerner(1934)]{lerner}
A.~P. Lerner.
\newblock The concept of monopoly and the measurement of monopoly power.
\newblock \emph{The Review of Economic Studies}, 1\penalty0 (3):\penalty0
  157--175, 1934.

\bibitem[Mansour et~al.(2015)Mansour, Slivkins, and Syrgkanis]{MSS15}
Yishay Mansour, Aleksandrs Slivkins, and Vasilis Syrgkanis.
\newblock Bayesian incentive-compatible bandit exploration.
\newblock In \emph{Proceedings of the Sixteenth {ACM} Conference on Economics
  and Computation, {EC} '15}, pages 565--582, 2015.

\bibitem[Mansour et~al.(2016)Mansour, Slivkins, Syrgkanis, and Wu]{MSSW16}
Yishay Mansour, Aleksandrs Slivkins, Vasilis Syrgkanis, and Zhiwei~Steven Wu.
\newblock Bayesian exploration: Incentivizing exploration in bayesian games.
\newblock In \emph{Proceedings of the 2016 {ACM} Conference on Economics and
  Computation (EC)}, page 661, 2016.

\bibitem[Prüfer and Schottmüller(2021)]{PS21}
Jens Prüfer and Christoph Schottmüller.
\newblock Competing with big data.
\newblock \emph{The Journal of Industrial Economics}, 69\penalty0 (4):\penalty0
  967--1008, 2021.

\bibitem[Rosenberg et~al.(2007)Rosenberg, Solan, and Vieille]{RSV07}
Dinah Rosenberg, Eilon Solan, and Nicolas Vieille.
\newblock Social learning in one-arm bandit problems.
\newblock \emph{Econometrica}, 75\penalty0 (6):\penalty0 1591--1611, 2007.

\bibitem[Rysman(2009)]{R09}
PMarc Rysman.
\newblock The economics of two-sided markets.
\newblock \emph{Journal of Economic Perspectives}, 23:\penalty0 125--144, 2009.

\bibitem[Sellke and Slivkins(2021)]{SS21}
Mark Sellke and Aleksandrs Slivkins.
\newblock The price of incentivizing exploration: {A} characterization via
  thompson sampling and sample complexity.
\newblock In \emph{{EC} '21: The 22nd {ACM} Conference on Economics and
  Computation}, pages 795--796, 2021.

\bibitem[{Stigler Committee}(2019)]{stiger19}
{Stigler Committee}.
\newblock Final report: Stigler committee on digital platforms.
\newblock available at \href{
  https://www.chicagobooth.edu/-/media/research/stigler/pdfs/digital-platforms---committee-report---stigler-center.pdf,},
  September 2019.

\bibitem[Thompson(1933)]{T33}
William~R. Thompson.
\newblock On the likelihood that one unknown probability exceeds another in
  view of the evidence of two samples.
\newblock \emph{Biometrika}, 25\penalty0 (3/4):\penalty0 285--294, 1933.

\bibitem[Weyl and White(2014)]{WW14}
Glen Weyl and Alexander White.
\newblock Let the right ‘one’ win: Policy lessons from the new economics of
  platforms.
\newblock \emph{Competition Policy International}, 12:\penalty0 29--51, 2014.

\end{thebibliography}

\newpage

\appendix

\section{Example bandit setups}

We consider the following risky-safe arm setups in our results. The first setup is a risky-safe arm bandit setup in continuous time, where user rewards are undiscounted. 
\begin{setup}[Undiscounted, continuous time risky-safe arm setup]
\label{setup:continuoustimeundiscounted}
Consider a risky-safe arm bandit setup where the algorithm class is  
\[\AAllContinuous := \left\{A_f \mid f: [0,1] \rightarrow [0,1] \text{ is measurable}, f(0) = 0, f(1) = 1, f \text { is continuous  at $0$ and $1$ } \right\}.\] The bandit setup is in continuous time: if a platform chooses algorithm $A \in \AAllContinuous$, then at a given time step with information state $\mathcal{I}$, the user of that platform devotes a $\mathbb{P}[A(\mathcal{I}) = 1]$ fraction of the time step to the risky arm and the remainder of the time step to the safe arm. Let the prior be initialized so $p_0 := p(\mathcal{I}) = \mathbb{P}_{X \sim \DPrior_1}[X = h] \in (0,1)$. Let the rewards be such that the full-information payoff $hp_0 + s(1-p_0) = 0$. Let the background information quality be $\sigma_b < \infty$. Let the time horizon $T = \infty$ be infinite, and suppose the user utility is undiscounted.\footnote{Formally, this means that the user utility is the limit $\lim_{T \rightarrow \infty} T \cdot \mathbb{E}\left[\frac{1}{T} \int d\pi(t)\right]$ as the time horizon goes to $\infty$, or alternatively the limit $\lim_{\beta \rightarrow 0} T \cdot \mathbb{E}\left[\int e^{-\beta t} d\pi(t)\right]$ as the discount factor vanishes. See \citet{BH00} for a justification that these limits are well-defined.} 
\end{setup}
 The next setup is again a risky-safe arm bandit setup in continuous time, but this time with discounted rewards. 
\begin{setup}[Discounted, continuous time risky-safe arm setup]
\label{setup:continuoustimediscounted}
Consider a risky-safe arm bandit setup where the algorithm class is  $\AAllContinuous$. The bandit setup is in continuous time: if a platform chooses algorithm $A \in \AAllContinuous$, then at a given time step with information state $\mathcal{I}$, the user of that platform devotes a $\mathbb{P}[A(\mathcal{I}) = 1]$ fraction of the time step to the risky arm and the remainder of the time step to the safe arm. Let the high reward $h$ be $1$, the low reward $l$ be $0$, and let the prior be initialized to some $p(\mathcal{I}) \ge \mathbb{P}_{X \sim \DPrior_1}[X = 1] > s$ where $s$ is the safe arm reward. Let the time horizon $T = \infty$ be infinite, suppose that there is no background information $\sigma_b = \infty$, and suppose the user utility is discounted with discount factor $\beta \in (0, \infty)$. 
\end{setup}
Finally, we consider another discounted risky-safe bandit setup, but this time with discrete time and finite time horizon. 
\begin{setup}[Discrete, risky-safe arm setup]
\label{setup:discreteriskysafe}
Consider a risky-safe arm bandit setup where the algorithm class is $\mathcal{A} = \left\{A_{f^{TS}_{\epsilon}} \mid \epsilon \in [0,1]\right\}$, where $A_{f^{TS}_{\epsilon}}$ denotes the $\epsilon$-Thompson sampling algorithm given by $f^TS_{\epsilon}(p) = \epsilon + (1-\epsilon)p$. The bandit setup is in discrete time: if a platform chooses algorithm $A \in \mathcal{A}$, then at a given time step with information state $\mathcal{I}$, the user of that platform chooses the risky arm with probability $\mathbb{P}[A(\mathcal{I}) = 1]$ and the safe arm with probability  $\mathbb{P}[A(\mathcal{I}) = 0]$. Let the time horizon $T < \infty$ be finite, suppose that the user utility is discounted with discount factor $\beta \in (0, 1]$, that there is no background information $\sigma_ b =0$, and let the prior be initialized to $p(\mathcal{I}) \in (0,1)$
\end{setup}

\section{Further details about the model choice}
We examine two aspects our model---the choice of equilibrium set $\mathcal{E}$ and the action space of users---in greater detail. 

\subsection{What would change if users can play mixed strategies?}\label{sec:mixed}
Suppose that $\mathcal{E}_{A_1, A_2}$ were defined to be the set of \textit{all} equilibria for the users, rather than only pure strategy equilibria. The main difference is that all users might no longer choose the same platform at equilibrium, which would change the nature of the set $\mathcal{E}_{A_1, A_2}$. In particular, even when both platforms choose the same algorithm $A$, there is a symmetric mixed equilibrium where all users randomize equally between the two platforms. At this mixed equilibrium, the utility of the users is $\mathbb{E}_{X \sim Bin(N, 1/2)} [R_A(X)]$, since the number of users at each platform would follows a binomial distribution. This quantity might be substantially lower than $R_A(N)$   depending on the nature of the bandit algorithms. As a result, the user quality level $Q(A, A)$, which is measured by the \textit{worst} equilibrium for the users in $\mathcal{E}$, could be substantially lower than $R_A(N)$. Moreover, the condition for $(A, A)$ to be an equilibrium for the platforms would still be that $R_A(N) \ge \max_{A'} R_{A'}(1)$, so there could exist a platform equilibria with user quality level much lower than $\max_{A'} R_{A'}(1)$. Intuitively, the introduction of mixtures corresponds to users no longer coordinating between their choices of platforms----this leads to no single platform accumulating all of the data, thus lowering user utility.

\subsection{What would change if users could change platforms at each round?}\label{sec:users}

Our model assumes that users choose a platform at the beginning of the game which they participate on for the duration of the game. In this section, we examine this assumption in greater detail, informally exploring what would change if the users could switch platforms.

First, we provide intuition that in the shared data setting, there would be no change in the structure of the equilibrium as long as the equilibrium class $\mathcal{A}$ is closed under mixtures (i.e. if $A_1, A_2 \in \mathcal{A}$, then the algorithm that plays $A_1$ with probability $p_1$ and $A_2$ with probability $p_2$ must be in $\mathcal{A}$). A natural model for users switching platforms would be that users see the public information state at every round and choose a platform based on this information state (and algorithms for the platforms). A user's strategy is thus a mapping from an information state $\mathcal{I}$ to $\left\{1,2\right\}$, and the platform would receive utility for a user depending on the fraction of time that they spend on that platform. Suppose that symmetric (mixed) equilibria for users are guaranteed to exist for any choice of platform algorithms, and we define the  platform's utility by the minimal number of (fractional) users that they receive at any symmetric mixed equilibrium. In this model, we again see that $(A,A)$ is a symmetric equilibrium for the platform if and only if $A$ is an symmetric pure strategy equilibrium in the game $G$ defined in Section \ref{sec:separate}. (To see this, note if $A$ is not a symmetric pure strategy equilibrium, then the platform can achieve higher utility by choosing $A'$ that is a deviation for a player in the game $G$. If $(A,A)$ is a symmetric pure strategy equilibrium, then ). Thus, the alignment results will remain the same.

In the separate data setting, even defining a model where users can switch platforms is more subtle since it is unclear how the information state of the users should be defined. One possibility would be that each user keeps track of their own information state based on the rewards that they observe. Studying the resulting equilibria would require reasoning about the evolution of user information states and furthermore may not capture practical settings where users see the information of other users. Given these challenges, we defer the analysis of users switching platforms in the case of separate data to future work.

\section{Proof of Theorem \ref{thm:idealized}}\label{appendix:proofidealized}

We prove Theorem \ref{thm:idealized}.
\begin{proof}[Proof of Theorem \ref{thm:idealized}]
We split into two cases: (1) either $R_{A_1}(1) = \max_{A'} R_{A'}(1)$ or $R_{A_2}(1) = \max_{A'} R_{A'}(1)$, and (2) $R_{A_1}(1) < \max_{A'} R_{A'}(1)$ or $R_{A_2}(1) < \max_{A'} R_{A'}(1)$. 

\paragraph{Case 1: $R_{A_1}(1) = \max_{A'} R_{A'}(1)$ or $R_{A_2}(1) = \max_{A'} R_{A'}(1)$.} We show that $(A_1, A_2)$ is an equilibrium. 

Suppose first that $R_{A_1}(1) = \max_{A'} R_{A'}(1)$ and $R_{A_2}(1) = \max_{A'} R_{A'}(1)$. We see that the strategies $\vectorfont{p} = [1]$ and $\vectorfont{p} = [2]$, where the user chooses platform 1, is in the set of equilibria $\mathcal{E}_{A_1, A_2}$. This means that $v_1(A_1; A_2) = v_1(A_2; A_1) = 0$. Suppose that platform 1 chooses another algorithm $A''$. Since $R_{A''}(1) \le \max_{A'} R_{A'}(1)$, we see that $\vectorfont{p} = [2, \ldots, 2]$ is still an equilibrium. Thus, $v_1(A'; A_2) = 0$. This implies that $A_1$ is a best response for platform 1, and an analogous argument shows $A_2$ is a best response for platform 2. When the platforms choose $(A_1, A_2)$, at either of the user equilibria $\vectorfont{p} = [1]$ or $\vectorfont{p} = [2]$, the user utility is $\max_{A'} R_{A'}(1)$. Thus $Q(A_1, A_2) = \max_{A'} R_{A'}(1)$. 

Now, suppose that exactly one of $R_{A_1}(1) = \max_{A'} R_{A'}(1)$ and $R_{A_2}(1) = \max_{A'} R_{A'}(1)$ holds. 
WLOG, suppose $R_{A_1}(1) = \max_{A'} R_{A'}(1)$. Since $R_{A_2}(1) < \max_{A'} R_{A'}(1)$, we see that $[2] \not\in \mathcal{E}_{A_1, A_2}$. On the other hand, $[1] \in \mathcal{E}_{A_1, A_2}$. This means that $v_1(A_1; A_2) = 1$ and $v_2(A_2; A_1) = 0$. Thus, $A_1$ is a best response for platform 1 trivially because $v_1(A'; A_2) \le 1$ for all $A'$ by definition. We next show that $A_2$ is a best response for platform 2. If the platform 2 plays another algorithm $A'$, then $[1]$ will still be in equilibrium for the users since platform 1 offers the maximum possible utility. Thus, $v_1(A'; A) = 0$, and $A_2$ is a best response for platform 2. When the platforms choose $(A_1, A_2)$, the only user equilibria is $\vectorfont{p} = [1]$ where the user utility is $\max_{A'} R_{A'}(1)$. Thus $Q(A_1, A_2) = \max_{A'} R_{A'}(1)$. 

\paragraph{Case 2: $R_{A_1}(1) < \max_{A'} R_{A'}(1)$ or $R_{A_2}(1) < \max_{A'} R_{A'}(1)$.} It suffices to show that $(A_1, A_2)$ is not an equilibrium. WLOG, suppose that
$R_{A_2}(1) \le R_{A_1}(1)$. We see that $[1] \in \mathcal{E}_{A_1, A_2}$. Thus, $v_2(A_2; A_1) = 0$. However, if platform 2 switches to $A'' \in \argmax_{A' \in \mathcal{A}} R_{A'}(1)$, then $\mathcal{E}_{A_1, A''}$ is equal to $\left\{[2]\right\}$ and so $v_2(A'';A_1) = 1$. This means that $A_2$ is not a best response for platform 2, and thus $(A_1, A_2)$ is not an equilibrium.
\end{proof}

\section{Proofs for Section \ref{sec:separate}}\label{sec:proofsseparate}

In the proofs of Theorems \ref{thm:realizable} and \ref{thm:utilitysepinfo}, the key technical ingredient is that pure strategy equilibria for users take a simple form. In particular, under strict information monotonicity, we show that in every pure strategy equilibrium $p^* \in \mathcal{E}^{A_1, A_2}$, all of the users choose the same platform. 
\begin{lemma}
\label{lemma:pure}
Suppose that every algorithm $A \in \mathcal{A}$ is either strictly information monotonic or information constant (see Assumption \ref{assumption:IM}). For any choice of platform algorithms $A_1, A_2 \in \mathcal{A}$ such that at least one of $A_1$ and $A_2$ is strictly information monotonic, it holds that:
\[\mathcal{E}_{A_1, A_2} \subseteq \left\{[1, \ldots, 1], [2, \ldots, 2]\right\}.\]
\end{lemma}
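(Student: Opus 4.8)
\textbf{Proof plan for Lemma \ref{lemma:pure}.}
The plan is to show directly that any pure strategy profile in which the two platforms both have at least one user cannot be a Nash equilibrium, so the only surviving profiles are the two ``all users on one platform'' configurations. Fix platform algorithms $A_1, A_2$ with (without loss of generality) $A_1$ strictly information monotonic. Suppose toward a contradiction that $\mathbf{p} \in \mathcal{E}_{A_1, A_2}$ has $n_1 := |S_1| \ge 1$ users on platform 1 and $n_2 := |S_2| \ge 1$ users on platform 2, with $n_1 + n_2 = N$. The key observation is that because data repositories are separate, the utility a user on platform $j$ receives depends on $\mathbf{p}$ only through $n_j$; concretely, a user on platform $1$ gets $\USeparate(1; \mathbf{p}_{n_1 - 1}, A_1, A_1) = R_{A_1}(n_1)$ and a user on platform $2$ gets $R_{A_2}(n_2)$, by definition of the reward function and symmetry of the utility in user actions.

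Next I would compare each platform's ``incumbent'' utility to the utility a user would get by deviating. If a user currently on platform $2$ deviates to platform $1$, platform $1$ then has $n_1 + 1$ users, so the deviator obtains $R_{A_1}(n_1 + 1)$; equilibrium requires $R_{A_2}(n_2) \ge R_{A_1}(n_1 + 1)$. Symmetrically, if a user on platform $1$ deviates to platform $2$, platform $2$ has $n_2 + 1$ users, so equilibrium requires $R_{A_1}(n_1) \ge R_{A_2}(n_2 + 1)$. Now I invoke information monotonicity: each $R_A(\cdot)$ is nondecreasing (strictly increasing if $A$ is strictly information monotonic, constant if $A$ is information constant), so $R_{A_2}(n_2+1) \ge R_{A_2}(n_2)$ and $R_{A_1}(n_1+1) \ge R_{A_1}(n_1)$. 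Chaining the two equilibrium inequalities with these monotonicity facts gives
\[
R_{A_1}(n_1) \ge R_{A_2}(n_2 + 1) \ge R_{A_2}(n_2) \ge R_{A_1}(n_1 + 1) \ge R_{A_1}(n_1),
\]
forcing every inequality to be an equality. In particular $R_{A_1}(n_1) = R_{A_1}(n_1 + 1)$, which contradicts strict information monotonicity of $A_1$ (here $1 \le n_1 \le N - 1$, so the strict-increase range applies). Hence no such mixed-occupancy profile is an equilibrium, and $\mathcal{E}_{A_1, A_2} \subseteq \{[1,\ldots,1],[2,\ldots,2]\}$.

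The one genuinely delicate point — and the step I would be most careful about — is justifying that a deviator's post-deviation utility is exactly $R_{A_1}(n_1+1)$ (resp. $R_{A_2}(n_2+1)$), i.e. that the utility function really does collapse to the reward function of the destination platform with its new headcount. This uses two facts already in place: the separate-data assumption (platform $1$'s information state evolves only from its own users' rewards, which are exchangeable), and the symmetry of $\U$ in user actions noted in Section \ref{sec:utility}, so that relabeling which specific users sit on platform $1$ does not change any user's utility. I would also handle the boundary bookkeeping: the argument needs $A_1$ to be the strictly monotonic one, but by symmetry if instead $A_2$ is the strictly monotonic algorithm the same chain runs with the roles swapped; and if $n_1 = N-1$ the deviation to platform $2$ still makes sense since $n_2 = 1 \ge 1$. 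Everything else is a routine application of Assumption \ref{assumption:IM}.
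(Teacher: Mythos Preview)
Your proof is correct and follows essentially the same approach as the paper's: assume a mixed-occupancy equilibrium, write down the two ``no profitable deviation'' inequalities, and chain them with information monotonicity to reach a contradiction. The only cosmetic difference is where the contradiction lands---the paper inserts the strict inequality $R_{A_1}(N_1) < R_{A_1}(N_1+1)$ into the chain and then contradicts monotonicity of $A_2$, whereas you close the chain into equalities and contradict strict monotonicity of $A_1$; these are equivalent.
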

\begin{proof}
WLOG, assume that $A_1$ is strictly information monotonic. Assume for sake of contradiction that the user strategy profile $[1, \ldots, 1, 2, \ldots, 2]$ (with $N_1 > 0$ users choosing platform 1 and $N_2 > 0$ users choosing platform 2) is in $\mathcal{E}_{A_1, A_2}$. Since $[1, \ldots, 1, 2, \ldots, 2]$ is an equilibrium, a user choosing platform 1 not want to switch to platform 2. The utility that they currently receive is $R_{A_1}(N_1)$ and the utility that they would receive from switching is  $R_{A_2}(N_2 + 1)$, so this means:
\[R_{A_2}(N_2 + 1) \le R_{A_1}(N_1). \] 
Similarly, since $[1, \ldots, 1, 2, \ldots, 2]$ is an equilibrium, a user choosing platform 2 not want to switch to platform 1. The utility that they currently receive is $R_{A_2}(N_2)$ and the utility that they would receive from switching is  $R_{A_1}(N_1 + 1)$, so this means:
\[R_{A_1}(N_1 + 1) \le R_{A_2}(N_2). \] 
Putting this all together, we see that:
\[R_{A_2}(N_2 + 1) \le R_{A_1}(N_1) < R_{A_1}(N_1 + 1) \le R_{A_2}(N_2), \]
which is a contradiction since $A_2$ is either strictly information monotonic or information constant. 
\end{proof}

\subsection{Proof of Theorem \ref{thm:realizable}}
We prove Theorem \ref{thm:realizable}.
\begin{proof}[Proof of Theorem \ref{thm:realizable}]
Since the algorithm class $\mathcal{A}$ is utility rich (Assumption \ref{assumption:richness}), we know that for any $\alpha \in [\max_{A' \in \mathcal{A}} R_{A'}(1), \max_{A' \in \mathcal{A}} R_{A'}(N)]$, there exists an algorithm $A^* \in \mathcal{A}$ such that $R_{A^*}(N) = \alpha$. We claim that $(A^*, A^*)$ is an equilibrium and we show that $Q(A^*, A^*) = \alpha$. 

To show that $(A^*, A^*)$ is an equilibrium, suppose that platform 1 chooses any algorithm $A \in \mathcal{A}$. We claim that $[2, 2, \ldots, 2] \in \mathcal{E}_{A_1, A_2}$. To see this, notice that the utility that a user receives from choosing platform 2 is $R_{A^*}(N)$, and the utility that they would receive if they deviate to platform is $R_{A^*}(1)$. By definition, we see that:
\[R_{A^*}(N) = \alpha \ge \max_{A' \in \mathcal{A}} R_{A'}(1) \ge R_{A^*}(1) ,\]
so choosing platform 2 is a best response for the user. 
This means that $v_1(A; A_2) = 0$ for any algorithm $A \in \mathcal{A}$. This in particular means that $A^*$ is a best response for platform 1. By an analogous argument, we see that $A^*$ is a best response for platform 2, and so $(A^*, A^*)$ is an equilibrium. 

We next show that the user quality level is $\alpha$. It suffices to  examine the set of pure strategy equilibria for users when the platforms play $(A^*, A^*)$. By assumption, either $A^*$ is information constant or $A^*$ is strictly information monotonic. If $A^*$ is information constant, then $R_{A^*}(n)$ is constant in $n$. This means that any pure strategy equilibrium $\vectorfont{p}$ generates utility $\alpha$ for all users, so $Q(A^*, A^*) = \alpha$ as desired. If $A^*$ is strictly information monotonic, then we apply Lemma \ref{lemma:pure} to see that $\mathcal{E}_{A^*, A^*} \subseteq \left\{[1, \ldots, 1], [2, \ldots, 2]\right\}$. In fact, since the platforms play the same algorithm, we see that 
\[\mathcal{E}_{A^*, A^*} = \left\{[1, \ldots, 1], [2, \ldots, 2]\right\}.\]
The user utility at these equilibria is $R_{A^*}(N) = \alpha$, so $Q(A^*, A^*) = \alpha$ as desired. 
\end{proof}

\subsection{Proof of Theorem \ref{thm:utilitysepinfo}}

We prove Theorem \ref{thm:utilitysepinfo}. 
\begin{proof}[Proof of Theorem \ref{thm:utilitysepinfo}]
First, we show the upper bound of $Q(A_1, A_2) \le \max_{A' \in \mathcal{A}} R_{A'}(N)$. In fact, we show this upper bound for any selection of user actions $\vectorfont{p}$, user $1 \le i \le N$, and platform algorithms $(A_1, A_2)$. If a user chooses platform 1 and $0 \le n \le N - 1$ other users also choose platform 1, then the user's utility is $R_{A_1}(n+1)$ which by Assumption \ref{assumption:IM} can be upper bounded by $R_{A_1}(N) \le \max_{A' \in \mathcal{A}} R_{A'}(N)$. Similarly, if a user chooses platform 2, their utility can also be upper bounded by $R_{A_2}(N-n) \le R_{A_2}(N) \le \max_{A' \in \mathcal{A}} R_{A'}(N)$. This establishes the desired upper bound. 

The remainder of the proof boils down to lower bounding $Q_{A_1, A_2}$ at any equilibrium $(A_1, A_2)$ by $\max_{A' \in \mathcal{A}} R_{A'}(1)$. We divide into two cases: (1) $R_{A_1}(N) < \max_{A' \in \mathcal{A}} R_{A'}(1)$ and $R_{A_2}(N)) < \max_{A' \in \mathcal{A}} R_{A'}(1)$, (2) at least one of $R_{A_1}(N) \ge \max_{A' \in \mathcal{A}} R_{A'}(1)$ and $R_{A_2}(N) \ge \max_{A' \in \mathcal{A}} R_{A'}(1)$ holds. 

\paragraph{Case 1: $R_{A_1}(N) < \max_{A' \in \mathcal{A}} R_{A'}(1)$ and $R_{A_2}(N)) < \max_{A' \in \mathcal{A}} R_{A'}(1)$.} We show that $(A_1,A_2)$ is not an equilibrium. WLOG suppose that $R_{A_2}(N) \le R_{A_1}(N)$. 
First, we claim that $v_2(A_2; A_1) = 0$. It suffices to show that $[1, \ldots, 1] \in \mathcal{E}_{A_1, A_2}$. To see this, notice that the utility that a user derives from choosing platform 1 is $R_{A_1}(N)$, while the utility that a user would derive from choosing platform 2 is $R_{A_2}(1)$. Moreover, we have that: 
\[R_{A_1}(N) \ge R_{A_2}(N) \ge R_{A_2}(1), \]
since $A_2$ is either strictly information monotonic or information constant by assumption. This means that choosing platform 1 is a best response for the user, so $[1, \ldots, 1] \in \mathcal{E}_{A_1, A_2}$.  

Next, we claim that $A_2$ is not a best response for platform 2. It suffices to show that for $A'' \in \argmax_{A' \in \mathcal{A}} R_{A'}(1)$, platform 2 receives utility $v_2(A''; A_1) > v_2(A_2; A_1) = 0$. It suffices to show that $[1, \ldots, 1] \not\in \mathcal{E}_{A_1, A''}$. To see this, notice that the utility that a user derives from choosing platform 1 is $R_{A_1}(N)$, while the utility that a user would derive from choosing platform 2 is $R_{A''}(1)$. Moreover, we have that: 
\[R_{A_1}(N) < \max_{A' \in \mathcal{A}} R_{A'}(1) = R_{A''}(1), \]
so choosing platform 1 is not a best response for the user. Thus, $[1, \ldots, 1] \not\in \mathcal{E}_{A_1, A''}$ and 
$v_2(A''; A_1) > 0$. 

This means that $(A_1, A_2)$ is not an equilibrium as desired. 

\paragraph{Case 2: at least one of $R_{A_1}(N) \ge \max_{A' \in \mathcal{A}} R_{A'}(1)$ and $R_{A_2}(N) \ge \max_{A' \in \mathcal{A}} R_{A'}(1)$ holds.} WLOG suppose that $R_{A_1}(N) \ge \max_{A' \in \mathcal{A}} R_{A'}(1)$. We break into 2 cases: (1) at least one of $A_1$ and $A_2$ is strictly information monotonic, and (2) both $A_1$ and $A_2$ are information constant.

\paragraph{Subcase 1: at least one of $A_1$ and $A_2$ is strictly information monotonic.} It suffices to show that at every equilibrium in $\mathcal{E}_{A_1, A_2}$, the user utility is at least $\max_{A' \in \mathcal{A}} R_{A'}(1)$. We can apply Lemma \ref{lemma:pure}, which tells us that $\mathcal{E}_{A_1, A_2} \subseteq \left\{[1, \ldots, 1],[2, \ldots, 2] \right\}$. 

At $[1, \ldots, 1]$, the user utility is $R_{A_1}(N) \ge \max_{A' \in \mathcal{A}} R_{A'}(1)$ as desired.  

Suppose now that $[2, \ldots, 2] \in \mathcal{E}_{A_1, A_2}$. The user utility at this equilibria is $R_{A_2}(N)$, so it suffices to show that $R_{A_2}(N) \ge \max_{A' \in \mathcal{A}} R_{A'}(1)$. Since $[2, \ldots, 2] \in \mathcal{E}_{A_1, A_2}$, we see that $v_1(A_1; A_2) = 0$. Thus, if platform 1 changed to $A \in \argmax_{A' \in \mathcal{A}} R_{A'}(1)$, it must hold that $v_1(A; A_2) \le v_1(A_1; A_2) = 0$ since $(A_1, A_2)$ is a platform equilibrium. This means that $[2, \ldots, 2] \in \mathcal{E}_{A, A_2}$. The utility that a user would derive from choosing platform 1 is $R_{A}(1)$, while the utility that a user derives from choosing platform 2 is $R_{A_2}(N)$. Since $[2, \ldots, 2]$ is an equilibrium, it must hold that 
\[R_{A_2}(N) \ge R_A(1) = \max_{A' \in \mathcal{A}} R_{A'}(1).\] 
as desired. 

\paragraph{Subcase 2: both $A_1$ and $A_2$ are information constant.} Regardless of the actions of other users, if a user chooses platform 1 they receive utility $R_{A_1}(N)$ and if a user chooses platform 2 they receive utility $R_{A_2}(N)$. Thus, if any user chooses platform 1 in equilibrium, then $R_{A_1}(N) \ge R_{A_2}(N)$; and if any user chooses platform 2 in equilibrium, then $R_{A_2}(N) \ge R_{A_1}(N)$. The quality level $Q(A_1, A_2)$ is thus $\max(R_{A_1}(N), R_{A_2}(N)) \ge \max_{A' \in \mathcal{A}} R_{A'}(1)$ as desired. 
\end{proof}

\section{Proofs for Section \ref{sec:shared}}\label{sec:proofsshared}

To analyze the equilibria in the shared data setting, we relate the equilibria of our game to the equilibria of an $N$-player game $G$ that is closely related to strategic experimentation \cite{BH98, BH00, BHSimple}. In Appendix \ref{appendix:prooflemma}, we formally establish the relationship between the equilibria in $G$ and the equilibria in our game. In Appendix \ref{appendix:strategicexperimentation}, we provide a recap of the results from strategic experimentation literature for the risky-safe arm bandit problem \citep{BH00, BH98}. In Appendices \ref{appendix:proofBHundiscounted}-\ref{appendix:utilityshared}, we prove our main results using these tools.

\subsection{Proof of Lemma \ref{lemma:utilitysharedinfo}}\label{appendix:prooflemma}

In Lemma \ref{lemma:utilitysharedinfo} (restated below), we show that the symmetric equilibria of our game are equivalent to the symmetric pure strategy equilibria of the following game. Let $G$ be an $N$ player game where each player chooses an algorithm in $\mathcal{A}$ within the same bandit problem setup as in our game. The players share an information state $\IS$ corresponding to the posterior distributions of the arms. At each time step, all of the $N$ users arrive at the platform, player $i$ pulls the arm drawn from $A_i(\IS)$, and the players all update $\IS$. The utility received by a player is given by their expected discounted cumulative reward. 
\utilityshared*
\begin{proof}
We first claim that $\mathcal{E}_{A, A}$ is equal to the set of all possible strategy profiles $\vectorfont{p}$. To see this, notice that the user utility is $R_{A}(N)$ regardless of their action or the actions of other users. Thus, \textit{any} strategy profile is in equilibrium. 

Suppose first that $A$ is a symmetric pure strategy equilibrium of the game $G$. Consider the solution $(A, A)$---since  $\mathcal{E}_{A, A}$ contains $[2, \ldots, 2]$, we see that $v_1(A; A) = 0$. Suppose that platform 1 instead chooses $A' \neq A$. We claim that $[2, \ldots, 2] \in \mathcal{E}_{A', A}$. To see this, when all of the other users choose $A$, then no user wishes to deviate to any other algorithm, including $A'$, since $A$ is a symmetric pure strategy equilibrium in $G$. Thus, $v_1(A'; A) = 0 = v_1(A; A$, and $A$ is a best response for platform 1 as desired. An analogous argument applies to platform 1, thus showing that $(A,A)$ is an equilibrium. 

To prove the converse, suppose that $(A,A)$ is in equilibrium. Since  $\mathcal{E}_{A, A}$ contains $[2, \ldots, 2]$, we see that $v_1(A; A) = 0$. It suffices to show that for every $A' \in \mathcal{A}$, it holds that $v_1(A'; A) = 0 = v_1(A; A)$. Assume for sake of contradiction that $A$ is not a symmetric pure strategy equilibrium in the game $G$. Then there exists a deviation $A'$ for which a user achieves strictly higher utility than the algorithm $A$ in the game $G$. Suppose that platform 1 chooses $A'$. Then we see that $\mathcal{E}_{A', A}$ does \textit{not} contain $[2, \ldots, 2]$, since a user will wish to deviate to platform 1. This means that $v_1(A'; A) > 0$ which is a contradiction. 
\end{proof}

\subsection{Recap of strategic experimentation in the risky-safe arm bandit problem \cite{BH98, BH00, BHSimple}}\label{appendix:strategicexperimentation}

To analyze the game $G$, we leverage the literature on strategic experimentation for the risky-safe arm problem. We provide a high-level summary of results from \cite{BH98, BH00, BHSimple}, deferring the reader to \cite{BH98, BH00, BHSimple} for a full treatment.

\citet{BH98, BH00, BHSimple} study an infinite time-horizon, continuous-time game with $N$ users updating a common information source, focusing on the risky-safe arm bandit problem. The observed rewards $\DNoise$ are given by the mean reward with additive noise with variance $\sigma$. All $N$ users arrive at the platform at every time step and choose a \textit{fraction} of the time step to devote to the risky arm (see \citet{BH98} for a discussion of how this relates to mixed strategies). The players receive background information according to $N(\sqrt{\zeta} h, \sigma^2)$. (By rescaling, we can equivalently think of this as background noise from $N(h, \sigma_b^2)$ where $\sigma_b = \sigma / \sqrt{\zeta}$.) Each player's utility is defined by the (discounted) rewards of the arms that they pull. 

\citet{BH98, BH00, BHSimple} study the Markov equilibria of the resulting game, so user strategies correspond to mappings from the posterior probability that the risky arm has high reward to the fraction of the time step devoted to the risky arm. We denote the user strategies by measurable functions $f:[0,1] \rightarrow [0,1]$. The function $f$ is a symmetric pure strategy equilibrium if for any choice of prior, $f$ is optimal for each user in the game with that choice of prior.

\paragraph{Undiscounted setting.} The cleanest setting is the case of undiscounted rewards. To make the undiscounted user utility over an infinite time-horizon well-defined, the full-information payoff is subtracted from the user utilities. For simplicity, let's focus on the setting where the full information payoff is $0$. In this case, the undiscounted user utility is equal to $\mathbb{E}\left[\int d\pi(t)\right]$, where $d \pi(t)$ denotes the payoff received by the user (see \citet{BHSimple} for a justification that this expectation is well-defined). 

With undiscounted rewards, the user utility achieved by a set of strategies $f_1, \ldots, f_n$ permits a clean closed-form solution. 
\begin{lemma}[Informal restatement of results from \citet{BHSimple}]
\label{lemma:closedform}
Suppose that $N$ players choose the strategies $f_1, \ldots, f_N$ respectively. If the prior is $p$, then the utility $K(p; f_1, \ldots, f_n)$ of player 1 is equal to:
\[
  K(p; f_1, \ldots, f_N) = \int G(p,q) \frac{(1-f_1(q)) s + (q h + (1-q) l) f_1(q) - (hq + s(1-q)) }{\frac{\sigma^2}{\sigma_b^2} + \sum_{i=1}^N f_i(q)} dq   
\]
 where:
 \[
  G(p,q) = 
 \begin{cases}
\frac{2 \sigma^2 p}{(h-l)^2 q^2 (1-q)} &\text{ if  } p \le q \\
\frac{\sigma^2 (1-p)}{(h-l)^2q(1-q)^2} & \text{ if } p \ge q. 
 \end{cases}
 \]
\end{lemma}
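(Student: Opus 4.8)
\textbf{Proof plan for Lemma \ref{lemma:closedform}.}
The plan is to derive the closed-form expression by setting up and solving the relevant ordinary differential equation (ODE) that characterizes the value function in the undiscounted continuous-time setting, following the approach of \citet{BHSimple}. The key object is the value function $K(p; f_1, \ldots, f_N)$ of player $1$ as a function of the posterior belief $p$ that the risky arm is high-reward, given that all players use stationary (Markov) strategies $f_i$. Since the rewards are undiscounted, the relevant Bellman-type equation is not a discounted HJB equation but rather a second-order ODE obtained by subtracting the full-information payoff and taking the appropriate limit. First I would recall that the belief process $p_t$ follows a diffusion whose volatility depends on the total fraction of experimentation $\sum_i f_i(q)$ plus the background-information term $\sigma^2/\sigma_b^2$; the drift of $p_t$ is zero because beliefs are a martingale. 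The instantaneous flow payoff to player $1$ at belief $q$ is $(1 - f_1(q)) s + (q h + (1-q) l) f_1(q)$, and subtracting the full-information benchmark $h q + s(1-q)$ gives the numerator appearing in the integrand.

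The main steps, in order, are as follows. (1) Write down the dynamics of the posterior $p_t$: using standard filtering for the Gaussian signal model, the belief is a martingale and its quadratic variation per unit time is proportional to $p^2(1-p)^2 (h-l)^2 / \sigma^2$ scaled by the total information flow rate $\sum_i f_i(q) + \sigma^2/\sigma_b^2$. (2) Observe that in the undiscounted limit, the normalized value function $K$ satisfies the ODE obtained from $\tfrac12 (\text{volatility})^2 K''(p) + (\text{flow payoff relative to full info}) = 0$, i.e.\ the expected change in the continuation value plus the instantaneous net payoff vanishes. (3) Solve this ODE by integrating twice, using the boundary conditions that $K(0) = K(1) = 0$ (at the extreme beliefs there is no residual uncertainty, so the net-of-full-information payoff is zero and the value vanishes). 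The Green's function of the resulting Sturm-Liouville problem on $[0,1]$ with Dirichlet boundary conditions is exactly the stated $G(p,q)$, which has the characteristic piecewise form (linear in $p$ near $0$, linear in $1-p$ near $1$) with the $q$-dependent denominator $q^2(1-q)$ or $q(1-q)^2$ coming from the coefficients of the ODE. (4) Assemble the representation $K(p; f_1, \ldots, f_N) = \int G(p,q) \cdot [\text{net flow payoff}] / [\sigma^2/\sigma_b^2 + \sum_i f_i(q)] \, dq$, where the denominator enters because faster belief movement (more experimentation) compresses the ``time'' spent near each belief level $q$.

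I expect the main obstacle to be carefully justifying the undiscounted limit and the well-definedness of the integral: one must argue that $\lim_{\beta \to 0}$ of the discounted value functions exists and equals this expression, which requires controlling the behavior of the integrand near $q = 0$ and $q = 1$ (where $G(p,q)$ blows up but the net flow payoff vanishes at a compensating rate, since $(1-f_1)s + (qh+(1-q)l)f_1 - (hq + s(1-q))$ is $O(q)$ near $0$ and $O(1-q)$ near $1$ given $f_1$ continuous at the endpoints). This is precisely the technical content handled in \citet{BHSimple}, so the cleanest route is to invoke their analysis directly for the continuous-time undiscounted game and verify that our setup (with background information quality $\sigma_b$) matches theirs after the rescaling noted in the excerpt ($\sigma_b = \sigma/\sqrt{\zeta}$). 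The remaining verification — that the Green's function of the explicit Sturm-Liouville operator is the stated $G(p,q)$ — is a routine computation of solving a second-order linear ODE with Dirichlet data and matching at $p = q$, which I would not grind through here.
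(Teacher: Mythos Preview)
Your proposal is correct and follows essentially the same approach as the paper's proof sketch: both derive the martingale belief dynamics with variance $\left(\frac{\sigma^2}{\sigma_b^2}+\sum_i f_i(p)\right)\Phi(p)$ where $\Phi(p)=\left(\frac{p(1-p)(h-l)}{\sigma}\right)^2$, set up the second-order ODE $0 = (\text{net flow payoff}) + \tfrac12(\text{variance})K''(p)$, and then solve it directly. Your account is somewhat more detailed—making explicit the Dirichlet boundary conditions $K(0)=K(1)=0$, the Green's function interpretation of $G(p,q)$, and the integrability check near the endpoints—whereas the paper simply says ``we can directly solve this differential equation'' and defers to \citet{BH00}; but the substance is the same.
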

\begin{proof}[Proof sketch]
We provide a proof sketch, deferring the full proof to \citet{BH00}. For ease of notation, we let $K(p)$ denote $K(p; f_1, \ldots, f_N)$. The change in the posterior when the posterior is $p$ is mean $0$ and variance $\left(\frac{\sigma^2}{\sigma^2_b} + \sum_{i=1}^n f_i(p)\right) \Phi(p)$ where $\Phi(p) = \left(\frac{p(1-p)(h-l)}{\sigma} \right)^2$. The utility of player 1 is equal to the sum of the current payoff and the continuation payoff.  It can be shown that this is equal to:  
 \[K(p) + \left[(1-f_1(p)) s + (p h + (1-p) l) f_1(p) - (hp + s(1-p)) + \left(\frac{\sigma^2}{\sigma^2_b} +  \sum_{i=1}^n f_i(p) \right) \Phi(p) \frac{K''(p)}{2}\right] dt.\]
 This means that:
 \[0 = (1-f_1(p)) s + (p h + (1-p) l) f_1(p) - (hp + s(1-p))+ \left(\frac{\sigma^2}{\sigma^2_b} +  \sum_{i=1}^n f_i(p) \right) \Phi(p) \frac{K''(p)}{2}.\]
 We can directly solve this differential equation to obtain the desired expression. 
\end{proof}

\citet{BH00, BHSimple} characterize the symmetric equilibria, and  
we summarize below the relevant results for our analysis. 
\begin{theorem}[Informal restatement of results from \cite{BH00, BHSimple}]
\label{thm:restatementundiscounted}
In the undiscounted game, there is a unique symmetric equilibrium $f^*$. When the prior is in $(0,1)$, the equilibrium utility is strictly smaller than the optimal utility in the $N$ user team problem (where users coordinate) and strictly larger than the optimal utility in the $1$ user team problem. 
\end{theorem}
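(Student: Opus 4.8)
The statement restates results of \citet{BH00, BHSimple}, and the plan is to recover it by reducing everything to a \emph{pointwise} optimization over the posterior $q\in(0,1)$, using the closed form of Lemma~\ref{lemma:closedform}. Two features make this work: the utility there is an integral $\int G(p,q)\,[\cdots]\,dq$ with a weight $G(p,q)$ that is strictly positive for all $p,q\in(0,1)$, and the bracketed integrand at $q$ depends on a player's strategy only through its value $f_i(q)$. Writing $m(q)=qh+(1-q)l$, $A_q=m(q)-s$, $B_q=q(h-s)>0$, $D_0=\sigma^2/\sigma_b^2$, and $\theta(q)=B_q/|A_q|$ on $\{q:m(q)<s\}$ (where $\theta$ is continuous and strictly increasing, from $0$ up to $+\infty$ at the myopic threshold $q^\dagger:=(s-l)/(h-l)$ with $m(q^\dagger)=s$), the lemma's numerator is $(1-x)s+m(q)x-(hq+s(1-q))=A_q x-B_q$, so when the other players use a common strategy $g$ player~$1$'s integrand at $q$ is $G(p,q)\cdot\frac{A_q x-B_q}{D_0+x+(N-1)g(q)}$ as a function of $x=f_1(q)$, whose $x$-derivative has the \emph{constant} sign of $A_q\bigl(D_0+(N-1)g(q)\bigr)+B_q$. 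Hence best responses are bang-bang off an indifference locus, and (in the regime $m(q)<s$) the experimentation a player wants is non-increasing in the others' experimentation $(N-1)g(q)$ --- the free-riding effect.

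For existence and uniqueness of the symmetric equilibrium, impose $g=f$ and solve the fixed-point condition pointwise. Where $m(q)\ge s$, $A_q\ge0$ forces $f^*(q)=1$; where $m(q)<s$, the unique self-consistent value is $f^*(q)=0$ if $\theta(q)\le D_0$, $f^*(q)=1$ if $\theta(q)\ge D_0+(N-1)$, and $f^*(q)=\frac{\theta(q)-D_0}{N-1}\in(0,1)$ in between --- any other value in the middle regime would make player~$1$'s integrand strictly monotone in $x$, so $f(q)$ would fail to be a best response. One checks that this $f^*$ is continuous on $(0,1)$, equals $0$ near $0$ and $1$ near $1$, hence lies in $\AAllContinuous$; substituting back shows it is optimal for every prior, giving existence, while the pointwise analysis shows that every equilibrium agrees with $f^*$ off a null set of posteriors. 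I expect the uniqueness step to be the main obstacle: in the middle regime a player is indifferent over an entire subinterval of posteriors, so uniqueness is invisible in the first-order condition and must instead be forced by self-consistency of the equilibrium denominator $D_0+Nf^*(q)$; one must also confirm that Lemma~\ref{lemma:closedform} applies to all admissible (possibly interior-discontinuous) strategies, which is what legitimizes the pointwise reduction.

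For the two comparisons, first observe that the $n$-player team planner maximizes the total payoff, which pointwise in $q$ equals $G(p,q)\cdot\frac{A_q F-nB_q}{D_0+F}$ as a function of the total experimentation $F=\sum_i f_i(q)\in[0,n]$; its $F$-derivative has the constant sign of $A_qD_0+nB_q$, so the team optimum is symmetric and bang-bang, each player using $f^{(n)}(q)=1$ exactly when $m(q)\ge s$ or $\theta(q)>D_0/n$, with per-player value $K(p;f^{(n)},\dots,f^{(n)})$. \emph{Upper bound:} on the set $E:=\{q:\ D_0/N<\theta(q)<D_0+(N-1)\}$, which is a nonempty interval because $\theta$ maps $(0,q^\dagger)$ continuously and increasingly onto $(0,\infty)$, we have $f^*(q)<1=f^{(N)}(q)$ while $x\mapsto\frac{A_q x-B_q}{D_0+Nx}$ is strictly increasing there (as $A_qD_0+NB_q>0$); since $G(p,\cdot)>0$ on $(0,1)$ this yields $K(p;f^*,\dots,f^*)<K(p;f^{(N)},\dots,f^{(N)})$ for every $p\in(0,1)$. \emph{Lower bound:} bound the equilibrium payoff below by the payoff of a unilateral deviation to the single-player optimizer $f^{(1)}$ (valid since $f^*$ best-responds to $f^*$),
\[
K(p;f^*,\dots,f^*)\ \ge\ \int G(p,q)\,\frac{A_q f^{(1)}(q)-B_q}{D_0+f^{(1)}(q)+(N-1)f^*(q)}\,dq ;
\]
since the numerator $A_q f^{(1)}(q)-B_q$ is strictly negative on $(0,1)$ --- it is $-B_q$ where $f^{(1)}(q)=0$, and $(1-q)(l-s)$ where $f^{(1)}(q)=1$ --- enlarging the denominator by the nonnegative term $(N-1)f^*(q)$ strictly increases the integrand on the positive-measure set $\{f^*>0\}\supseteq\{q:m(q)\ge s\}$, so for $N\ge 2$ and $p\in(0,1)$ the equilibrium payoff strictly exceeds $K(p;f^{(1)})$, the one-player team optimum. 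Both bounds hold pointwise in $p\in(0,1)$, matching the statement.
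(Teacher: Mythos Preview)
Your proposal is correct and follows essentially the same approach as the paper's sketch: both reduce everything to the pointwise structure of the closed-form integral in Lemma~\ref{lemma:closedform}, characterize $f^*$ via the best-response/self-consistency condition, and obtain the lower bound by the deviation-to-$f^{(1)}$ argument combined with monotonicity of the integrand in the denominator. Your treatment is considerably more explicit than the paper's, which largely defers the details to \citet{BH00,BHSimple}; in particular you write down $f^*$ in closed form and verify $f^*\in\AAllContinuous$, whereas the paper only asserts the qualitative shape. The one place your argument genuinely differs is the upper bound: the paper observes that the $N$-team optimum is the \emph{unique} symmetric maximizer and is a cutoff strategy, while $f^*$ is not a cutoff, hence the values must differ; you instead compare the integrands directly on the interval $E=\{D_0/N<\theta(q)<D_0+(N-1)\}$ using the monotonicity of $x\mapsto (A_qx-B_q)/(D_0+Nx)$ there. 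Both arguments are valid, and yours has the minor advantage of being fully self-contained from Lemma~\ref{lemma:closedform} without invoking uniqueness of the team optimizer.
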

\begin{proof}[Proof sketch]
The existence of equilibrium boils down to characterizing the best response of a player given the experimentation levels of other players. This permits a full characterization of all equilibria (see \citet{BHSimple}) from which we can deduce that there is a unique symmetric equilibrium $f^*$. Moreover, $f^*(p)$ is $0$ if $p$ is sufficiently low, $1$ if $p$ is sufficiently high, and interpolates between $0$ and $1$ for intermediate values of $p$. 

To see that the utility when all users play $f^*$ is strictly smaller than that of the optimal utility in the $N$ user team problem, the calculations in \citet{BH00} show that there is a unique strategy $f_N$ that achieves the $N$-user team optimal utility and this is a cutoff strategy. On the other hand, $f^*$ is not a cutoff strategy, so it cannot achieve the team optimal utility. 

To see that the utility when all users play $f^*$ than that that of the $1$ user team problem, suppose that a user instead chooses the optimal 1 user strategy $f_1$. Using Lemma \ref{lemma:closedform} and the fact that $f^*$ involves nonzero experimentation at some posterior probabilities $p$, we see that the user would achieve strictly higher utility than in a single-user game where they play $f'$. This means that playing $f_1$ results in utility strictly higher than the single-user optimal utility. Since $f^*$ is a best response, this means that playing $f^*$ also results in strictly higher utility than the single-user optimal. 
\end{proof}

\paragraph{Discounted setting.} With discounted rewards, the analysis turns out to be much more challenging since the user utility does not permit a clean closed-form solution. Nonetheless, \citet{BH98} are able to prove properties about the equilibria.We summarize below the relevant results for our analysis.
\begin{theorem}[Informal restatement of results from \cite{BH98}]
\label{thm:restatement}
In the undiscounted game when there is no background information ($\sigma_b = \infty$), the following properties hold: 
\begin{enumerate}
    \item For any choice of discount factor, there is a \textit{unique} symmetric equilibrium $f^*$.
    \item The function $f^*$ is monotonic and strictly increasing from $0$ to $1$ in an interval $[c_{low}, c_{high}]$ where $c_{low} < c_{high} < s$. 
    \item When the prior is initialized above $c_{low}$, the posterior never reaches $c_{low}$ but converges to $c_{low}$ in the limit.
    \item When the prior is initialized above $c_{low}$, the equilibrium utility of $f^*$ is at least as large as the optimal utility in the 1 user problem. The  equilibrium utility of $f^*$ is strictly smaller than the optimal utility in the $N$ user team problem (where users coordinate). 
\end{enumerate}
\end{theorem}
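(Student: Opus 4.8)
The statement is an informal restatement of results in \citet{BH98}, so the plan is to read off properties 1--4 from their analysis of the discounted risky-safe arm strategic-experimentation game, after checking that the game $G$ of Lemma \ref{lemma:utilitysharedinfo}, instantiated in the risky-safe setup with $\sigma_b=\infty$ and algorithm class $\AAll$, is precisely their game (a strategy $A_f$ corresponds to their fraction-of-time-on-risky-arm control $f$). Given this identification, properties 1--3 follow directly. The key steps one recaps from \citet{BH98} are: (i) write the HJB equation for a single player's value given the aggregate experimentation intensity $m(\cdot)$ of the others, using that the posterior follows a diffusion with infinitesimal variance proportional to $\bigl(f_1(p)+m(p)\bigr)\Phi(p)$ where $\Phi(p)=\bigl(p(1-p)(h-l)/\sigma\bigr)^2$; (ii) note the flow payoff is affine in the player's own action, so the best response is bang-bang with a single threshold determined by $m(\cdot)$; (iii) impose symmetry and use value-matching and smooth-pasting to pin down $c_{low}<c_{high}<s$ and to establish uniqueness of $f^*$; (iv) analyze the belief diffusion on $(c_{low},1)$ to show that $c_{low}$ is an unattainable limiting boundary, which is property 3.

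For the upper bound in property 4, \citet{BH98} show that the $N$-user team (single-controller) problem is solved by a unique cutoff strategy $f_N$ with threshold strictly below $s$. Since $f^*$ strictly interpolates between $0$ and $1$ on $[c_{low},c_{high}]$, it is not a cutoff strategy, hence differs from $f_N$; uniqueness of the team optimizer then forces the common payoff under $f^*$ to be strictly below the team optimum. I would cite \citet{BH98} for both the team characterization and this strict inequality.

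The lower bound in property 4 is the one piece that needs a short self-contained argument, following the undiscounted template in the proof sketch of Theorem \ref{thm:restatementundiscounted}. Since $f^*$ is a best response for player $1$ against the others playing $f^*$, her equilibrium payoff $R_{A_{f^*}}(N)$ is at least the payoff from unilaterally deviating to the one-player-optimal strategy $f_1$ while the others keep playing $f^*$; so it suffices to show this deviation payoff is at least $R_{A_{f_1}}(1)=\max_{A'}R_{A'}(1)$. This is a ``free information cannot hurt'' claim: from player $1$'s perspective the other $N-1$ players contribute extra exogenous observations of the risky arm whenever $f^*>0$, which only refine the common belief. In the undiscounted case Lemma \ref{lemma:closedform} reduces this to a sign check, but with discounting there is no closed form, so I would argue via a coupling: the belief process in the deviation game is a time-changed version of the single-player belief process (more observations amount to a faster information clock), and the single-player value is a convex function of the posterior, so a faster clock weakly raises a convex functional of the belief path, yielding the desired domination.

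The main obstacle is precisely this coupling-plus-convexity step in the discounted regime: one must verify that the ``faster information clock'' intuition survives discounting (the information clock is not the physical time clock, so the discount factor interacts with the time change) and that the relevant single-player value is genuinely convex in the posterior. I expect to discharge this by invoking the monotone-comparative-statics and value-of-information results already established for this game in \citet{BH98} rather than re-deriving them, so that the overall argument remains a lightly augmented citation of their work.
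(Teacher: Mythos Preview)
The paper does not prove this theorem at all: it is presented purely as an informal summary of results established in \citet{BH98}, with no accompanying proof or proof sketch (contrast with Theorem~\ref{thm:restatementundiscounted}, which does receive a sketch). So your proposal, which tries to outline \emph{why} properties 1--4 hold, already goes further than the paper. Your identification of the game $G$ with the strategic-experimentation game of \citet{BH98}, and your recap of the HJB/smooth-pasting derivation for properties 1--3 and the ``team optimum is a cutoff but $f^*$ is not'' argument for the upper bound in property 4, are faithful to what \citet{BH98} actually do; a pure citation here would have matched the paper exactly.

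The one place you diverge in substance is the lower bound in property 4. The paper treats this as part of the \citet{BH98} citation and does not attempt a self-contained argument; elsewhere (Lemma~\ref{lemma:utilitygame}) the paper proves an analogous lower bound, but only under the \emph{assumption} of side information monotonicity, not by deriving it. Your proposed coupling-plus-convexity argument is essentially an attempt to verify side information monotonicity for this specific discounted setup, and the obstacle you flag is genuine: the ``faster information clock'' is not the discounting clock, and convexity of the single-player value in the posterior, while plausible, is not something you have on hand. Your fallback---invoking the value-of-information results already in \citet{BH98}---is exactly the right move and is what the paper implicitly does by citation. So the proposal is fine once you drop the attempt at a new self-contained argument for that step and simply defer to \citet{BH98}, as the paper does.
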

\noindent  We note that we define the discounted utility as $\mathbb{E}\left[\int e^{- \beta t} d\pi(t)\right]$, while \citet{BH98}  defines the discounted utility as $\mathbb{E}\left[\beta \int e^{- \beta t} d\pi(t)\right]$; however, this constant factor difference does not make a difference in any of the properties in Theorem \ref{thm:restatement}.

\subsection{Proof of Theorem \ref{thm:BHundiscounted}}\label{appendix:proofBHundiscounted}

\begin{proof}[Proof of Theorem \ref{thm:BHundiscounted}]

The key ingredient in the proof of Theorem \ref{thm:BHundiscounted} is Lemma \ref{lemma:utilitysharedinfo}. This result shows that $(A, A)$ is an equilibrium for the platforms if and only if $A$ is a symmetric pure strategy equilibrium in the game $G$. Moreover, we see that at any equilibrium in $\mathcal{E}_{A, A}$, the users achieve utility $R_{A}(N)$ since the information state is shared between platforms. Notice that $R_{A}(N)$ is also equal to the utility that users achieve in the game $G$ if they all choose $A$. It thus suffices to show that the equilibrium utility is a unique value $\alpha^* \in (\max_{A' \in \mathcal{A}} R_{A'}(1), \max_{A' \in \mathcal{A}} R_{A'}(N))$ at every symmetric pure strategy equilibrium in $G$. 

To analyze the game $G$, we leverage the results in the literature on strategic experimentation (see Theorem \ref{thm:restatementundiscounted}). We know by Theorem \ref{thm:restatementundiscounted} that there is a unique equilibrium $f^*$ in the undiscounted strategic experimentation game. However, the equilibrium concepts are slightly different because $f$ is a symmetric equilibrium in the undiscounted strategic experimentation game if for \textit{any} choice of prior, $f$ is optimal for each user in the game with that choice of prior; on the other hand, $f$ is a symmetric equilibrium in $G$ if for the specific choice of prior of the bandit setup, $f$ is optimal for each user in the game with that choice of prior. We nonetheless show that $f^*$ is the unique symmetric equilibrium in the game $G$. 

We first claim that the algorithm $A_{f^*}$ is a symmetric pure strategy equilibrium in $G$. To see this, notice that if a user in $G$ were to deviate to $A_{f'} \in \AAllContinuous$ for some $f'$ and achieve higher utility, then it would also be better for a user in the game in \cite{BH98} to deviate to $f'$ when the prior is $\mathbb{P}_{X \sim \DPrior_1}[X = 1]$. This is not possible since $f$ is an equilibrium in $G$, so $A_{f^*}$ is a symmetric pure strategy equilibrium in $G$. This establishes that an equilibrium exists in $G$. 

We next claim that if $A_f$ is a symmetric pure strategy equilibrium in $G$, then $f(c) = f^*(c)$ for all $c \in (0,1)$. Let $S = \left\{c \in (0,1) \mid f(c) \neq f^*(c)\right\}$ be the set of values where $f$ and $f^*$ disagree. Assume for sake of contradiction that $S$ has positive measure. 
\begin{enumerate}
    \item The first possibility is that only a measure zero set of $c \in S$ are realizable when all users play $f$. However, this is not possible: because of background information and because the prior is initialized in $(0,1)$, the posterior eventually converges to $0$ to $1$ (but never reaches at any finite time). This means that every posterior in $(0,1)$ is realizable. 
    \item The other possibility is that a positive measure of values in $S$ are realizable all users play $f$. Let the solution $f^{**}$ be defined by $f^{**}(c) = f(c)$ for all $c$ that are realizable in $f$ and $f^{**} = f^*(c)$ otherwise. It is easy to see that $f^{**}$ would be an equilibrium in the strategic experimentation game, which is a contradiction because of uniqueness of equilibria in this game. 
\end{enumerate}
\noindent This implies that $f$ and $f^*$ agree on all but a measure $0$ set of $[0, 1]$. 

This proves that there is a unique symmetric equilibrium in the game $G$. 

We next prove that the utility achieved by this symmetric equilibrium is strictly in between the single-user optimal and the global-optimal. This follows from the fact that the utility of $A_{f^*}$ in $G$ is equal to the utility of $f^*$ in the strategic experimentation game, coupled with  Theorem \ref{thm:restatementundiscounted}. 

\end{proof}

\subsection{Proof of Theorem \ref{thm:BH}}\label{appendix:proofBH}
\begin{proof}[Proof of Theorem \ref{thm:BH}]
The proof of Theorem \ref{thm:BH} follows similarly to the proof of Theorem \ref{thm:BHundiscounted}. Like in the proof of Theorem \ref{thm:BHundiscounted}, it thus suffices to show that the equilibrium utility is a unique value $\alpha^* \in [\max_{A' \in \mathcal{A}} R_{A'}(1), \max_{A' \in \mathcal{A}} R_{A'}(N))$ at every symmetric pure strategy equilibrium in $G$. 

To analyze the game $G$, we leverage the results in the literature on strategic experimentation (see Theorem \ref{thm:restatement}). Again, the equilibrium concepts are slightly different because $f$ is a symmetric equilibrium in the strategic experimentation game if for \textit{any} choice of prior, $f$ is optimal for each user in the game with that choice of prior; on the other hand, $f$ is a symmetric equilibrium in $G$ if for the specific choice of prior of the bandit setup, $f$ is optimal for each user in the game with that choice of prior. Let $f^*$ be the unique symmetric pure strategy equilibrium in the strategic experimentation game (see property (1) in Theorem \ref{thm:restatement}). 

We first claim that the algorithm $A_{f^*}$ is a symmetric pure strategy equilibrium in $G$. To see this, notice that if a user in $G$ were to deviate to $A_{f'} \in \AAll$ for some $f'$ and achieve higher utility, then it would also be better for a user in the strategic experimentation game to deviate to $f'$ when the prior is $\mathbb{P}_{X \sim \DPrior_1}[X = 1]$. This is not possible since $f$ is an equilibrium in $G$, so $A_{f^*}$ is a symmetric pure strategy equilibrium in $G$. This establishes that an equilibrium exists in $G$. 

We next claim that if $A_f$ is a symmetric pure strategy equilibrium in $G$, then $f(c) = f^*(c)$ for all $c > c_{low}$, where $c_{low}$ is defined according to Theorem \ref{thm:restatement}. Note that by the assumption in Setup \ref{setup:continuoustimeundiscounted}, the prior is initialized above $s$, which means that it is initialized above $c_{low}$. Let $S = \left\{c > c_{low} \mid f(c) \neq f^*(c)\right\}$ be the set of values where $f$ and $f^*$ disagree. Assume for sake of contradiction that $S$ has positive measure. 
\begin{enumerate}
    \item The first possibility is that only a measure $0$ set of $c \in S$ are realizable when all users play $f$ and the prior is initialized to $\mathbb{P}_{X \sim \DPrior_1}[X = 1]$. However, this is not possible because then the trajectories of $f$ and $f^*$ would be identical so the same values would be realized for $f$ and $f^*$, but we already know that all of $c > c_{low}$ is realizable for $f^*$.
    \item The other possibility is that a positive measure of values in $S$ are realizable when all users play $f$ and the prior is initialized to $\mathbb{P}_{X \sim \DPrior_1}[X = 1]$. Let the solution $f^{**}$ be defined by $f^{**}(c) = f(c)$ for all $c$ that are realizable in $f$ and $f^{**} = f^*(c)$ otherwise. It is easy to see that $f^*$ would be an equilibrium in the game in \cite{BH98}, which is a contradiction because of uniqueness of equilibria in this game. 
\end{enumerate}
\noindent This implies that $f$ and $f^*$ agree on all but a measure $0$ set of $[c_{low}, 1]$. 

Now, let us relate the utility achieved by a symmetric equilibrium in $G$ to the utility achieved in the strategic experimentation game. 
Since the prior is initialized above $c_{high}$, property (3) in Theorem \ref{thm:restatement} tells us that the posterior never reaches $c_{low}$ but can come arbitrarily close to $c_{low}$. This in particular means if $f(c) = f^*(c)$ for all $c > c_{low}$, then the utility achieved when all users choose a strategy $f$ where is equivalent to the utility achieved when all users choose $f^*$.

Thus, it follows from property (4) in Theorem \ref{thm:restatement} that when the prior is initialized sufficiently high, the equilibrium utility is strictly smaller than the utility that users achieve in the team problem (which is equal to the global optimal utility $\max_{A' \in \mathcal{A}} R_{A'}(N)$). Moreover, it also follows from property (4) in Theorem \ref{thm:restatement} that the equilibrium utility is always at least as large as the utility $\max_{A' \in \mathcal{A}} R_{A'}(1)$ that users achieve in the single user game. 

\end{proof}

\subsection{Proof of Theorem \ref{thm:utilitysharedinfo}}\label{appendix:utilityshared}

To prove Theorem \ref{thm:utilitysharedinfo}, the key ingredient is the following fact about the game $G$.
\begin{lemma}
\label{lemma:utilitygame}
Suppose that every algorithm $A \in \mathcal{A}$ is side information monotonic. If $A$ is a symmetric pure strategy equilibrium of $G$, then the equilibrium utility at $G$ is at least $\max_{A'} R_{A'}(1)$. 
\end{lemma}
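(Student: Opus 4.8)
The plan is to obtain the bound directly from the equilibrium condition in the game $G$ together with side information monotonicity, applied to a single deviation towards a one-user-optimal algorithm. Recall from Lemma~\ref{lemma:utilitysharedinfo} and the surrounding discussion that when all $N$ players of $G$ play $A$, each player's utility equals $R_A(N)$ exactly; this is the quantity we must lower bound by $\max_{A' \in \mathcal{A}} R_{A'}(1)$.

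First I would fix $A^* \in \argmax_{A' \in \mathcal{A}} R_{A'}(1)$ (if the supremum were not attained one runs the same argument along a maximizing sequence and passes to the limit). Next, consider the deviation in which player $1$ switches from $A$ to $A^*$ while players $2,\dots,N$ keep playing $A$. In $G$ the information state is common and is updated by the draws of all $N$ players, so this configuration is precisely the one measured by $\UShared(1;\mathbf{2}_{N-1},A^*,A)$: the reference user runs $A^*$, the remaining $N-1$ users run $A$, and all of them feed the single shared repository. Since $A$ is a symmetric pure strategy equilibrium of $G$, player $1$ cannot gain from this deviation, which gives
\[
R_A(N) \;\ge\; \UShared\!\left(1;\mathbf{2}_{N-1},A^*,A\right).
\]
Finally, because $A^* \in \mathcal{A}$ is side information monotonic (Assumption~\ref{assumption:IM}), applying its defining inequality with $n = N-1$ and with $f = A$ (which is a legitimate measurable map from information states to distributions over $[k]$, since every algorithm in $\mathcal{A}$ is such a map) yields $\UShared(1;\mathbf{2}_{N-1},A^*,A) \ge R_{A^*}(1) = \max_{A' \in \mathcal{A}} R_{A'}(1)$. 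Chaining the two displayed inequalities proves the claim.

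I do not expect a serious obstacle here; the only care needed is bookkeeping, namely checking that the deviation configuration in $G$ really matches the left-hand side of the side-information-monotonicity inequality (the correct algorithm sits in each platform slot, the count is $N-1$, and in the shared setting a user alone on one platform still benefits from every other user's posterior updates). This lemma then combines with Lemma~\ref{lemma:utilitysharedinfo} (to pass from symmetric platform equilibria $(A,A)$ to symmetric equilibria of $G$ and to identify $Q(A,A)$ with $R_A(N)$) and with the trivial upper bound $R_A(N) \le \max_{A' \in \mathcal{A}} R_{A'}(N)$ to give Theorem~\ref{thm:utilitysharedinfo}.
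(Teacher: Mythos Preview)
Your proposal is correct and follows essentially the same approach as the paper: pick $A^* \in \argmax_{A'} R_{A'}(1)$, use the equilibrium best-response condition to get $R_A(N) \ge \UShared(1;\mathbf{2}_{N-1},A^*,A)$, and then invoke side information monotonicity of $A^*$ to bound the latter below by $R_{A^*}(1)$. The extra care you note (matching the deviation configuration to the side-information-monotonicity inequality, and the connection to Theorem~\ref{thm:utilitysharedinfo}) is consistent with the paper's argument.
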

\begin{proof}[Proof of Lemma \ref{lemma:utilitygame}]
Let $A$ be a symmetric pure strategy equilibrium in $G$. To see this, notice that an user can always instead play $A^* = \argmax_{A'} R_{A'}(1)$. Since $A$ is a best response for this user, it suffices to show that playing $A^*$ results in utility at least $\max_{A'} R_{A'}(1)$. By definition, the utility that the user would receive from playing $A^*$ is $\UShared(1; \mathbf{2}_{N-1}, A^*, A)$. By side information monotonicity, we know that the presence of the background posterior updates by other users cannot reduce this user's utility, and in particular 
\[\UShared(1; \mathbf{2}_{N-1}, A^*, A) \ge R_{A^*}(1) = \max_{A'} R_{A'}(1)\] as desired. 
\end{proof}

Now we can prove Theorem \ref{thm:utilitysharedinfo} from Lemma \ref{lemma:utilitysharedinfo} and Lemma \ref{lemma:utilitygame}. 
\begin{proof}[Proof of Theorem \ref{thm:utilitysharedinfo}]
By Lemma \ref{lemma:utilitysharedinfo}, if the solution $(A,A)$ is an equilibrium for the platforms, then it is a symmetric pure strategy equilibrium of the game $G$. To lower bound $Q(A, A)$, notice that the quality level $Q(A,A)$ is equal to the utility of $A$ in the game $G$. By Lemma \ref{lemma:utilitygame}, this utility is at least $\max_{A'} R_{A'}(1)$, so $Q(A, A) \ge \max_{A'} R_{A'}(1)$ as desired. To upper bound $Q(A, A)$, notice that 
\[Q(A, A) = R_{A}(N) \le  \max_{A' \in \mathcal{A}}
 R_{A'}(N),\]
as desired.
 \end{proof}

\section{Proofs for Section \ref{sec:assumptions}}\label{appendix:assumptions}
We prove Lemmas \ref{lemma:undiscountedIM}-\ref{lemma:UR}.
\subsection{Proof of Lemma \ref{lemma:undiscountedIM}}

 \begin{proof}[Proof of Lemma \ref{lemma:undiscountedIM}]
 First, we show strict information monotonicity. Applying Lemma \ref{lemma:closedform}, we see that 
 \[R_A(n) = K(p; A, \ldots, A) = \int G(p,q) \frac{(1-A(q)) s + (q h + (1-q) l) A(q) }{\frac{\sigma^2}{\sigma_b^2} + n A(q)} dq. \]
 Note that the value $(1-A(q)) s + (q h + (1-q) l) A(q)$ is always \textit{negative} based on how we set up the rewards. We see that as $N$ increases, the denominator $N A(q)$ weakly increases at every value of $q$. This means that $R_A(n)$ is weakly increasing in $n$. To see that $R_A(n)$ \textit{strictly} increases in $n$, we note that there exists an open neighborhood around $q = 1$ where $A(q) > 0$. In this open neighborhood, we see that $n A(q)$ strictly increases in $n$, which means that $R_A(n)$ strictly increases.

 Next, we show side information monotonicity. Applying Lemma \ref{lemma:closedform}, we see that 
 \[\UShared(1; \mathbf{2}_{n}, A, A') = K(p; A, A', \ldots, A') = \int G(p,q) \frac{(1-A(q)) s + (q h + (1-q) l) A(q) }{\frac{\sigma^2}{\sigma_b^2} + A(q) + n A'(q)} dq. \] This expression is weakly larger for $n > 0$ than $n = 0$.
 
 \end{proof}
\subsection{Proof of Lemma \ref{lemma:TS}}

To prove Lemma \ref{lemma:TS}, the key technical ingredient is that if the posterior becomes more informative, then the reward increases.
\begin{lemma}[Impact of Increased Informativeness]
\label{lemma:increasedinfo}
Consider the bandit setup in Setup \ref{setup:discreteriskysafe}. Let $0 < p_{\text{prior}} < 1$ be the prior probability, and let $p'$ be the random variable for the posterior if another sample from the risky arm is observed. Let $K(p)$ denote the expected cumulative discounted reward that a user receives when they are the only user participating on a platform offering the $\epsilon$-Thompson sampling algorithm $A_{f_{\epsilon}^{TS}}$. Then the following holds: 
\[
\mathbb{E}_{p'}[K(p')] > K(p_{\text{prior}}).\]
\end{lemma}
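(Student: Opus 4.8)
The plan is to exploit the convexity of the value function $K$ as a function of the belief $p$, combined with the martingale property of Bayesian posterior updates. First I would set up the dynamic-programming (Bellman) recursion for $K$: since $A_{f^{TS}_\epsilon}$ pulls the risky arm with probability $f^{TS}_\epsilon(p) = \epsilon + (1-\epsilon)p$, the value $K(p)$ over the finite horizon $T$ decomposes into the immediate expected reward $f^{TS}_\epsilon(p)\cdot\big(ph+(1-p)l\big) + \big(1-f^{TS}_\epsilon(p)\big) s$ plus the discounted continuation value $\beta\,\mathbb{E}_{p'}[K_{T-1}(p')]$, where $p'$ is the (random) posterior after observing the pulled arm's noisy reward and $K_{T-1}$ is the value with one fewer step. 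Because observing the safe arm is uninformative (its reward is known), the only belief movement comes from the event that the risky arm is pulled; the posterior after observing the risky arm is a martingale, $\mathbb{E}[p' \mid p] = p$ on that event.

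The key structural claim I would establish by backward induction on the horizon is that $K_t(p)$ is convex in $p\in[0,1]$ for every $t$. The base case $K_0 \equiv 0$ (or the one-step reward) is convex — indeed the immediate-reward term is a maximum/affine-type expression in $p$ once we note $f^{TS}_\epsilon$ is affine, so $f^{TS}_\epsilon(p)(ph+(1-p)l)+(1-f^{TS}_\epsilon(p))s$ is a quadratic with nonnegative leading coefficient $(1-\epsilon)(h-l)>0$, hence convex. For the inductive step, convexity of $K_{t-1}$ is preserved under the posterior-averaging operator $p \mapsto \mathbb{E}_{p'}[K_{t-1}(p')]$: this is a standard fact, since the map sending a prior to the distribution of its Bayesian posterior is a mean-preserving spread, and composing a convex function with a mean-preserving spread and then taking expectations yields a convex function of the prior (one checks the Blackwell/Jensen-type inequality pointwise in the mixing parameter). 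Adding the convex immediate-reward term and multiplying by $\beta>0$ keeps convexity. Hence $K=K_T$ is convex.

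Given convexity of $K$, the lemma follows from Jensen's inequality: $\mathbb{E}_{p'}[K(p')] \ge K(\mathbb{E}[p']) = K(p_{\text{prior}})$, using the martingale property $\mathbb{E}[p'] = p_{\text{prior}}$. To upgrade the weak inequality to the strict inequality claimed, I would argue that $K$ is \emph{strictly} convex on a neighborhood relevant to the support of $p'$, and that $p'$ is genuinely non-degenerate. Strictness of convexity comes from the immediate-reward quadratic term having strictly positive curvature $(1-\epsilon)(h-l)^2>0$ whenever $\epsilon<1$ (and when $\epsilon = 1$, pure uniform exploration, one still gets strictness from a later stage where the belief affects the quadratic term, or one handles that edge case separately — note Setup~\ref{setup:discreteriskysafe} allows $\epsilon\in[0,1]$, so I should be careful here). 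Non-degeneracy of $p'$ holds because $0<p_{\text{prior}}<1$ and the Gaussian noise $N(r_i,\sigma^2)$ with $l\ne h$ gives a likelihood ratio that is not a.s. constant, so the posterior strictly moves with positive probability. Combining: $\mathbb{E}_{p'}[K(p')] > K(p_{\text{prior}})$.

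The main obstacle I anticipate is the strictness argument rather than the weak inequality — specifically, making sure that the convexity is strict precisely where $p'$ puts mass, and cleanly handling the boundary value $\epsilon=1$ where the immediate term's curvature vanishes in $f^{TS}_\epsilon$ but reappears through $ph+(1-p)l$. A secondary technical point is verifying carefully that the posterior-averaging operator preserves (and does not destroy) convexity in the discrete-time Bayesian-Gaussian updating model; this is standard but worth stating as an auxiliary lemma. Once Lemma~\ref{lemma:increasedinfo} is in hand, Lemma~\ref{lemma:TS} should follow by noting that an additional user contributing a risky-arm pull is exactly an extra informativeness increment of the shared posterior at each step, so strict/side information monotonicity reduce to iterated application of this monotonicity-under-informativeness fact.
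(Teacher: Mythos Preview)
Your high-level strategy---prove that $K$ is convex in the belief $p$, then apply Jensen with the martingale property $\mathbb{E}[p']=p_{\text{prior}}$---is natural, but the inductive step for convexity does not go through as written. The ``standard fact'' that posterior-averaging preserves convexity is valid only when the signal structure is \emph{fixed}, i.e., independent of the prior. In your recursion the continuation value is
\[
p \;\longmapsto\; f_\epsilon^{TS}(p)\,\mathbb{E}_{\text{risky}}[K_{t-1}(p')] \;+\; \bigl(1-f_\epsilon^{TS}(p)\bigr)\,K_{t-1}(p),
\]
and the mixing weight $f_\epsilon^{TS}(p)=\epsilon+(1-\epsilon)p$ depends on $p$: which ``experiment'' the user sees is itself a function of the belief. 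This is not a Blackwell experiment, and an affine-times-convex combination need not be convex. Concretely, take $\epsilon=0$ and the perfectly informative limit (so $p'\in\{0,1\}$), and let $r(p)=p(ph+(1-p)l)+(1-p)s$ be the one-step reward. Then the continuation term equals
\[
C(p)=p\bigl(ph+(1-p)s\bigr)+(1-p)\,r(p),
\]
and a direct computation gives $C''(p)=(4-6p)(h-l)$, which is negative for $p>2/3$. So the continuation term is \emph{not} convex even though $K_{t-1}=r$ is a strictly convex quadratic; your decomposition ``convex immediate reward $+$ convex continuation'' breaks at the second summand. (In this particular example the sum $r+\beta C$ happens to remain convex because $r''$ is large enough to compensate, but that is not what your argument proves, and it is unclear whether such compensation persists for general $t$, $\beta$, and noise levels.)

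The paper avoids this obstacle by not trying to establish convexity of $K_t$ at all. Instead it takes as the inductive hypothesis the target inequality itself, $\mathbb{E}_{p'}[K_{t+1}(p')]>K_{t+1}(p)$ for the \emph{fixed} experiment ``one risky-arm observation,'' and compares two coupled instances (one with an extra risky sample before time $t$, one without). The current-payoff comparison uses only strict convexity of the one-step reward; the continuation-payoff comparison applies the inductive hypothesis twice---once to replace $\mathbb{E}_{s_2}[K_{t+1}(p_2)]$ by $K_{t+1}(p_1)$ inside the risky branch, and once to bound $\mathbb{E}_{s_1}[K_{t+1}(p_1)]$ below by the mixture $(1-f(p))K_{t+1}(p)+f(p)\mathbb{E}_{s_1}[K_{t+1}(p_1)]$. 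No global convexity of $K_t$ is needed. If you want to rescue your route, you would need a separate argument that the $p$-dependent mixing does not destroy convexity (e.g., by tracking $K_t''$ explicitly and showing the immediate-reward curvature always dominates), which is substantially more work than the direct induction.
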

\begin{proof}
Notice that $f_{\epsilon}^{TS}(p) = \epsilon + (1-\epsilon)p$. For notational simplicity, let $f = f_{\epsilon}^{TS}$.

Let $K_t$ denote the cumulative discounted reward for time steps $t$ to $T$ received by the user. We proceed by backwards induction on $t$ that for any $0 < p < 1$ it holds that: 
\[\mathbb{E}[K_t(p')] > K_t(p)\] where $p$ is the posterior at the beginning of time step $t$ and where $p'$ be the random variable for the posterior if another sample from the risky arm is observed before the start of time step $t$. 

The base case is $t = T$, where we see that the reward is 
\[s (1- f(q)) + (q h + (1-q) l) (f(q)) = s(1-f(q)) + (q(h-l) + l)f(q)\]
if the posterior is $q$. We see that this is a strictly convex function in $q$ for our choice of algorithm $A_f$, which means that $\mathbb{E}[K_T(p'; n)] > K_T(p; n)$ as desired. 

We now assume that the statement holds for $t+1$ and we prove it for $t$. For the purposes of our analysis, we generate a series of $2$ samples $s_1, s_2$ from the risky arm. We generate these samples recursively. Let $p_0 = p$, and let $p_1$ denote the posterior given by $p$ conditioned on $s_1$, and let $p_2$ denote the posterior conditioned on $s_1$ and $s_2$. The sample $s_{i+1}$ is drawn from a noisy observed reward for $h$ with probability $p_i$ and a noisy observed reward for $l$ with probability $p_i$. We assume that the algorithms use these samples (in order). 

Our goal is to compare two instances: Instance 1 is initialized at $p$ at time step $t$ and Instance 2 is given a sample $s_1$ before time step $t$.
The reward of an algorithm can be decomposed into two terms: the \textit{current payoff} and the \textit{continuation payoff} for remaining time steps. 

The current payoff for Instance 1 is:
\[s(1-f(p)) + (p(h-l) + l)f(p)  \]
and the current payoff for Instance 2 is:
\[\mathbb{E}[s(1-f(p_1)) + (p_1(h-l) + l)f(p_1)]. \]
Since this is a strictly convex function of the posterior, and $p_1$ is a posterior update of $p$ with the risky arm, we see that the expected current payoff for Instance 1 is strictly larger than the xpected  current payoff for Instance 2. 

The expected continuation payoff for Instance 1 is equal to the $\beta$-discounted version of: \[(1-f(p)) \cdot K_{t+1}(p) + f(p) \mathbb{E}_{s_1}[K_{t+1}(p_1)]\] and the expected continuation payoff for Instance 2 is equal to the $\beta$-discounted version of: 
\begin{align*}
   \mathbb{E}_{s_1} [f(p_1) \cdot \mathbb{E}_{s_2} [K_{t+1}(p_2)]] + (1-(f(p_1)) [K_{t+1}(p_1)]] &_{(1)} \ge  \mathbb{E}_{s_1} [f(p_1) \cdot K_{t+1}(p_1) + (1-(A(p_1)) [K_{t+1}(p_1)]] \\
   &= \mathbb{E}_{s_1} [K_{t+1}(p_1)] \\
   &\ge_{(2)} (1-f(p)) \cdot K_{t+1}(p) + f(p) \mathbb{E}_{s_1}[K_{t+1}(p_1)]
\end{align*}
where $(1)$ and $(2)$ use the induction hypothesis for $t+1$. 

This proves the desired statement.

\end{proof}

We prove Lemma \ref{lemma:TS}. 
\begin{proof}[Proof of Lemma \ref{lemma:TS}]
We first show strict information monotonicity and then we show side information monotonicity. Our key technical ingredient is Lemma \ref{lemma:increasedinfo}.

\paragraph{Proof of strict information monotonicity.} We show that for any $n \ge 1$, it holds that $R_A(n+1) > R_A(n)$. To show this, we construct a sequence of $T+1$ ``hybrids''. In particular, for $0 \le t \le T$, let the $t$th hybrid correspond to the bandit instance where $2$ users participate in the algorithm in the first $t$ time steps and $1$ users participate in the algorithm in the remaining time steps. These hybrids enable us to isolate and analyze the gain of one additional observed sample. 

For each $2 \le t \le T$, it suffices to show that the $t-1$th hybrid incurs larger cumulative discounted reward than the $t-2$th hybrid. Notice a user participating in both of these hybrids at all time steps achieves the same expected reward in the first $t-1$ time steps for these two hybrids. The first time step where the two hybrids deviate in expectation is the $t$th time step (after the additional information from the $t-1$th hybrid has propagated into the information state). Thus, it suffices to show that $t-1$th hybrid incurs larger cumulative discounted reward than the $t-2$th hybrid between time steps $t$ and $T$. Let $\mathcal{H}$ be the history of actions and observed rewards from the first $t-2$ time steps, the arm and the reward for the 1st user at the $t-1$th time step. We condition on $\mathcal{H}$ for this analysis. Let $a$ denote the arm chosen by the $2$nd user at the $t-1$th time step in the $t-1$th hybrid. We split into cases based on $a$. 

The first case is that $a$ is the safe arm. In this case, the hybrids have the same expected reward for time steps $t$ to $T$. 

The second case is $a$ is the risky arm. This happens with nonzero probability given $\mathcal{H}$ based on the structure of $\epsilon$-Thompson sampling and based on the structure of the risky-safe arm problem. We condition on $\mathcal{H}' = \mathcal{H} \cup \left\{a = \text{risky}\right\}$ for the remainder of the analysis and show that the $t-1$th hybrid achieves strictly higher reward than the $t$th hybrid. 

To show this, let $K_t(q)$ denote the cumulative discounted reward incurred from time step $t$ to time step $T$ if the posterior at the start of time step $t$ is $q$. Let $p$ be the posterior given by conditioning on $\mathcal{H}$ (which is the same as conditioning on $\mathcal{H}'$). Let $p'$ denote the distribution over posteriors given by updating with an additional sample from the risky arm. Showing that he $t-1$th hybrid achieves strictly higher reward than the $t$th hybrid reduces to showing that $\mathbb{E}[K_t(p')] > K_t(p)$. Since the discounting is geometric, this is equivalent to showing that $\mathbb{E}[K_1(p')] > K_1(p)$, which follows from Lemma \ref{lemma:increasedinfo} as desired.

\paragraph{Proof of side information monotonicity.} We apply the same high-level approach as in the proof of  strict information monotonicity, but construct a different set of hybrids. For $0 \le t \le T$, let the $t$th hybrid correspond to the bandit instance where both the user who plays algorithm $A$ and the other user also updates the shared information state with the algorithm $A'$ in the first $t$ time steps and only the single user playing $A$ updates the information state in the remaining time steps. 

For each $2 \le t \le T$, it suffices to show that the $t-1$th hybrid incurs larger cumulative discounted reward than the $t-2$th hybrid. Notice a user participating in both of these hybrids at all time steps achieves the same expected reward in the first $t-1$ time steps for these two hybrids. The first time step where the two hybrids deviate in expectation is the $t$th time step (after the additional information from the $t-1$th hybrid has propagated into the information state). Thus, it suffices to show that $t-1$th hybrid incurs larger cumulative discounted reward than the $t-2$th hybrid between time steps $t$ and $T$. Let $\mathcal{H}$ be the history of actions and observed rewards from the first $t-2$ time steps, the arm and the reward for the 1st user at the $t-1$th time step. We condition on $\mathcal{H}$ for this analysis. Let $a$ denote the arm chosen by the $2$nd user at the $t-1$th time step in the $t-1$th hybrid. We split into cases based on $a$. 

The first case is that $a$ is the safe arm. In this case, the hybrids have the same expected reward for time steps $t$ to $T$. 

The second case is $a$ is the risky arm. If this happens with zero probability conditioned on $\mathcal{H}$, we are done. Otherwise, we condition on $\mathcal{H}' = \mathcal{H} \cup \left\{a = \text{risky}\right\}$ for the remainder of the analysis and show that the $t-1$th hybrid achieves higher reward than the $t$th hybrid. 

To show this, let $K_t(q)$ denote the cumulative discounted reward incurred from time step $t$ to time step $T$ if the posterior at the start of time step $t$ is $q$. Let $p$ be the posterior given by conditioning on $\mathcal{H}$ (which is the same as conditioning on $\mathcal{H}'$). Let $p'$ denote the distribution over posteriors given by updating with an additional sample from the risky arm. Showing that he $t-1$th hybrid achieves strictly higher reward than the $t$th hybrid reduces to showing that $\mathbb{E}[K_t(p')] > K_t(p)$. Since the discounting is geometric, this is equivalent to showing that $\mathbb{E}[K_1(p')] > K_1(p)$, which follows from Lemma \ref{lemma:increasedinfo} as desired.

\end{proof}

\subsection{Proof of Lemma \ref{lemma:undiscountedUR}}

We prove Lemma \ref{lemma:undiscountedUR}.
\begin{proof}[Proof of Lemma \ref{lemma:undiscountedUR}]
Applying Lemma \ref{lemma:closedform}, we see that:
 \[R_A(n) = K(p; A, \ldots, A) = \int G(p,q) \frac{(1-A(q)) s + (q h + (1-q) l) A(q) }{\frac{\sigma^2}{\sigma_b^2} + n A(q)} dq. \]
 
It follows immediately that the set $\left\{R_A(N) \mid A \in \AAllContinuous\right\}$ is connected. To see that the supremum is achieved, we use the characterization in \citet{BHSimple} of the team optimal as a cutoff strategy within $\AAllContinuous$. It thus suffices to show that there exists $A'$ such that $R_{A'}(N) \le \max_{A \in \mathcal{A}} R_A(1)$. To see this, let $f_{1-\epsilon}$ be the cutoff strategy where $f_{1-\epsilon}(p) = 1$ for $p \ge 1 - \epsilon$  and $f_{1-\epsilon}(p) = 0$ for $p < 1-\epsilon$. We see that there exists $\epsilon > 0$ such that $R_{A_{f_{1-\epsilon}}}(N) \le \max_{A \in \mathcal{A}} R_A(1)$. 

\end{proof}

\subsection{Proof of Lemma \ref{lemma:UR}}

We prove Lemma \ref{lemma:UR}.
\begin{proof}[Proof of Lemma \ref{lemma:UR}]

First, we show that there exists $A' \in \mathcal{A}$ such that $R_{A'}(N) \le \max_{A \in \mathcal{A}} R_A(1)$. To see this, notice that:
\[R_{A_1}(N) = R_{A_1}(1) \le \max_{A \in \mathcal{A}_{closure}} R_A(1) \] since the reward of uniform exploration is independent of the number of other users. 

It now suffices to show that the set $\left\{R_{A_\epsilon}(N) \mid \epsilon \in [0,1] \right\}$ is closed for every $A \in \mathcal{A}$. 

To analyze the expected $\beta$-discounted utility of an algorithm, it is convenient to formulate it in terms of the sequences of actions and rewards observed by the algorithm. Let the realized history denote the sequence $\mathcal{H} = (a^1_1, o^1_1), \ldots, (a^N_1, o^N_1), \ldots, (a^1_T, o^1_T), \ldots, (a^N_T, o^N_T)$ of (pulled arm, observed reward) pairs observed at each time step. An algorithm $A'$ induces a distribution $\mathcal{D}_{A'}$ over realized histories (that depends on the prior distributions $\DPrior_i$). If we let 
\[f((a^1_1, o^1_1), \ldots, (a^N_1, o^N_1), \ldots, (a^1_T, o^1_T), \ldots, (a^N_T, o^N_T)) := \sum_{t=1}^T r_{a^1_t} \beta^t,\] then the expected $\beta$-discounted cumulative reward of an algorithm $A'$ is: 
\begin{align*}
 R_A(N) &= \mathbb{E}_{(a^1_1, o^1_1), \ldots, (a^N_1, o^N_1), \ldots, (a^1_T, o^1_T), \ldots, (a^N_T, o^N_T) \sim \mathcal{D}_{A'}} \left[\sum_{t=1}^T r_{a^1_t} \beta^t \right] \\
 &= \mathbb{E}_{\mathcal{H} \sim \mathcal{D}_{A'}} \left[f(\mathcal{H}) \right].
\end{align*}
Since the mean rewards are bounded, we see that: 
\[f((a^1_1, o^1_1), \ldots, (a^N_1, o^N_1), \ldots, (a^1_T, o^1_T), \ldots, (a^N_T, o^N_T))\in \left[\min_{1 \le i \le k} r_i, \max_{1 \le i \le k} r_i\right].\]

Now, notice that the total-variation distance 
\[TV(\mathcal{D}_{A_{\epsilon_1}}, \mathcal{D}_{A_{\epsilon_2}}) \le 1 - (\epsilon_1-\epsilon_2)^{NT}\] because with $(\epsilon_1-\epsilon_2)^{NT}$ probability, $A_{\epsilon_1}$ behaves identically to $A_{\epsilon_2}$. We can thus conclude that:
\[|R_{A_{\epsilon_1}}(N) - R_{A_{\epsilon_2}}(N)| \le \left|\max_{1 \le i \le k} r_i - \min_{1 \le i \le k} r_i\right| TV(\mathcal{D}_{A_{\epsilon_1}}, \mathcal{D}_{A_{\epsilon_2}}) \le \left|\max_{1 \le i \le k} r_i - \min_{1 \le i \le k} r_i\right|  \left(1 - (\epsilon_1-\epsilon_2)^{NT}\right).\]
This proves that $R_{A_\epsilon}$ changes continuously in $\epsilon$ which proves the desired statement. 
\end{proof}

\end{document}